\numberwithin{equation}{section}
\def\currentvolume{}
 \def\currentissue{}
  \def\currentyear{}
   \def\currentmonth{}
    \def\ppages{X--XX}
     \def\DOI{}
\newtheorem{theorem}{Theorem}[section]
\newtheorem{lemma}[theorem]{Lemma}
\newtheorem{proposition}{Proposition}
\theoremstyle{definition}
\newtheorem{definition}[theorem]{Definition}
\newtheorem{remark}{Remark}
\title[]
   {Polytropic gas modelling at   kinetic and macroscopic levels}
\author[Vladimir Djordji\'c, Milana Pavi\'c-\v Coli\'{c} and Nikola Spasojevi\'c]{}
 \email{Djordjic@acom.rwth-aachen.de}
 \email{milana.pavic@dmi.uns.ac.rs}
 \email{nikola@utexas.edu}
\newcommand{\changelocaltocdepth}[1]{%
	\addtocontents{toc}{\protect\setcounter{tocdepth}{#1}}%
	\setcounter{tocdepth}{#1}%
}
\newcommand{\nocontentsline}[3]{}
\newcommand{\tocless}[2]{\bgroup\let\addcontentsline=\nocontentsline#1{#2}\egroup}
\begin{document}
\maketitle


\centerline{\scshape Vladimir Djordji\'c}

\medskip

{\footnotesize
	\centerline{Applied and Computational Mathematics}
	\centerline{RWTH Aachen University}
	\centerline{Schinkelstr. 2, 52062 Aachen, Germany}
}

\medskip

{\footnotesize
  \centerline{Department of Mathematics and Informatics}
  \centerline{Faculty of Sciences, University of Novi Sad}
 \centerline{Trg Dositeja Obradovi\'ca 4, 21000 Novi Sad, Serbia}
}

\medskip

\centerline{\scshape 
	Milana Pavi\'c-\v Coli\'{c}}

\medskip

{\footnotesize
	\centerline{Department of Mathematics and Informatics}
	\centerline{Faculty of Sciences, University of Novi Sad}
	\centerline{Trg Dositeja Obradovi\'ca 4, 21000 Novi Sad, Serbia}
}

\medskip

\centerline{\scshape
	Nikola Spasojevi\'c}

\medskip

{\footnotesize
	\centerline{Oden Institute for Computational Engineering and Sciences}
	\centerline{University of Texas at Austin} 
	\centerline{204 E 24th St, Austin TX 78712, USA }
}
\bigskip


\bigskip


\begin{abstract}
In this paper, we  consider the  kinetic model  of continuous type describing a polyatomic gas in two different settings corresponding to a different choice of the functional space used to define macroscopic quantities. Such a model introduces a single continuous variable supposed to capture all  the phenomena related to the more complex structure of a  molecule having more than one atom, such as internal degrees of freedom in a collision. In particular, we provide a direct comparison of these two settings, and show their equivalence  after  the distribution function is rescaled and    the cross section is reformulated. We then focus on the kinetic model for which the rigorous existence and uniqueness result in the space homogeneous case is recently proven.  Using the cross section proposed in that analysis together with the maximum entropy principle, we establish macroscopic models of six and fourteen fields. In the case of six moments, we calculate the exact, nonlinear, production term  and prove its total agreement with extended thermodynamics, as it satisfies the entropy residual inequality on the whole range of model validity. Moreover, for the fourteen moments model, we provide new expressions for relaxation times and transport coefficients in a linearized setting, that yield both matching with the experimental data for  dependence of the shear viscosity upon temperature and a satisfactory  agreement with the theoretical value   of the Prandtl number, on the room temperature range when only translational and rotational modes of molecules are taken into account, as much as on higher temperatures when vibrational modes appear as well.
\end{abstract}

\medskip





\section{Introduction}

This paper is devoted to  both the kinetic and macroscopic modelling of a polyatomic gas. In these models the core mechanism of particle interactions are  molecular  collisions. During a collision of monatomic gas molecules, only the translational  degrees of freedom occur, yielding that only the kinetic energy appear in the energy conservation law during a collision, and that the state of a gas can be described by the velocity distribution function that solves the Boltzmann equation \cite{Cerc, Sone 1, Sone 2}. The collisions of polyatomic molecules are much more intricate than the ones in  monatomic case, whether elastic or non-elastic, because of the  presence  of internal degrees of freedom -- except classical translation in the physical space ($\mathbb{R}^3$),  rotation or vibration of a polyatomic molecule can occur during the collision process. At the kinetic level, this is reflected on the microscopic energy conservation law during a molecular collision, where apart the usual kinetic energy of molecules it appears microscopic  internal energy as well. On the other hand, for  monatomic gases  all macroscopic quantities are identified as moments of the velocity distribution function, that satisfy  one single hierarchy of the balance laws in the which the flux in one equation becomes the density in the next one \cite{Mul-Rugg}. However, for polyatomic gases, 
the trace of momentum flux is not related to the gas internal energy density anymore, that resulted into two types of moment equations  within extended thermodynamics, namely \emph{momentum} and \emph{energy} like hierarchies \cite{MPC-Rugg-Sim, Rugg-Poly}.  \\

In the context of kinetic theory, the main difference in polyatomic gas modelling comes with the parametrization of microscopic energy collision law.

In the semi-classical approach \cite{W-C, Della, Magin C, Cha-Cow, Gio, Kusto-book, Groppi-Spiga}, only the molecular velocity is parameterized, while the internal energy of molecules takes  discrete values. Moreover, one distribution function is assigned  to each energy level, leading to the system of kinetic equations.  
On the other hand, continuous kinetic models \cite{LD-Bourgat,LD-Toulouse,DesMonSalv} introduce a single continuous variable  which sees internal degrees of freedom as a communicable   internal energy during  collisions. Then both molecular velocities \emph{and} molecular internal energies are parametrized: velocities are obtained by  introducing the classical scattering direction which splits the \emph{pure}   kinetic energy of the colliding particles, while for microscopic internal energy an additional parameter is introduced in order to distribute the proportion of \emph{ pure} internal energy to  each interacting molecule, following the ideas of Borgnakke-Larsen  procedure \cite{Bor-Larsen}.\\ 

The continuous kinetic models incorporate microscopic internal energy to the list of arguments of the distribution function,  which allows to  write a single Boltzmann equation describing a polyatomic gas. The collision operator has the  two key elements that are subject to the modelling: $(i)$ the cross section which encodes microscopic interaction law, and $(ii)$ the weight  function  that aims at recovering a proper energy law at the macroscopic level. Continuous  models precisely differ in the use of functional space as an environment where physical intuition is achieved:  for \cite{LD-Bourgat} physical quantities associated  to the kinetic model (such as gas  density, mean velocity, energy, etc.) are obtained by means of the plain $L^1$ space, and so we refer to this model as the model in the \emph{non-weighted} setting, while in \cite{LD-Toulouse,DesMonSalv}  weighted $L^1$ space arises and so we call this setting \emph{weighted}. Both settings are accurate in the case of \emph{polytropic} (or calorically perfect) gases, when the macroscopic internal energy of the gas is linear with respect to the gas temperature. For \emph{non-polytropic} (or thermally perfect) gases, when this dependence is nonlinear,  kinetic and macroscopic models are rewritten starting from the weighted setting \cite{Rugg-Bisi-6, Rugg-non-poly-14}, but   the weight function depending solely on the microscopic internal energy  remains unknown, and so models are still incomplete.\\

In this paper we restrict to polytropic gases, and perform a direct comparison of these continuous  models in two different settings. We show that they are equivalent, but only after the distribution function is appropriately  rescaled with the weight function \emph{and} the cross section is reformulated. Redefinition of  the cross section,  firstly pointed out in \cite{LD-Bisi}, Remark 1,  removes the  singularity in the collision operator strong form of  the model in the weighted setting, which opens the door to the mathematically rigorous theory, as, for instance, to the existence and uniqueness result   in the case of space homogeneity  \cite{MPC-IG-poly}, followed by the study of  polynomial and exponential moments.  \\

Another aspect of this paper is to build the moment equations starting from the continuous kinetic model describing a polyatomic gas in the non-weighted setting. In the monatomic case, they can be derived by three different approaches:  the Grad’s method \cite{Grad},  the maximum entropy principle \cite{Kogan, Drey, Leve}, and universal principles of  extended thermodynamics \cite{Mul-Rugg}. In the polyatomic case, the continuous  kinetic model in the weighted setting is extensively used as a basis for deriving  macroscopic models  starting from the kinetic theory \cite{Rugg-Poly}. 

This research path starts with the fourteen moments model  firstly introduced in \cite{MPC-Rugg-Sim}, and improved in many ways afterwards \cite{MPC-Sim, Rugg-Arima-Metrelli, MPC-Madj-Sim-Parma}. Kinetic theory provides an insight by calculating the production terms that allows for explicit expressions of relaxation times, which are of phenomenological nature in the macroscopic theories. For instance, this model is suitable for gases with large bulk viscosity \cite{TARS-E, Aoki,  Aoki-2, Str-Ra}.

On the other hand, when shear stresses and heat conduction are neglected, six fields model arise. In this model, the dominant non-equilibrium effect is the dynamic pressure, which is an excess normal pressure added to the standard thermodynamic pressure. The physical motivation for such a study is the fact that the bulk viscosity, and consequently the relaxation time for dynamic pressure is several order of magnitudes greater than the shear viscosity and heat conductivity \cite{Rugger-MEP, TARS-Acta}.   This model is of particular interest, since it is one of the rare systems that admits a non-linear closure of the governing equations using the entropy principle \cite{Rugg-Arima-Sugiyama-6-fields, Rugg-Arima-Sugiyama-6-fields-2, Rugg-Overshoot}.  It also admits the exact solution of the variational problem of maximum entropy principle, as shown in \cite{Rugger-MEP, Rugg-Bisi-6, MPC-Madj-Sim}. In particular, in \cite{MPC-Madj-Sim} the dynamic pressure is not introduced \emph{a priori}, but rather regarded as a measure  of deviation of the system from an equilibrium state, through the analysis  of pressure tensor trace.  The source term is calculated in  \cite{Rugg-Bisi-6} for the discrete energy model, and in  \cite{MPC-Madj-Sim}  for the continuous model in the weighted setting which can be related to the source term  of extended thermodynamics described in \cite{Rugg-Arima-Sugiyama-6-fields} for the  cross section which it is not Galilean invariant, but yields the Galilean invariant production term.\\

In this paper we take an another path by starting with the continuous model in the  non-weighted  setting 
and  build both six and fourteen moments models using the maximum entropy principle.  We are motivated by recent rigorous results from  \cite{MPC-IG-poly}, where a new model for the cross section is proposed. In this paper, we first provide its physical insight by computing the corresponding collision frequency in the equilibrium state. Then we show that it yields production terms which are in total agreement with the macroscopic theory of extended thermodynamics. More precisely, for the six fields model we prove that the residual inequality from  \cite{Rugg-Arima-Sugiyama-6-fields} is satisfied on the whole range of model validity. On the other hand, production terms for the fourteen moments model lead to  new expressions for relaxation times and transport coefficients. Since the cross section contains one parameter,   we will show that this parameter can be adjusted so that  the shear viscosity dependence upon temperature matches with the experimental data given in \cite{Cha-Cow} for room temperature range and in \cite{Exper-2, Exper-1} for high temperatures, and at the same time  recovers a  value of the Prandtl number that coincides at a satisfactory level with its theoretical estimate obtained by means of Eucken's relation. \\

The paper is organized as follows. Section \ref{Sec: Coll} studies collisions in polyatomic gases, and introduces the main notions of the continuous kinetic model. Then in Section \ref{Sec: kin mod} we describe  non-weighted and weighed settings for this model, that are further compared in Section \ref{Sec: Comp}. For the non-weighted setting we establish macroscopic models, namely the six fields model in Section \ref{Sec: macro six} and the fourteen moments model in Section \ref{Sec: macro 14}. The Appendix contains computations  of  the collision frequency and production terms for both macroscopic models.

\section{Study of  collisions in a polyatomic gas}\label{Sec: Coll}

In this section, we describe a collision process and introduce the main notions  used in continuous kinetic models \cite{LD-Bourgat, DesMonSalv, MPC-IG-poly}. \\

We assume that interactions between particles are binary collisions of polyatomic molecules. Due to the  complex structure of a polyatomic molecule, we need to take into account  internal degrees of freedom, as apart from the usual translation, there is a  possible rotation and vibration of molecules during the collision process.   The idea of \emph{continuous} kinetic models  is to capture these phenomena with a unique \emph{continuous}  variable $I$, that we call \emph{microscopic internal energy} of the molecule. In continuous models, a single Boltzmann equation governs evolution of the distribution function, that now has extended list of microscopic arguments -- besides the usual molecular velocity  $v\in \mathbb{R}^3$, it depends also on the  microscopic internal energy $I \in [0,\infty)$.\\

In order to study a collision process, we attribute the velocity-internal energy pair $(v, I)$ to each molecule. Then we consider the two colliding molecules, both of the same mass $m$, with pre-collisional molecular velocities and microscopic internal energies $(v',I')$ and $(v'_*, I'_*)$. After the collision, these quantities  transform to $(v, I)$ and $(v_*, I_*)$ respectively. Here we consider elastic collisions, meaning that  the total (kinetic+microscopic internal) energy of the molecular pair is conserved, and thus  conservation laws of momentum and energy  hold during the collision process,
\begin{equation}\label{micro CL}
\begin{split}
v +  v_* &=  v' + v'_*,\\
\frac{m}{2} \left|v\right|^2 + I + \frac{m}{2} \left|v_*\right|^2 + I_* &= \frac{m}{2} \left|v'\right|^2 + I' + \frac{m}{2} \left|v'_*\right|^2  + I'_*.
\end{split}
\end{equation}
These equations can be written in the reference frame of center-of-mass, by introducing velocity of the center of mass $V$ and relative velocity $u$,
\begin{equation}\label{cm-rv}
V:= \frac{v+v_*}{2}, \quad u:=v-v_*.
\end{equation}
Then \eqref{micro CL} can be rewritten,
\begin{equation}\label{micro CL cm}
\begin{split}
V &=  V',\\
\frac{m}{4} \left|u\right|^2 + I + I_* &= \frac{m}{4} \left|u'\right|^2 + I' + I'_* =: E.
\end{split}
\end{equation}
In order to describe the complete collision transformation, the aim is to express all pre-collisional quantities in terms of post-collisional ones. To that end, we use Borgnakke-Larsen procedure \cite{Bor-Larsen} that first introduces the parameter $R\in[0,1]$ in order to separate the pre-collisional kinetic energy $\frac{m}{4} \left|u'\right|^2 $ and the total microscopic internal energy $I' + I'_*$,
\begin{equation}\label{R}
\frac{m}{4} \left|u'\right|^2 = R E, \quad I'+I'_*=(1-R) E.
\end{equation}
 Then, the parameter $r\in[0,1]$ distributes the  total microscopic internal energy among the two colliding molecules, which implies
\begin{equation}\label{I' I'_*}
 I'=r(1-R) E, \quad I'_*=(1-r)(1-R) E.
\end{equation}
Finally, we parametrize the relative speed from \eqref{R} with a unit vector $\sigma \in S^2$, which yields expression for pre-collisional velocities using conservation of momentum \eqref{micro CL cm},
\begin{equation}\label{velocities}
v' = V + \sqrt{\frac{R E}{m}} \sigma, \quad v'_* = V - \sqrt{\frac{R E}{m}}\sigma.
\end{equation}
Relations \eqref{I' I'_*}--\eqref{velocities} together   with
\begin{equation}\label{coll mapping-2}
r'=\frac{I}{I+I_*} = \frac{I}{E-\frac{m}{4}\left| u \right|^2}, \quad R'=\frac{m \left|u\right|^2}{4 E}, \quad \sigma'=\frac{u}{\left|u\right|},
\end{equation}
define the   collision transformation
\begin{equation}\label{coll mapping}
T: (v, v_*, I, I_*, r, R, \sigma) \mapsto (v', v'_*, I', I'_*, r', R', \sigma').
\end{equation}
The Jacobian of this transformation  \cite{DesMonSalv,MPC-IG-poly} is computed in Lemma \ref{Lemma coll mapping}.
\begin{lemma}\label{Lemma coll mapping}
	The Jacobian of transformation $T$ given in \eqref{coll mapping}
	is given by
	\begin{equation}\label{coll mapping-3}
	J_T = \frac{(1-R) R^{\frac{1}{2}}}{(1-R')R'^{\frac{1}{2}}}= \frac{(1-R) \left|u'\right|}{(1-R')\left|u\right|}.
	\end{equation}
\end{lemma}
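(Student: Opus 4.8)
The plan is to obtain $J_T=|\det DT|$ by rewriting the product measure $dv\,dv_*\,dI\,dI_*\,dr\,dR\,d\sigma$ in variables adapted to the collision and doing the same for the image variables; in the right variables $T$ becomes, modulo one scalar weight, a permutation of coordinates, so that $J_T$ falls out as a ratio of weights.

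First I would pass to the centre-of-mass frame. By \eqref{cm-rv} the map $(v,v_*)\mapsto(V,u)$ has unit Jacobian, hence $dv\,dv_*=dV\,du$, and likewise $dv'\,dv'_*=dV'\,du'$; moreover $V'=V$ by \eqref{micro CL cm}. Writing $u=|u|\,\omega$ and $u'=|u'|\,\omega'$ with $\omega,\omega'\in S^2$ gives $du=|u|^2\,d|u|\,d\omega$ and $du'=|u'|^2\,d|u'|\,d\omega'$. The structural point is that $\omega'=\sigma$ by \eqref{velocities}, while the image parameter $\sigma'$ equals $u/|u|=\omega$ by \eqref{coll mapping-2}; thus on the two spherical factors $T$ merely interchanges $\omega$ and $\sigma$, which preserves $d\omega\,d\sigma$. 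Next I would trade the scalars $(|u|,I,I_*)$ for $(E,R',r')$ through the smooth inverse substitution $|u|=2\sqrt{R'E/m}$, $I=r'(1-R')E$, $I_*=(1-r')(1-R')E$, consistent with \eqref{micro CL cm} and \eqref{coll mapping-2}. A direct $3\times3$ computation shows the Jacobian of this substitution equals $(1-R')E^{3/2}/(mR')^{1/2}$, so that, using $|u|^2=4R'E/m$, one gets $dv\,dv_*\,dI\,dI_*=\kappa(E,R')\,dV\,dE\,dR'\,dr'\,d\omega$ with $\kappa(E,R'):=\frac{4}{m^{3/2}}(1-R')\sqrt{R'}\,E^{5/2}$.

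The same substitution applied to the image variables $(v',v'_*,I',I'_*)$ replaces $(|u'|,I',I'_*)$ by quantities $\widetilde E,\widetilde R,\widetilde r$ defined by the identical formulas; invoking the collision relations \eqref{R}, \eqref{I' I'_*}, \eqref{velocities} one checks that energy is conserved, $\widetilde E=\frac{m}{4}|u'|^2+I'+I'_*=RE+(1-R)E=E$, and that $\widetilde R=m|u'|^2/(4\widetilde E)=R$, $\widetilde r=I'/(I'+I'_*)=r$. Therefore $dv'\,dv'_*\,dI'\,dI'_*\,dr'\,dR'\,d\sigma'=\kappa(E,R)\,dV\,dE\,dR\,dr\,d\sigma\,dr'\,dR'\,d\omega$, while $dv\,dv_*\,dI\,dI_*\,dr\,dR\,d\sigma=\kappa(E,R')\,dV\,dE\,dR'\,dr'\,d\omega\,dr\,dR\,d\sigma$. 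These two product measures differ only by a reordering of their factors (the transpositions $R\leftrightarrow R'$, $r\leftrightarrow r'$ and the spherical transposition $\omega\leftrightarrow\sigma$ are all measure-preserving), so dividing gives $J_T=\kappa(E,R)/\kappa(E,R')=(1-R)R^{1/2}/[(1-R')R'^{1/2}]$, the first equality in \eqref{coll mapping-3}; the second then follows because \eqref{R} and \eqref{coll mapping-2} give $R/R'=|u'|^2/|u|^2$.

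The calculations are routine (the $3\times3$ determinant and the verification of $\widetilde E,\widetilde R,\widetilde r$). The step that demands care is the bookkeeping in the final comparison: one must correctly match each image coordinate with the source coordinate it equals, so that the leftover map is visibly a coordinate permutation. One should also note that the argument is valid only off a null set on which $|u|,|u'|$ and the parameters $R,R',r,r'$ stay away from their endpoints, which is irrelevant for a Jacobian identity.
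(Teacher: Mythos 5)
Your argument is correct: the change of variables $(|u|,I,I_*)\mapsto(E,R',r')$ has Jacobian $(1-R')E^{3/2}/(mR')^{1/2}$ as you claim, the identifications $V'=V$, $\widetilde E=E$, $\widetilde R=R$, $\widetilde r=r$, $\omega'=\sigma$, $\sigma'=\omega$ all check out against \eqref{R}--\eqref{coll mapping-2}, and the resulting ratio $\kappa(E,R)/\kappa(E,R')$ gives \eqref{coll mapping-3}. The paper itself omits the proof and defers to \cite{DesMonSalv,MPC-IG-poly}, where the computation is carried out in essentially this same way (centre-of-mass plus energy--splitting coordinates), so your write-up fills the gap with the standard argument.
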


Using relations \eqref{I' I'_*}  and \eqref{coll mapping-2}   the invariance property of the function that involves the product $I I_* $ can be proven.
\begin{lemma}\label{Lemma fun r R}
 The following invariance holds
	\begin{equation*}
	I I_* \, r\,(1-r) \, (1-R)^2  = 	I' I'_* \, r'\,(1-r') \, (1-R')^2,
	\end{equation*}
	where the involved quantities are linked via the mapping \eqref{coll mapping}.
\end{lemma}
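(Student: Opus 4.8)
The plan is to reduce the claimed identity to a manifestly symmetric expression in the primed and unprimed variables by eliminating $I,I_*,I',I'_*$ in favour of $r,R,r',R'$ and the common total energy $E$. First I would record, directly from \eqref{I' I'_*}, that
\begin{equation*}
I' I'_* = r(1-R)E \cdot (1-r)(1-R)E = r(1-r)(1-R)^2 E^2 .
\end{equation*}
Next I would derive the analogous formula for the unprimed internal energies. From the energy balance in \eqref{micro CL cm}, $I + I_* = E - \frac{m}{4}|u|^2$, and combining this with the definitions of $R'$ and $r'$ in \eqref{coll mapping-2} gives $I + I_* = (1-R')E$, hence $I = r'(1-R')E$ and $I_* = (1-r')(1-R')E$; therefore
\begin{equation*}
I I_* = r'(1-r')(1-R')^2 E^2 .
\end{equation*}

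With these two computations in hand, the lemma is immediate: the left-hand side becomes
\begin{equation*}
I I_* \, r(1-r)(1-R)^2 = r'(1-r')(1-R')^2 E^2 \cdot r(1-r)(1-R)^2 ,
\end{equation*}
while the right-hand side becomes
\begin{equation*}
I' I'_* \, r'(1-r')(1-R')^2 = r(1-r)(1-R)^2 E^2 \cdot r'(1-r')(1-R')^2 ,
\end{equation*}
and the two expressions coincide, both being equal to $E^2\, r(1-r)\, r'(1-r')\, (1-R)^2 (1-R')^2$.

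There is no genuine obstacle here; the only points requiring a word of care are that the energy $E$ appearing on the two sides is the same quantity, which is exactly the content of \eqref{micro CL cm}, and that the relations $I = r'(1-R')E$, $I_* = (1-r')(1-R')E$ are the precise unprimed counterparts of \eqref{I' I'_*}, obtained from \eqref{coll mapping-2} in the same way the primed relations were postulated in the Borgnakke--Larsen parametrization. Once these are observed, the identity is a one-line substitution, and one may additionally remark that the symmetry of the final form under swapping primed and unprimed variables reflects the (near-)involutive nature of the collision transformation $T$.
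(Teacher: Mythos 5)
Your proof is correct: the two substitutions $I'I'_* = r(1-r)(1-R)^2E^2$ (from \eqref{I' I'_*}) and $II_* = r'(1-r')(1-R')^2E^2$ (from \eqref{micro CL cm} and \eqref{coll mapping-2}) reduce both sides to the same symmetric expression $E^2\,r(1-r)\,r'(1-r')\,(1-R)^2(1-R')^2$. The paper defers this lemma to the cited reference, and your argument is exactly the standard one used there.
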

The proof of this lemma can be found in \cite{MPC-IG-poly}.

\section{Kinetic model for a polyatomic gas in two different settings}\label{Sec: kin mod}

In this section we introduce the continuous kinetic model in two different settings describing  a polyatomic gas, originating from references \cite{DesMonSalv, LD-Bourgat, LD-Toulouse}.  The goal is to write the Boltzmann equation governing the distribution function that probabilistically describes the state of a polyatomic gas.  

In both settings,  the distribution function depends on the usual macroscopic variables:  time $t\geq 0$ and space position $x \in \mathbb{R}^3$, but also on extended list of microscopic variables: molecular velocity $v\in \mathbb{R}^3$ and microscopic internal energy $I \in [0,\infty)$, i.e.
\begin{equation*}
f := f(t, x, v, I) \geq 0,
\end{equation*}
and is non-negative. Its measure of change, collision operator, acts only on microscopic variables $v$ and $I$. We have the two different definitions of collision operators, depending on the functional space we work in.\\

We define the plain $L^1$ space,
\begin{equation}\label{space L^1 plain}
\begin{split}
L^1 &= \left\{ f \ \text{measurable}: \int_{\mathbb{R}^3 \times [0,\infty)  } \left| f(v, I)  \right| \mathrm{d} I \mathrm{d}v < \infty \right\},
\end{split}
\end{equation}
and  the  $L^1$ space weighted with the suitable function $\varphi(I) \geq 0$,
\begin{equation}\label{space L^1 weight}
\begin{split}
L^1_\varphi &= \left\{ f \ \text{measurable}: \int_{\mathbb{R}^3 \times [0,\infty)  } \left| f(v, I)  \right| \varphi(I) \mathrm{d} I \mathrm{d}v < \infty \right\}.
\end{split}
\end{equation}
We describe the two settings below. 

\subsection{Kinetic model in the  non-weighted setting}\label{Sec: n-w model}
The kinetic model in the non-weighted setting is introduced in \cite{LD-Bourgat}. For the distribution function 
\begin{equation*}
f := f(t, x, v, I) \geq 0,
\end{equation*}
we write the Boltzmann equation
\begin{equation}\label{BE non-weight}
\partial_t f + v \cdot \nabla_{x} f = Q^{nw}(f,f)(v,I),
\end{equation}
with the collision operator $Q^{nw}(f,f)$  in the strong form defined below.
\subsubsection{Collision operator $Q^{nw}$ in the strong form}  We first introduce functions
\begin{equation}\label{fun r R}
\phi_\alpha(r) := (r(1-r))^{\alpha}, \qquad \psi_\alpha(R) :=  (1-R)^{2\alpha}.
\end{equation}
The strong form of collision operator in the non-weighted setting reads
\begin{multline}\label{Q non-weight}
Q^{nw}(f,f)(v,I) = \int_{\mathbb{R}^3 \times [0,\infty)\times [0,1]^2 \times S^2} \left( f' f'_* \left(\frac{I \, I_*}{I' \, I'_*} \right)^{\alpha}- f f_*\right)
\\
\times \mathcal{B}^{nw} (1-R) R^{\frac{1}{2}} \phi_\alpha(r) \, \psi_\alpha(R) \, \mathrm{d} \sigma \, \mathrm{d} r \, \mathrm{d} R\, \mathrm{d} I_* \, \mathrm{d}  v_*,
\end{multline}
with $\alpha>-1$, and where we have used the standard abbreviations 
\begin{equation}\label{f abb}
f':=f(t,x,v',I'), \ f'_*:=f(t,x,v'_*,I'_*), \ f_*:=f(t,x,v_*,I_*),
\end{equation}
 and quantities $v', I', v'_*, I'_*$ are described in the collision transformation $T$ from \eqref{coll mapping}. The  cross section  $\mathcal{B}^{nw}$  is supposed to satisfy the micro-reversibility conditions
\begin{equation}\label{Bnw}
\begin{split}
\mathcal{B}^{nw} :=\mathcal{B}^{nw}(v,v_*,I,I_*,R,r,\sigma) &= \mathcal{B}^{nw}(v',v'_*,I',I'_*,R',r',\sigma') \\ &=  \mathcal{B}^{nw}(v_*,v,I_*,I,R,1-r,-\sigma).
\end{split}
\end{equation}
Let us explain the terms involved in the strong form \eqref{Q non-weight}. First,  term $(1 - R) R^{\frac{1}{2}}$ is coming from the Jacobian of collision transformation computed in the Lemma \ref{Lemma coll mapping}. Then, renormalization of a distribution function $f$ by the factor $I^\alpha$ will allow to obtain the proper caloric equation of state, which causes  presence of  functions  $\phi_\alpha(r)$ and $ \psi_\alpha(R) $ aiming to ensure the invariance property of the measure, as shows the upcoming Lemma \ref{Lemma measure inv}. As we shall see later, $\alpha$ will be strongly connected to the molecule's number of degrees of freedom.
\begin{lemma}\label{Lemma measure inv}
	The measure
	\begin{align}\label{inv-measure}
	\mathrm{d}A=\mathcal{B}^{nw} \, \phi_\alpha(r)  \,  (1-R) R^{\frac{1}{2}} \psi_\alpha(R)\, I^{\alpha}  I_*^{\alpha} \, \mathrm{d} \sigma \, \mathrm{d} r \, \mathrm{d} R \,  \mathrm{d} I_* \, \mathrm{d}  v_*\, \mathrm{d} I \, \mathrm{d}  v
	\end{align}
	is invariant with respect to the changes
	\begin{align}
	(v,v_*,I,I_*,R,r,\sigma) & \leftrightarrow (v',v'_*,I',I'_*,R',r',\sigma'), \label{microreversibility changes prime}\\
	(v,v_*,I,I_*,R,r,\sigma)&\leftrightarrow (v_*,v,I_*,I,R,1-r,-\sigma). \label{microreversibility changes star}
	\end{align}
\end{lemma}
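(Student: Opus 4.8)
The plan is to view each of the changes \eqref{microreversibility changes prime} and \eqref{microreversibility changes star} as an involution $\Theta$ of the domain $\mathbb{R}^3\times\mathbb{R}^3\times[0,\infty)^2\times[0,1]^2\times S^2$ and to reduce the claimed invariance of $\mathrm{d}A$ to one pointwise identity in each case. Writing $\mathrm{d}A = g\,\mathrm{d}\sigma\,\mathrm{d}r\,\mathrm{d}R\,\mathrm{d}I_*\,\mathrm{d}v_*\,\mathrm{d}I\,\mathrm{d}v$ with $g := \mathcal{B}^{nw}\,\phi_\alpha(r)\,(1-R)R^{1/2}\psi_\alpha(R)\,I^{\alpha}I_*^{\alpha}$, the change-of-variables formula combined with $\Theta\circ\Theta=\mathrm{Id}$ shows that $\mathrm{d}A$ is invariant under $\Theta$ exactly when $g(w) = g(\Theta(w))\,|\det D\Theta(w)|$ for almost every $w$. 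So after these preliminaries the proof comes down to verifying that identity, and the ingredients needed for it have already been established.

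For the prime change, $\Theta$ is the collision map $T$ of \eqref{coll mapping}. First I would record that $T$ is an involution of the domain: by \eqref{micro CL cm} we have $V'=V$ and $E'=E$, so a short computation substituting \eqref{R}, \eqref{I' I'_*} and \eqref{velocities} into \eqref{coll mapping-2} gives $T\circ T=\mathrm{Id}$, and one checks that the image stays in the domain since $R' = \frac{m|u|^2}{4E}\in[0,1]$ and $r'=\frac{I}{I+I_*}\in[0,1]$. Then $|\det DT| = J_T = \frac{(1-R)R^{1/2}}{(1-R')R'^{1/2}}$ by Lemma \ref{Lemma coll mapping}, and the first line of \eqref{Bnw} gives $\mathcal{B}^{nw}(T(w)) = \mathcal{B}^{nw}(w)$. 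Substituting these, $g(T(w))\,J_T$ collapses to $\mathcal{B}^{nw}(w)\,\phi_\alpha(r')\,\psi_\alpha(R')\,(I' I_*')^{\alpha}\,(1-R)R^{1/2}$, so the required identity becomes $\phi_\alpha(r')\,\psi_\alpha(R')\,(I' I_*')^{\alpha} = \phi_\alpha(r)\,\psi_\alpha(R)\,(I I_*)^{\alpha}$; by the definitions \eqref{fun r R} this is precisely $\bigl(r'(1-r')(1-R')^2 I' I_*'\bigr)^{\alpha} = \bigl(r(1-r)(1-R)^2 I I_*\bigr)^{\alpha}$, which follows by raising the invariance of Lemma \ref{Lemma fun r R} to the power $\alpha$.

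For the star change, $\Theta$ is the map $S:(v,v_*,I,I_*,R,r,\sigma)\mapsto(v_*,v,I_*,I,R,1-r,-\sigma)$, which is manifestly an involution of the domain. Here $|\det DS| = 1$: the swaps $v\leftrightarrow v_*$ and $I\leftrightarrow I_*$ are coordinate permutations, $r\mapsto 1-r$ has derivative $-1$, $\sigma\mapsto-\sigma$ is an isometry of $S^2$, and $R$ is left fixed. The second line of \eqref{Bnw} gives $\mathcal{B}^{nw}(S(w)) = \mathcal{B}^{nw}(w)$, while $\phi_\alpha(1-r) = \bigl((1-r)r\bigr)^{\alpha} = \phi_\alpha(r)$ by \eqref{fun r R}, the factors $(1-R)R^{1/2}\psi_\alpha(R)$ are untouched, and $I_*^{\alpha}I^{\alpha} = I^{\alpha}I_*^{\alpha}$. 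Hence $g(S(w)) = g(w)$, which since $|\det DS|=1$ is the required identity.

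I do not expect a genuine obstacle, since all the work is carried by the three already-established facts: the Jacobian in Lemma \ref{Lemma coll mapping}, the micro-reversibility \eqref{Bnw}, and the invariant of Lemma \ref{Lemma fun r R}. The only points that need a little care are checking that $T$ is an involution mapping the domain onto itself, so that the change-of-variables substitution is legitimate, and being precise that in $J_T$ and in $g\circ T$ the symbols $R'$, $r'$, $I'$, $I_*'$ denote the corresponding functions of the integration variables $w = (v,v_*,I,I_*,R,r,\sigma)$.
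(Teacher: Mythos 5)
Your proof is correct and follows essentially the same route as the paper, which simply invokes the micro-reversibility \eqref{Bnw}, the Jacobian from Lemma \ref{Lemma coll mapping}, and the invariant of Lemma \ref{Lemma fun r R}. You have merely made explicit the change-of-variables bookkeeping (the involution property and the cancellation $(1-R')R'^{1/2}\cdot J_T=(1-R)R^{1/2}$) that the paper leaves implicit.
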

\begin{proof}
	The proof immediately follows from the property  \eqref{Bnw} of the cross section $\mathcal{B}^{nw}$, Lemma \ref{Lemma fun r R} and Jacobian of transformation \eqref{coll mapping} from Lemma \ref{Lemma coll mapping}.
\end{proof}
The strong form \eqref{Q non-weight} can be also written in the following manner,
\begin{multline}\label{Q non-weight pull out}
Q^{nw}(f,f)(v,I) = \int_{\mathbb{R}^3 \times [0,\infty)\times [0,1]^2 \times S^2} \left( \frac{f' f'_*}{\left( I' \, I'_* \right)^\alpha}  - \frac{f f_*}{\left(I \, I_*\right)^\alpha}\right) \\ \times \mathcal{B}^{nw} \, \phi_\alpha(r)  \,  (1-R) R^{\frac{1}{2}} \psi_\alpha(R)\, I^{\alpha}  I_*^{\alpha} \, \mathrm{d} \sigma \, \mathrm{d} r \, \mathrm{d} R\, \mathrm{d} I_* \, \mathrm{d}  v_*,
\end{multline}
obtained by pulling out the factor $(I\, I_*)^{\alpha}$.

\subsubsection{Collision operator $Q^{nw}$ in the weak form}  The choice of the functional space becomes evident in the definition of the weak form. Here we work in the plain $L^1$ space introduced in \eqref{space L^1 plain}.

\begin{lemma}[The weak form of the collision operator $Q^{nw}$] For  any test function $\chi(v,I)$ that makes the following left hand side meaningful, the collision operator \eqref{Q non-weight} takes the following weak form 
	\begin{multline}\label{weak form nw}
	\int_{\mathbb{R}^3 \times [0,\infty)} Q^{nw}(f, f)(v,I)  \, \chi(v,I) \, \mathrm{d}I \, \mathrm{d}v 
	\\
	= \frac{1}{2} \int_{\mathbb{R}^6 \times [0,\infty)^2 \times [0,1]^2 \times S^2 }   \frac{f f_*}{(I I_*)^\alpha}\left( \chi(v',I') + \chi(v'_*,I'_*) - \chi(v,I) - \chi(v_*,I_*)\right) \mathrm{d}A,
	\end{multline}
	with the measure $\mathrm{d}A$ from \eqref{inv-measure}.
\end{lemma}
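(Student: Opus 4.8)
The plan is to run the classical symmetrisation argument for Boltzmann-type collision operators, adapted to the enlarged phase space and to the modified weight $(II_*)^{-\alpha}$ that appears in \eqref{Q non-weight pull out}. First I would take the representation \eqref{Q non-weight pull out} of $Q^{nw}$, multiply by $\chi(v,I)$, and integrate in $(v,I)$. Provided $\chi$ is such that the left-hand side is well defined, Tonelli's theorem applies to the absolute value of the integrand, so the iterated integral collapses to a single integral over $\mathbb{R}^{6}\times[0,\infty)^{2}\times[0,1]^{2}\times S^{2}$ against the measure $\mathrm{d}A$ of \eqref{inv-measure}:
\[
\int_{\mathbb{R}^{3}\times[0,\infty)} Q^{nw}(f,f)(v,I)\,\chi(v,I)\,\mathrm{d}I\,\mathrm{d}v
= \int \Big(\frac{f'f'_*}{(I'I'_*)^{\alpha}}-\frac{ff_*}{(II_*)^{\alpha}}\Big)\,\chi(v,I)\,\mathrm{d}A .
\]

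Next I would split this into its gain and loss parts and treat the gain part with the change of variables \eqref{microreversibility changes prime}. The key input is Lemma \ref{Lemma measure inv}: $\mathrm{d}A$ is invariant under the collision map $T$ of \eqref{coll mapping}; moreover $T$ is an involution, which one checks directly from \eqref{velocities}, \eqref{I' I'_*} and \eqref{coll mapping-2} (applying $T$ twice returns the identity map). Hence relabelling the integration variables by their primed counterparts leaves $\mathrm{d}A$ unchanged, turns $f'f'_*/(I'I'_*)^{\alpha}$ into $ff_*/(II_*)^{\alpha}$, and turns $\chi(v,I)$ into $\chi(v',I')$. Recombining with the loss term gives the intermediate form
\[
\int_{\mathbb{R}^{3}\times[0,\infty)} Q^{nw}(f,f)(v,I)\,\chi(v,I)\,\mathrm{d}I\,\mathrm{d}v
= \int \frac{ff_*}{(II_*)^{\alpha}}\big(\chi(v',I')-\chi(v,I)\big)\,\mathrm{d}A .
\]

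Finally I would symmetrise in the two colliding molecules using \eqref{microreversibility changes star}. The map $S\colon(v,v_*,I,I_*,R,r,\sigma)\mapsto(v_*,v,I_*,I,R,1-r,-\sigma)$ is again a measure-preserving involution for $\mathrm{d}A$ (Lemma \ref{Lemma measure inv}), and from \eqref{velocities}--\eqref{I' I'_*} it exchanges $(v',I')\leftrightarrow(v'_*,I'_*)$ as well as $(v,I)\leftrightarrow(v_*,I_*)$, while the prefactor $ff_*/(II_*)^{\alpha}$ stays fixed. Writing the last integral as the half-sum of itself and of its image under $S$ then produces exactly \eqref{weak form nw}, with the factor $\tfrac12$ and the four-term combination $\chi(v',I')+\chi(v'_*,I'_*)-\chi(v,I)-\chi(v_*,I_*)$.

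The computations with the collision map are routine, so the only point that really needs care is integrability: each step merely commutes the order of integration or relabels variables, which is legitimate precisely because the hypothesis on $\chi$ guarantees absolute integrability against $\mathrm{d}A$, while the micro-reversibility conditions \eqref{Bnw} on $\mathcal{B}^{nw}$ (used through Lemma \ref{Lemma measure inv}) are what make $\mathrm{d}A$ genuinely invariant under both $T$ and $S$. The only mild obstacle I anticipate is checking that $T$ and $S$ act on the collisional arguments $(v',I',v'_*,I'_*)$ as stated, but this reduces to a direct substitution in \eqref{velocities}, \eqref{I' I'_*} and \eqref{coll mapping-2}.
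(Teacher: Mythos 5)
Your argument is correct and follows essentially the same route as the paper: integrate the pulled-out form \eqref{Q non-weight pull out} against $\chi$, use the invariance of $\mathrm{d}A$ under the prime/non-prime change of variables to rewrite the gain term and obtain $\int \frac{ff_*}{(II_*)^{\alpha}}\left(\chi(v',I')-\chi(v,I)\right)\mathrm{d}A$, then symmetrise via the particle-exchange map and average to produce the factor $\tfrac12$ and the four-term bracket. Your additional remarks on the involutive character of $T$ and $S$ and on integrability are correct refinements of what the paper leaves implicit.
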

\begin{proof} We integrate the collision operator \eqref{Q non-weight} against a test function  $\chi(v,I)$ with respect to $v$ and $I$ variables and then perform changes of variables \eqref{microreversibility changes prime} and \eqref{microreversibility changes star}. Using invariance properties of the measure  $\mathrm{d}A$  \eqref{inv-measure} stated  in Lemma \ref{Lemma measure inv}, we obtain
	\begin{align}
	& \int_{\mathbb{R}^3 \times [0,\infty)} 
	Q^{nw}(f,f)(v,I)  
	\chi(v,I)
	\mathrm{d}I	\, \mathrm{d} v \nonumber
	\\	&= 	\int_{\mathbb{R}^6 \times [0,\infty)^2 \times [0,1]^2 \times S^2 } \frac{ f f_* }{(I I_*)^\alpha} \left( \chi(v',I') -  \chi(v,I)\right)  \mathrm{d}A  \label{weak 1}
	\\	&= 	\int_{\mathbb{R}^6 \times [0,\infty)^2 \times [0,1]^2 \times S^2 } \frac{ f f_* }{(I I_*)^\alpha} \left( \chi(v'_*,I'_*) -  \chi(v_*,I_*)\right)  \mathrm{d}A, \nonumber
	\end{align}
	which yields the desired estimate \eqref{weak form}.
\end{proof}
The conservation laws at the microscopic level \eqref{micro CL} imply the annihilation of the weak form \eqref{weak form nw} for the conserved quantities. More precisely,
\begin{equation}\label{weak form coll inv}
\int_{\mathbb{R}^3 \times [0,\infty)} Q^{nw}(f, f)(v,I)  \left(  \begin{matrix} m \\  m v \\ \frac{m}{2} \left|v\right|^2 + I \end{matrix}  \right) \mathrm{d}I \, \mathrm{d}v 
\\
= 0.
\end{equation}
Any linear combination of  test functions $m$, $m v$ and $ \frac{m}{2} \left|v\right|^2 + I $ is called the collision invariant.\\

Our next goal is to formulate the H-theorem for the collision operator $Q^{nw}$. To that end, we  first define  the entropy production,
\begin{equation}\label{entr prod nw}
D^{nw}(f) = \int_{\mathbb{R}^3 \times [0,\infty) } Q^{nw}(f,f)(v,I) \log(f(v,I) I^{-\alpha}) \ \mathrm{d}I \mathrm{d}v,
\end{equation}
and then study its properties in the following theorem. 
\begin{theorem}[H-theorem]
	Let the cross section $\mathcal{B}^{nw}$ be positive almost everywhere, and let $f\geq 0$ be such that the collision operator $Q^{nw}(f,f)$ and the entropy production $D^{nw}(f)$ are well defined. Then the following properties hold,
	\begin{itemize}
		\item[i.] Entropy production is non-positive, that is
		\begin{equation}\label{H th non positive}
		D^{nw}(f) \leq 0.
		\end{equation}
		\item[ii.] The three following properties are equivalent
		\begin{itemize}
			\item[(1)]$D^{nw}(f) =0$,\\
			\item[(2)] $Q^{nw}(f,f) =0$  for all  $v \in \mathbb{R}^3, \ I\in [0,\infty)$,\\
			\item[(3)] There exists $n\geq 0$, $U \in \mathbb{R}^3$,  and  $T>0$, such that
			\begin{equation}\label{Maxwellian}
			f(v, I) = \frac{n}{Z(T)} \left( \frac{m}{2 \pi k T} \right)^{\frac{3}{2}} I^\alpha \ e^{- \frac{1}{k T} \left( \frac{m}{2} \left| v - U \right|^2 + I \right)} ,
			\end{equation}
			where $Z(T)$ is a partition (normalization) function 
			\begin{equation*}
			Z(T) = \int_{[0,\infty)} I^\alpha e^{-\frac{I}{k T} } \mathrm{d}I = (k T)^{\alpha +1} \Gamma(\alpha+1),
			\end{equation*}
			with	$\Gamma$ representing the  Gamma function.
		\end{itemize}
	\end{itemize}
\end{theorem}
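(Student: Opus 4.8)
The plan is to run the classical H-theorem argument in the continuous polyatomic setting, using the weak form \eqref{weak form nw} together with the measure invariance of Lemma \ref{Lemma measure inv}. Abbreviate $a := f f_* (I I_*)^{-\alpha}$ and $b := f' f'_* (I' I'_*)^{-\alpha}$. Taking the test function $\chi(v,I) = \log\big(f(v,I)\, I^{-\alpha}\big)$ in \eqref{weak form nw} and observing that $\chi(v',I') + \chi(v'_*,I'_*) - \chi(v,I) - \chi(v_*,I_*) = \log(b/a)$, one gets $D^{nw}(f) = \tfrac12 \int a\,\log(b/a)\, \mathrm{d}A$. The elementary bound $\log x \le x-1$ yields $a\log(b/a) \le b-a$ pointwise (with the usual convention where $a=0$), with equality if and only if $a=b$. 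Moreover, performing the change of variables \eqref{microreversibility changes prime} and invoking invariance of $\mathrm{d}A$ gives $\int b\,\mathrm{d}A = \int a\,\mathrm{d}A$, since $b$ is precisely $a$ composed with the collision map $T$. Hence $D^{nw}(f) \le \tfrac12\int (b-a)\,\mathrm{d}A = 0$, which is property (i).

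For (ii), the only nontrivial implication is $(1)\Rightarrow(3)$. From $D^{nw}(f)=0$ and $\int(b-a)\,\mathrm{d}A=0$ we get $\int \big[a\log(b/a)-(b-a)\big]\,\mathrm{d}A = 0$ with a nonpositive integrand, hence $a\log(b/a) = b-a$ $\mathrm{d}A$-a.e.; since $\mathcal{B}^{nw}>0$ a.e.\ and the remaining weights $\phi_\alpha(r)$, $(1-R)R^{1/2}\psi_\alpha(R)$, $I^\alpha I_*^\alpha$ are positive on the interior of the integration domain, this forces $a=b$ a.e., i.e.\ $f(v,I)f(v_*,I_*)(I I_*)^{-\alpha} = f(v',I')f(v'_*,I'_*)(I' I'_*)^{-\alpha}$ for a.e.\ collision configuration. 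Taking logarithms, $\psi(v,I) := \log\big(f(v,I)\,I^{-\alpha}\big)$ satisfies the additive relation $\psi(v',I')+\psi(v'_*,I'_*) = \psi(v,I)+\psi(v_*,I_*)$ along every admissible collision. Invoking the characterization of collision invariants for this model --- the linear span of $1$, $v$, $\tfrac{m}{2}|v|^2+I$, cf.\ \eqref{weak form coll inv} --- we conclude $\psi(v,I) = c_0 + c\cdot v + C\big(\tfrac{m}{2}|v|^2+I\big)$ for some $c_0\in\mathbb{R}$, $c\in\mathbb{R}^3$ and a coefficient $C$ which must be strictly negative for $f$ to be integrable, hence of the form $C = -1/(kT)$ with $T>0$. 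Completing the square in $v$ and separating the $I$-integral, which produces the partition function $Z(T) = \int_0^\infty I^\alpha e^{-I/kT}\,\mathrm{d}I = (kT)^{\alpha+1}\Gamma(\alpha+1)$, rewrites $f$ in the form \eqref{Maxwellian} with $n\ge 0$ and $U\in\mathbb{R}^3$.

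The remaining implications are routine. For $(3)\Rightarrow(2)$: substituting \eqref{Maxwellian} into the form \eqref{Q non-weight pull out} of the operator, the exponent $\tfrac{m}{2}|v-U|^2 + I$ is a collision invariant by \eqref{micro CL}, so $a=b$ identically, the integrand vanishes, and $Q^{nw}(f,f)\equiv 0$. For $(2)\Rightarrow(1)$: if $Q^{nw}(f,f)=0$ then $D^{nw}(f)=0$ immediately from the definition \eqref{entr prod nw}. The main obstacle I anticipate is twofold: first, the rigorous passage from $D^{nw}(f)=0$ to the pointwise identity $a=b$, which needs care about the set where $f$ vanishes (one typically reduces to $\{f>0\}$) and about integrability, so that all weak-form manipulations are licit --- this is precisely why the hypotheses assume $Q^{nw}(f,f)$ and $D^{nw}(f)$ well defined; and second, the characterization of the collision invariants itself, which for the continuous polyatomic model must be established in the presence of the extra internal-energy variable $I$ and the Borgnakke--Larsen parameters $r,R$, and is the genuinely model-specific input of the proof.
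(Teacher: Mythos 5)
The paper itself does not prove this theorem --- it delegates to \cite{LD-Bourgat} --- and your argument is precisely the classical H-theorem proof used there: symmetrize via the invariance of the measure $\mathrm{d}A$ (Lemma \ref{Lemma measure inv}), apply a convexity inequality to get $D^{nw}(f)\le 0$ with equality forcing $f f_* (I I_*)^{-\alpha} = f' f'_* (I' I'_*)^{-\alpha}$ a.e., and then invoke the characterization of collision invariants to obtain the Maxwellian \eqref{Maxwellian}. Your chain of implications $(1)\Rightarrow(3)\Rightarrow(2)\Rightarrow(1)$ closes correctly, and your inequality $a\log(b/a)\le b-a$ combined with $\int b\,\mathrm{d}A=\int a\,\mathrm{d}A$ is an equivalent variant of the more common symmetrized form $\tfrac14\int(a-b)\log(b/a)\,\mathrm{d}A$. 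The one genuinely nontrivial ingredient you assert rather than prove is that any measurable $\psi(v,I)$ additive along all admissible collisions lies in the span of $1$, $v$, $\tfrac{m}{2}|v|^2+I$; you correctly identify this as the model-specific input, and it is exactly the lemma established in \cite{LD-Bourgat}, so your proposal is complete up to that citation.
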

The proof is given in \cite{LD-Bourgat}.

\subsection{Kinetic model in the  weighted setting}\label{Sec: w model}
The kinetic model in the weighted  setting originates from \cite{LD-Toulouse, DesMonSalv}. In this case the  distribution function 
\begin{equation*}
g := g(t, x, v, I) \geq 0,
\end{equation*}
satisfies the Boltzmann equation 
\begin{equation}\label{BE weight}
\partial_t g + v \cdot \nabla_{x} g =Q^w(g,g)(v,I),
\end{equation}
where $Q^w(g,g)$ is the collision operator that acts only on $(v,I)$ variables and is described below.
\subsubsection{Collision operator $Q^w$ in the strong form} The weighted setting is related to the weighted $L^1$ space \eqref{space L^1 weight}, and in this case the collision operator is defined as
\begin{multline}\label{Q weight}
Q^w(g,g)(v,I) = \int_{\mathbb{R}^3 \times [0,\infty) \times [0,1]^2 \times S^2} \left(g' g'_* - g g_*\right) \\ \times \mathcal{B}^w (1-R) R^{\frac{1}{2}} \frac{1}{\varphi(I)} \mathrm{d} \sigma \, \mathrm{d} r \, \mathrm{d} R\, \mathrm{d} I_* \, \mathrm{d}  v_*,
\end{multline}
where we have used the standard conventions as in \eqref{f abb},  with  the primed quantities  from \eqref{coll mapping},  and the cross section
\begin{equation}\label{Bw}
\begin{split}
\mathcal{B}^w :=\mathcal{B}^w(v,v_*,I,I_*,R,r,\sigma) &= \mathcal{B}^w(v',v'_*,I',I'_*,R',r',\sigma') \\ &=  \mathcal{B}^w(v_*,v,I_*,I,R,1-r,-\sigma).
\end{split}
\end{equation}
The factor $(1-R) R^{\frac{1}{2}}$ is Jacobian of the collision transformation \eqref{coll mapping-3}. The measure $\varphi(I)$ aims at capturing the features of polyatomic gases at the macroscopic level, and notably to provide an agreement with the caloric equation of state. Contrary to the non-weighted setting, this measure is not introduced a priori, which theoretically gives  a room to obtain a general equation for polytropic or non-polytropic gases, corresponding to linear or non-linear dependence of the macroscopic internal energy upon temperature, respectively. 

\subsubsection{Collision operator $Q^w$ in the weak form}   For the weighted setting, the weak form of collision operator is obtained by means of the integration against the weight function $\varphi(I)$, as  described in the upcoming Lemma \ref{Lemma weak w}.

\begin{lemma}\label{Lemma weak w} For  any test function $\chi(v,I)$ that makes the following left hand side meaningful, the collision operator \eqref{Q weight} has the following weak form 
	\begin{multline}\label{weak form}
	\int_{\mathbb{R}^3 \times [0,\infty) } Q^w(g, g)(v,I)  \, \chi(v,I) \, \varphi(I) \, \mathrm{d}I \, \mathrm{d}v 
	\\
	= \frac{1}{2} \int_{\mathbb{R}^6 \times [0,\infty)^2 \times [0,1]^2 \times S^2 }   g g_* \left( \chi(v',I') + \chi(v'_*,I'_*) - \chi(v,I) - \chi(v_*,I_*)\right)
	\\
	\times \mathcal{B}^w \, (1-R) R^{\frac{1}{2}}\, \mathrm{d} \sigma \, \mathrm{d} r \, \mathrm{d} R \,  \mathrm{d} I_* \, \mathrm{d}  v_*\, \mathrm{d} I \, \mathrm{d}  v,
	\end{multline}
	where primed quantities as functions of non-primed ones are given  in \eqref{coll mapping}, and $\mathcal{B}^w$ is from \eqref{Bw}.
\end{lemma}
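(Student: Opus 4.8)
The plan is to follow the same symmetrization scheme that was used above for $Q^{nw}$, the two differences being that the weight $\varphi(I)$ is dealt with at the very first step and that no analogue of Lemma \ref{Lemma fun r R} enters. First I would record the weighted analogue of Lemma \ref{Lemma measure inv}: the measure
\[
\mathrm{d}\widetilde A := \mathcal{B}^w\,(1-R)R^{\frac12}\,\mathrm{d}\sigma\,\mathrm{d}r\,\mathrm{d}R\,\mathrm{d}I_*\,\mathrm{d}v_*\,\mathrm{d}I\,\mathrm{d}v
\]
is invariant under the changes \eqref{microreversibility changes prime} and \eqref{microreversibility changes star}. For the prime change this follows directly from Lemma \ref{Lemma coll mapping}: since the Jacobian of $T$ is $J_T=(1-R)R^{\frac12}\big/\big((1-R')R'^{\frac12}\big)$, one gets $(1-R')R'^{\frac12}\,\mathrm{d}\sigma'\,\mathrm{d}r'\,\mathrm{d}R'\,\mathrm{d}I'_*\,\mathrm{d}v'_*\,\mathrm{d}I'\,\mathrm{d}v'=(1-R)R^{\frac12}\,\mathrm{d}\sigma\,\mathrm{d}r\,\mathrm{d}R\,\mathrm{d}I_*\,\mathrm{d}v_*\,\mathrm{d}I\,\mathrm{d}v$, and $\mathcal{B}^w$ is unchanged by the first equality in \eqref{Bw}; for the star change the Jacobian is $1$, $R$ is untouched, and $\mathcal{B}^w$ is unchanged by the second equality in \eqref{Bw}. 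Note that, unlike in $\mathrm{d}A$, there are no factors $\phi_\alpha(r)$, $\psi_\alpha(R)$, $I^\alpha I_*^\alpha$ to track, so Lemma \ref{Lemma fun r R} is not needed here.

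Next I would multiply the strong form \eqref{Q weight} by $\chi(v,I)\,\varphi(I)$ and integrate in $(v,I)$; the factor $1/\varphi(I)$ in \eqref{Q weight} cancels against $\varphi(I)$, so that
\[
\int_{\mathbb{R}^3\times[0,\infty)} Q^w(g,g)(v,I)\,\chi(v,I)\,\varphi(I)\,\mathrm{d}I\,\mathrm{d}v=\int\big(g'g'_*-gg_*\big)\,\chi(v,I)\,\mathrm{d}\widetilde A .
\]
I would then split the right-hand side into a gain and a loss part and apply to the gain part the prime change \eqref{microreversibility changes prime}, which by \eqref{coll mapping}--\eqref{coll mapping-2} is an involution: it sends $g'g'_*$ to $gg_*$ while moving $\chi(v,I)$ to $\chi(v',I')$, and by the invariance of $\mathrm{d}\widetilde A$ just established this yields
\[
\int Q^w(g,g)\,\chi\,\varphi\,\mathrm{d}I\,\mathrm{d}v=\int gg_*\big(\chi(v',I')-\chi(v,I)\big)\,\mathrm{d}\widetilde A .
\]
Applying now the star change \eqref{microreversibility changes star}, under which $gg_*$ is symmetric, gives the companion identity with $(v',I')$, $(v,I)$ replaced by $(v'_*,I'_*)$, $(v_*,I_*)$; averaging the two identities produces exactly \eqref{weak form}.

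The only delicate point is the measure-invariance step: one must be sure that $T$ together with the weight $(1-R)R^{\frac12}$ is genuinely measure preserving on the full product space, i.e. that $J_T$ from Lemma \ref{Lemma coll mapping} precisely compensates the change of $(1-R)R^{\frac12}$, and that $T$ is an involution so that relabelling primed variables as unprimed is legitimate — both are inherited from Lemma \ref{Lemma coll mapping} and the explicit inverse formulas \eqref{coll mapping-2}. Everything after that is the standard Boltzmann pre-/post-collisional symmetrization; the weight $\varphi$ plays no role in it because it has already been absorbed into the left-hand side test function before any change of variables is performed.
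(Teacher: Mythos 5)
Your proposal is correct and follows essentially the same route as the paper: integrate the strong form against $\chi\varphi$ so that the $1/\varphi(I)$ factor cancels, then apply the prime and star changes of variables using the Jacobian from Lemma \ref{Lemma coll mapping} and the micro-reversibility of $\mathcal{B}^w$ in \eqref{Bw}, and average the resulting two identities. The extra care you take in isolating the invariance of the weighted measure $\mathrm{d}\widetilde A$ is implicit in the paper's argument but not a different method.
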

\begin{proof}
	We first integrate the strong form \eqref{Q weight} against a suitable test function $\chi(v,I)$ in the velocity-internal energy space $\mathbb{R}^3 \times [0,\infty)$ with the weight $\varphi(I)$ in $I$. Then we change the variables, first we interchange primes and non-primes \eqref{microreversibility changes prime} and then we replace particles by means of  \eqref{microreversibility changes star}. This  gives
	\begin{align*}
	& \int_{\mathbb{R}^3 \times [0,\infty)} 
	Q^w(g,g)(v,I)  
	\chi(v,I) \varphi(I)
	\mathrm{d}I	\, \mathrm{d} v \nonumber
	\\	&= 	\int_{\mathbb{R}^6 \times [0,\infty)^2 \times [0,1]^2 \times S^2 } g g_* \left( \chi(v',I') -  \chi(v,I)\right) \nonumber\\  &\hspace*{5cm}\times \mathcal{B}^w \, (1-R) R^{\frac{1}{2}}\, \mathrm{d} \sigma \, \mathrm{d} r \, \mathrm{d} R \,  \mathrm{d} I_* \, \mathrm{d}  v_*\, \mathrm{d} I \, \mathrm{d}  v, \label{weak 0}
	\\	&= 	\int_{\mathbb{R}^6 \times [0,\infty)^2 \times [0,1]^2 \times S^2 } g g_* \left( \chi(v'_*,I'_*) -  \chi(v_*,I_*)\right) \nonumber \\  &\hspace*{5cm}\times  \mathcal{B}^w \, (1-R) R^{\frac{1}{2}}\, \mathrm{d} \sigma \, \mathrm{d} r \, \mathrm{d} R \,  \mathrm{d} I_* \, \mathrm{d}  v_*\, \mathrm{d} I \, \mathrm{d}  v,  \nonumber
	\end{align*}
where we have used invariance of the cross section $\mathcal{B}^w$ stated in \eqref{Bw} and Jacobian of the transformation \eqref{coll mapping-3}, which concludes the relation \eqref{weak form}.
\end{proof}
Microscopic conservation laws  \eqref{micro CL} imply that the weak form \eqref{weak form} vanishes when test functions are chosen as $m$, $mv$ and $\frac{m}{2} \left|v\right|^2 + I $,
\begin{equation*}\label{weak form coll inv w}
\int_{\mathbb{R}^3 \times [0,\infty)} Q^w(g, g)(v,I)  \left(  \begin{matrix} m \\  m v \\ \frac{m}{2} \left|v\right|^2 + I \end{matrix}  \right) \mathrm{d}I \, \mathrm{d}v 
\\
= 0.
\end{equation*}
We refer to these test functions and  to their any linear combination  as  collision invariants.\\

We now  formulate the H-theorem for the collision operator $Q^w$. We first define the entropy production,
\begin{equation*}
D^{w}(g) = \int_{\mathbb{R}^3 \times [0,\infty)} Q^w(g,g)(v,I)\, \log(g(v,I)) \ \varphi(I) \, \mathrm{d}I \mathrm{d}v,
\end{equation*}
\begin{theorem}[H-theorem]
	Let the  cross section $\mathcal{B}^{w}$ be positive almost everywhere, and let $g\geq 0$ such that the collision operator $Q^{w}(g,g)$ and entropy production $D^{w}(g)$ are well defined. Then the following properties hold
	\begin{itemize}
		\item[i.] Entropy production is non-positive, that is
		\begin{equation*}
		D^{w}(g) \leq 0.
		\end{equation*}
		\item[ii.] The three following properties are equivalent
		\begin{itemize}
			\item[(1)]$D^{w}(g) =0$,\\
			\item[(2)] $Q^{w}(g,g) =0$  for all  $v \in \mathbb{R}^3, \ I \in [0, \infty)$,\\
			\item[(3)] There exists $n\geq 0$, $U \in \mathbb{R}^3$,  and  $T>0$, such that
			\begin{equation*}
			g(v, I) = \frac{n}{Z(T)} \left( \frac{m}{2 \pi k T} \right)^{\frac{3}{2}} \ e^{- \frac{1}{k T} \left( \frac{m}{2} \left| v - U \right|^2 + I \right)} ,
			\end{equation*}
			where $Z(T)$ is a  normalization function 
			\begin{equation*}
			Z(T) = \int_{[0,\infty)}  e^{-\frac{I}{k T} }  \varphi(I) \, \mathrm{d}I.
			\end{equation*}
		\end{itemize}
	\end{itemize}
\end{theorem}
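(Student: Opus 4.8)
The plan is to run the classical Boltzmann $H$-theorem argument, now weighted by $\varphi$, in close parallel with the non-weighted case treated in \cite{LD-Bourgat}. Denote by
\[
\mathrm{d}\mu := \mathcal{B}^w\,(1-R)\,R^{\frac12}\,\mathrm{d}\sigma\,\mathrm{d}r\,\mathrm{d}R\,\mathrm{d}I_*\,\mathrm{d}v_*\,\mathrm{d}I\,\mathrm{d}v
\]
the measure appearing on the right-hand side of the weak form \eqref{weak form}. Combining the micro-reversibility \eqref{Bw} of $\mathcal{B}^w$ with the Jacobian \eqref{coll mapping-3}, exactly as in the proof of Lemma \ref{Lemma weak w}, the measure $\mathrm{d}\mu$ is invariant under both exchanges \eqref{microreversibility changes prime} and \eqref{microreversibility changes star}. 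For item i, I would insert $\chi=\log g$ into \eqref{weak form}, obtaining $D^w(g)=\tfrac12\int g\,g_*\,\log\!\frac{g'g'_*}{g g_*}\,\mathrm{d}\mu$; applying the exchange \eqref{microreversibility changes prime} once more (here is where invariance of $\mathrm{d}\mu$ is used) and averaging the two resulting expressions gives
\[
D^w(g) = -\tfrac14\int \bigl(g'g'_* - g g_*\bigr)\bigl(\log(g'g'_*) - \log(g g_*)\bigr)\,\mathrm{d}\mu .
\]
The integrand is pointwise nonnegative by monotonicity of the logarithm, so $D^w(g)\le 0$; moreover, since $\mathcal{B}^w>0$ almost everywhere, $D^w(g)=0$ forces $g'g'_*=g g_*$ almost everywhere on the collision manifold.

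For item ii I would close the cycle $(2)\Rightarrow(1)\Rightarrow(3)\Rightarrow(2)$. The implication $(2)\Rightarrow(1)$ is immediate from the definition of $D^w$. For $(3)\Rightarrow(2)$, I substitute the stated Gaussian into $g'g'_*-g g_*$: after the shift $v\mapsto v-U$ the microscopic conservation laws \eqref{micro CL} still hold, so $\tfrac m2|v-U|^2+I+\tfrac m2|v_*-U|^2+I_* = \tfrac m2|v'-U|^2+I'+\tfrac m2|v'_*-U|^2+I'_*$, whence $g'g'_* = g g_*$ identically and $Q^w(g,g)\equiv 0$; a direct Gaussian integration in $v$ then confirms that the prefactor $n/Z(T)$ with $Z(T)=\int_{[0,\infty)}e^{-I/kT}\varphi(I)\,\mathrm{d}I$ is exactly the one making $\int g\,\varphi\,\mathrm{d}I\,\mathrm{d}v=n$. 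For $(1)\Rightarrow(3)$, the equality case from item i says that $\log g$ satisfies $\log g(v',I')+\log g(v'_*,I'_*)=\log g(v,I)+\log g(v_*,I_*)$ for a.e.\ collision configuration, i.e.\ $\log g$ is a collision invariant; granting the characterization of collision invariants discussed below, $\log g(v,I)=a+b\cdot v+c\bigl(\tfrac m2|v|^2+I\bigr)$ for some $a\in\mathbb{R}$, $b\in\mathbb{R}^3$, $c\in\mathbb{R}$. Membership of $g$ in $L^1_\varphi$ forces $c<0$; writing $c=-1/(kT)$, completing the square in $v$ to produce $U$, collecting constants into $n$, and evaluating the $I$-integral as $Z(T)$ yields precisely the asserted form.

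The main obstacle is the converse collision-invariant statement used in $(1)\Rightarrow(3)$: that every measurable $\chi(v,I)$ solving $\chi(v',I')+\chi(v'_*,I'_*)=\chi(v,I)+\chi(v_*,I_*)$ over the full range of collision parameters is an affine combination of $1$, $v$ and $\tfrac m2|v|^2+I$ (the converse of the elementary fact that these functions annihilate the weak form). I would first restrict attention to the subfamily of collisions that do not redistribute internal energy, namely those with $R=R'=m|u|^2/(4E)$ and $r=r'=I/(I+I_*)$, for which \eqref{coll mapping} degenerates to the purely translational elastic exchange $v,v_*\mapsto V\pm\tfrac{|u|}{2}\sigma$ with $I,I_*$ held fixed; the classical monatomic argument then forces $\chi(v,I)=\mathsf{a}(I)+\mathsf{b}(I)\cdot v+\mathsf{c}(I)\,\tfrac m2|v|^2$ for each fixed $I$. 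Substituting this back into the full relation and using collisions that genuinely vary $R$ and $r$ pins $\mathsf{b}$ and $\mathsf{c}$ down to constants and forces $\mathsf{a}$ to be affine in $I$ with slope $c$, giving the claimed form. This redistribution step is the delicate one; it follows the lines of the monatomic argument adapted to internal energy as in \cite{DesMonSalv,LD-Bourgat}. Everything else is a routine transcription of the non-weighted proof, with $I^\alpha\,\mathrm{d}I$ replaced throughout by $\varphi(I)\,\mathrm{d}I$.
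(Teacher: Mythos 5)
Your argument is correct and is essentially the classical $H$-theorem proof that the paper delegates to \cite{DesMonSalv}: symmetrization of the weak form over the measure-preserving exchanges \eqref{microreversibility changes prime}--\eqref{microreversibility changes star}, the elementary inequality $(x-y)(\log x-\log y)\ge 0$, and the characterization of collision invariants obtained by first freezing the internal energies via the subfamily $R=R'$, $r=r'$ and then using the redistribution in $r$ and $R$ to reduce $h(I)=\mathsf a(I)-cI$ to a constant. No changes are needed.
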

The proof can be found in \cite{DesMonSalv}.

\section{Comparison of the continuous kinetic model in different settings}\label{Sec: Comp}

In order to compare the continuous kinetic model in two different settings, we start with the weighted one presented in Section \ref{Sec: w model}.  We firstly change the distribution function $g$ so that the weight $\varphi(I)$ is  detached from it. More precisely, we introduce the distribution function $f$ by means of
\begin{equation}\label{distr fun change}
f(t,x,v,I) = g(t,x,v,I) \varphi(I).
\end{equation}
Then the Boltzmann equation \eqref{BE weight} written in terms of $g$ can be  rewritten  in terms of  $f$,
\begin{multline}\label{BE Q non weight general}
\partial_t f + v \cdot \nabla_{x} f =  \int_{\mathbb{R}^3 \times [0,\infty)\times [0,1]^2 \times S^2} \left( f' f'_* \ \frac{\varphi(I) \, \varphi(I_*)}{\varphi(I') \, \varphi(I'_*)}- f f_*\right) \\
\times\frac{\mathcal{B}^w}{\varphi(I) \, \varphi(I_*)}  (1-R) R^{\frac{1}{2}} \, \mathrm{d} \sigma \, \mathrm{d} r \, \mathrm{d} R\, \mathrm{d} I_* \, \mathrm{d}  v_*,
\end{multline}
which	coincides with the Boltzmann equation from \cite{LD-Bisi}, Remark 1. However, the effective collision cross section 
\begin{equation}\label{effective cross section}
\frac{\mathcal{B}^w}{\varphi(I) \, \varphi(I_*)},
\end{equation}
as it was named in \cite{LD-Bisi}, in general does not satisfy the micro-reversibility assumptions.  

In order to overcome this drawback and to obtain the model in the non-weighted setting described in Section \ref{Sec: n-w model}, we need to make a choice of the function $\varphi(I)$, and so we take
\begin{equation}\label{fi}
\varphi(I) = I^\alpha.
\end{equation}
With this choice of the weight function $\varphi(I)$, the Boltzmann equation \eqref{BE Q non weight general} becomes
\begin{multline*}
\partial_t f + v \cdot \nabla_{x} f =  \int_{\mathbb{R}^3 \times [0,\infty)\times [0,1]^2 \times S^2} \left( f' f'_*  \left(\frac{I \, I_*}{I' \, I'_*} \right)^{\alpha}- f f_*\right) \\ \times \frac{\mathcal{B}^w}{I^\alpha I_*^\alpha }  (1-R) R^{\frac{1}{2}} \, \mathrm{d} \sigma \, \mathrm{d} r \, \mathrm{d} R\, \mathrm{d} I_* \, \mathrm{d}  v_*.
\end{multline*}
Now, it is clear that in order to recover formula in the non-weighted  setting \eqref{BE non-weight}-\eqref{Q non-weight}, we need to multiply and divide by a factor $\left( r(1-r) \right)^\alpha (1-R)^{2\alpha} =  \phi_\alpha(r) \, \psi_\alpha(R)$, 
\begin{multline*}
\partial_t f + v \cdot \nabla_{x} f =  \int_{\mathbb{R}^3 \times [0,\infty)\times [0,1]^2 \times S^2} \left( f' f'_*  \left(\frac{I \, I_*}{I' \, I'_*} \right)^{\alpha}- f f_*\right)
\\ \times \frac{\mathcal{B}^w}{I^\alpha I_*^\alpha \phi_\alpha(r) \, \psi_\alpha(R)} \phi_\alpha(r) \, \psi_\alpha(R) (1-R) R^{\frac{1}{2}} \, \mathrm{d} \sigma \, \mathrm{d} r \, \mathrm{d} R\, \mathrm{d} I_* \, \mathrm{d}  v_*.
\end{multline*}
This factor makes the effective cross section \eqref{effective cross section} for the choice \eqref{fi} micro-reversible, and so we are led to define the new cross section
\begin{equation}\label{Bnw preko Bw}
\mathcal{B}^{nw} = \frac{\mathcal{B}^w}{I^\alpha I_*^\alpha \, \phi_\alpha(r) \, \psi_\alpha(R)},
\end{equation}
that leads to the non-weighted setting as in  \eqref{BE non-weight}-\eqref{Q non-weight}. \\

It is worthwhile to remark that in order to pass from the weighted to the non-weighted setting or vice versa,  it is not enough to only change distribution function by detaching or attaching   the weight. In addition, we need to reformulate the cross section using \eqref{Bnw preko Bw}. Therefore, one needs to be careful in choosing the cross section, since the collision operator \eqref{Q non-weight} hides the micro-reversible part \eqref{fun r R} in its cross section $\mathcal{B}^{nw}$.   In other words, cross section $\mathcal{B}^{nw}$ that appears in the kinetic model in the non-weighted setting is not the same as $\mathcal{B}^w$ from the kinetic model in the weighted setting. \\

Reformulation of the cross section stated in \eqref{Bnw preko Bw} has some mathematical consequences as well.  Namely, for the choice $\varphi(I) = I^\alpha$ the strong form of collision operator in the weighted setting \eqref{Q weight} has  singularity at zero for the variable $I$, while the collision operator in the non-weighted setting \eqref{Q non-weight} does  not have this drawback. 
The formula \eqref{Bnw preko Bw} reveals the reason for such a behavior:   the cross section in \eqref{Q non-weight} is multiplied by a factor that removes the singularity.\\

Therefore, we conclude that the equivalence of the models \eqref{BE weight}-\eqref{Q weight}  and  \eqref{BE non-weight}-\eqref{Q non-weight}  holds after the distribution function is renormalized as in \eqref{distr fun change}  and the cross section is reformulated by using the formula \eqref{Bnw preko Bw} for the choice \eqref{fi} of the weight function.

\section{Macroscopic models for a rarefied polyatomic gas}\label{Sec: macro}

As it is very well known, the kinetic models can provide  the macroscopic ones, using the so called moment method. The basic idea is to build infinite hierarchies of moment equations by integrating the Boltzmann equation over the microscopic molecular variables space -- velocity space in the case of a monatomic gas leading to one hierarchy of moments, or velocity and microscopic  internal energy space for a polyatomic gas causing the two hierarchies of moments \cite{MPC-Rugg-Sim}. These infinite  hierarchies are cut at some order of moments, that yield a non-closed system of moment equations. One of the possible ways to close the system  is to formulate a variational problem, the \emph{maximum entropy principle}, that seeks for a distribution function which maximizes the physical entropy subject to some constraints.  These constraints are actually   macroscopic moment densities related to the physical process at hand. Determination of the distribution function allows to obtain non-convective fluxes as functions of moment densities. Then it remains to calculate the production terms, when the choice of the cross section becomes a crucial aspect.\\

In Sections \ref{Sec: macro six} and \ref{Sec: macro 14} we study six and fourteen fields models, respectively. The six moments model corresponds to the usual equilibrium macroscopic observables, mass $\rho$, momentum $\rho \, U$ and energy density $\frac{1}{2}\rho \left|U\right|^2 + \rho e$ , and the dynamic pressure $\Pi$ as a dominant non-equilibrium variable. The fourteen moments model extends the  list of non-equilibrium effects, by taking into account the stress tensor $-p_{ij}$, $i,j\in \left\{ 1, 2, 3 \right\}$, and  the heat flux $q_i$, $i=1,2,3$. Our goal is to use the maximum entropy principle and to establish both six and fourteen moments models starting from the Boltzmann  equation in the non-weighted setting described in the Section  \ref{Sec: n-w model}.  \\

One of the main tools  in the maximum entropy principle procedure  is the physical entropy density defined as follows.
\begin{definition}[Physical entropy] For a distribution function $f\geq 0$ we define the
	physical entropy density
	\begin{equation}\label{entripy phy}
		h = - k   \int_{\mathbb{R}^3 \times [0,\infty)} f \, \log(f I^{-\alpha}) \, \mathrm{d}I \, \mathrm{d}v,
	\end{equation}
	where $k$ is the Boltzmann constant and $\alpha>-1$.
\end{definition}
The entropy law is obtained by integration of the Boltzmann equation \eqref{BE non-weight} against the test function $\log\left( f\, I^{-\alpha} \right)$ multiplied by  the factor $-k$, i.e.
\begin{equation*}
	\partial_t h + \sum_{j=1}^{3} \partial_{x_j} h_j = \Sigma, 
	\end{equation*}
where $h_j$ is the flux of  entropy density in the direction $x_j$ and $\Sigma$   is the  entropy production density, defined from the kinetic theory point of view via
\begin{equation}\label{entropy flux prod}
h_j = - k   \int_{\mathbb{R}^3 \times [0,\infty)} v_j f \, \log(f I^{-\alpha}) \, \mathrm{d}I \, \mathrm{d}v, \quad \Sigma = - k D^{nw}(f),
\end{equation}
where $D^{nw}(f)$  was already introduced in \eqref{entr prod nw}.\\
	
\subsection{Local equilibrium state}
Both six and fourteen fields models are designated for non-equilibrium processes. We introduce the notion of local equilibrium using the idea of  maximum entropy principle. Namely, we seek for a distribution function such that the physical entropy \eqref{entripy phy} is maximized subject to the prescribed macroscopic densities obtained as its moments against the collision invariants \eqref{weak form coll inv}. Formally, the functional space in which the distribution function is looked for  is the space  of  integrable functions weighted with the polynomials corresponding to the  collision invariants   \eqref{weak form coll inv} which ensures well-defined macroscopic densities.  

In fact, macroscopic densities are defined using the peculiar velocity $c$, which is a relative velocity  of the particle of velocity $v$ with respect to the macroscopic gas velocity $U$, namely 
\begin{equation*}
	c=v-U.
\end{equation*}

\begin{lemma}[Local equilibrium distribution function]
	The distribution function that  solves the following problem
	\begin{alignat}{3}\label{MEP eq: var problem}
	\max_f& \quad h = - k   \int_{\mathbb{R}^3 \times [0,\infty)} f \, \log(f I^{-\alpha}) \, \mathrm{d}I \, \mathrm{d}v  \nonumber\\
	\text{s.t.}  & \left( \begin{array}{c}
	\rho \\
	0_i \\
	\left( \alpha + \frac{5}{2} \right) p
	\end{array}\right)   =  \int_{\mathbb{R}^3 \times [0,\infty)}
	\left(
	\begin{array}{c}
	m \\
	m\, c_{i} \\
	\frac{m}{2} \left| {c} \right|^2 + I
	\end{array}
	\right)\,
	f \, \mathrm{d} I \, \mathrm{d}{v},
	\end{alignat}
	with the classical thermodynamic (hydrostatic) pressure $p$  related to the temperature $T$ through
	\begin{equation}\label{hydro p}
	p=	\frac{\rho}{m}\, k\, T,
	\end{equation}
is  the local  equilibrium distribution function,
	\begin{equation}\label{Maks}
	f_M(t, x, v, I) := I^\alpha  \frac{\rho}{m} \left( \frac{m}{2 \pi k T} \right)^{\frac{3}{2}} \frac{1}{\Gamma(\alpha + 1)} \frac{1}{\left(k T\right)^{\alpha+1}}
 \, e^{-\frac{1}{kT} \left( \frac{m}{2}  \left|v-U\right|^2 +  I \right)},
	\end{equation}
with $\alpha>-1$.
\end{lemma}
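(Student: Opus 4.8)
The plan is to solve the variational problem \eqref{MEP eq: var problem} by the method of Lagrange multipliers, read off the form of the optimizer, identify the three multipliers from the three constraints, and finally confirm by a Gibbs-type (relative entropy) argument that the resulting critical point is the \emph{unique} global maximizer.

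I would attach multipliers $\lambda\in\mathbb{R}$, $\mu\in\mathbb{R}^3$, $\nu\in\mathbb{R}$ to the mass, momentum and energy constraints and set the first variation of
\[
f\ \longmapsto\ -k\!\int f\log(fI^{-\alpha})\,\mathrm{d}I\,\mathrm{d}v-\int\Big(\lambda m+\mu\cdot mc+\nu\big(\tfrac{m}{2}|c|^2+I\big)\Big)f\,\mathrm{d}I\,\mathrm{d}v
\]
equal to zero. This gives the pointwise identity $-k\big(\log(fI^{-\alpha})+1\big)=\lambda m+\mu\cdot mc+\nu(\tfrac{m}{2}|c|^2+I)$, so that any critical point has the form $f=I^{\alpha}\exp(a+b\cdot c-\tfrac{\nu m}{2k}|c|^2-\tfrac{\nu}{k}I)$, that is, $I^{\alpha}$ times the exponential of a collision invariant. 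Crucially, the coefficients of $|c|^2$ and of $I$ are both governed by the single multiplier $\nu$, which is exactly the mechanism that produces one common temperature; integrability in $v$ forces $\nu>0$, whereupon integrability in $I$ follows from $\alpha>-1$, and we write $\nu=1/T$ with $T>0$.

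Next I would complete the square in $c$, absorbing the linear term $b\cdot c$ into a translation of the velocity variable; the momentum constraint $\int m c\,f=0$ forces this translation to vanish, confirming that $c=v-U$ is the correct peculiar variable and $b=0$. The mass constraint then pins down the constant $a$: performing the Gaussian integral $\int_{\mathbb{R}^3}e^{-\frac{m}{2kT}|c|^2}\mathrm{d}c=(2\pi kT/m)^{3/2}$ and the internal-energy integral $\int_0^\infty I^{\alpha}e^{-I/kT}\mathrm{d}I=(kT)^{\alpha+1}\Gamma(\alpha+1)$ yields precisely the prefactor $\frac{\rho}{m}\big(\frac{m}{2\pi kT}\big)^{3/2}\frac{1}{\Gamma(\alpha+1)(kT)^{\alpha+1}}$ appearing in \eqref{Maks}. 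Finally the energy constraint fixes $\nu$: with the same two integrals one computes $\int\frac{m}{2}|c|^2 f_M\,\mathrm{d}I\,\mathrm{d}v=\tfrac32\tfrac{\rho}{m}kT$ and $\int I\,f_M\,\mathrm{d}I\,\mathrm{d}v=(\alpha+1)\tfrac{\rho}{m}kT$, so the total internal-plus-kinetic energy density equals $(\alpha+\tfrac52)\tfrac{\rho}{m}kT$; matching this to the prescribed $(\alpha+\tfrac52)p$ forces $p=\tfrac{\rho}{m}kT$, i.e. \eqref{hydro p}, and identifies the parameter $T$ with the thermodynamic temperature. This produces \eqref{Maks} exactly.

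It remains to argue that this critical point is the maximizer and that it is unique. Since $s\mapsto s\log s$ is strictly convex, the functional $h$ is strictly concave on the (affine, hence convex) admissible set, so any critical point is automatically the unique global maximum. Equivalently, and without presupposing existence: for an arbitrary admissible $f$ write $h[f]-h[f_M]=-k\int f\log(f/f_M)\,\mathrm{d}I\,\mathrm{d}v-k\int (f-f_M)\log(f_M I^{-\alpha})\,\mathrm{d}I\,\mathrm{d}v$; the second integral vanishes because $\log(f_M I^{-\alpha})$ is a linear combination of the collision invariants $m,\ m c,\ \tfrac{m}{2}|c|^2+I$ while $f$ and $f_M$ share the same moments against them, and the first integral is $\le 0$ by the Gibbs inequality, with equality only when $f=f_M$ almost everywhere. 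I expect the genuinely delicate point to be not the algebra but the functional-analytic setup: one should work in the weighted $L^1$ class singled out just before the lemma so that the functionals are finite, the Lagrange formalism is legitimate, and the cancellation of the linear term above is justified; the identification of the constants is then routine Gaussian and Gamma-function integration.
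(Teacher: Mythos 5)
Your proposal is correct and follows essentially the same route as the paper: introduce Lagrange multipliers for the three constraints, solve the Euler--Lagrange equation to obtain $f=I^{\alpha}$ times the exponential of a collision invariant, and then determine the multipliers from the mass, momentum and energy constraints via the Gaussian and Gamma-function integrals. Your additional Gibbs-inequality argument establishing that the critical point is the unique global maximizer is a worthwhile supplement that the paper's proof leaves implicit, but it does not change the overall strategy.
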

\begin{proof}
	We follow the classical procedure of maximum entropy principle. Namely, we first 
introduce Lagrange multipliers $\lambda^{(0)}$, $\lambda^{(1)}_i$ and $\mu^{(0)}$ that  correspond to the constraints \eqref{MEP eq: var problem}.  Then the extended functional reads
\begin{multline*}
\mathcal{L} = \int_{\mathbb{R}^3 \times [0,\infty)  } \Bigg\{  - k  f \, \log(f I^{-\alpha})  
\\  - f \left( \lambda^{(0)} m + \sum_{i=1}^3  \lambda^{(1)}_i  m\, c_i + \mu^{(0)} \left(	\frac{m}{2} \left| {c} \right|^2 + I\right) \right)\Bigg\} \mathrm{d} I \, \mathrm{d}{c}.
\end{multline*}
The solution of the Euler-Lagrange equation $\delta\mathcal{L}/\delta f$ is given by
\begin{equation*}
f=I^\alpha e^{-1-\frac{m}{k} \lambda^{(0)}  - \frac{m}{k} \sum_{i=1}^3  \lambda^{(1)}_i  c_i - \frac{1}{k}  \mu^{(0)}  \left(  \frac{m}{2} \left| {c} \right|^2 + I \right)  }.
\end{equation*}
Plugging this form into the constraints of the problem \eqref{MEP eq: var problem} we get a system of algebraic equations whose solution allows to express  Lagrange multipliers in terms of macroscopic densities, which implies the solution \eqref{Maks}.
\end{proof}
\begin{remark}\label{Rem}
We mention that Lagrange multipliers do not depend on the choice of the functional space, and thus they coincide with the ones obtained using the kinetic  model in the weighted setting.
\end{remark}

The integration of the Boltzmann equation  against the collision invariants \eqref{weak form coll inv}  over the velocity-microscopic internal energy space yields the Euler system of equations for macroscopic densities appearing as the constraints of the problem \eqref{MEP eq: var problem}. This system is conservative  (i.e. production terms are all zero) which can be concluded from the kinetic theory point of view by  the vanishing  properties of the collision operator weak form \eqref{weak form coll inv}. 

Beyond test functions corresponding to the collision invariants, collision operator will have non-zero weak form and in order to compute production terms the cross-section needs to be chosen. 

\subsection{Cross-section model} 
In this paper,  the production terms are calculated for the following choice of the cross section 
\begin{multline}\label{model 3}
	\mathcal{B}^{nw}(v, v_*, I, I_*, r, R, \sigma)  \\
	= b\!\,\big(\tfrac{v-v_*}{\left| v-v_* \right|} \cdot \sigma\big)  \left( R^{\frac{\gamma}{2}} |v-v_*|^\gamma  + \left( r (1-R)\frac{I}{m} \right)^{\frac{\gamma}{2}} + \left( (1-r) (1-R)\frac{I_*}{m} \right)^{\frac{\gamma}{2}} \right),  
\end{multline}
with $ \gamma>0$,  and any $v, v_* \in \mathbb{R}^3$, $I,I_* \in [0,\infty)$, $r, R \in [0,1]$ and $\sigma \in S^2$. The function $b$ will be assumed integrable over the unit sphere $S^2$ in the case of six fields equations, whereas will be taken constant for the fourteen moments model. \\

The motivation for this choice of the cross-section lies in the recent mathematically rigorous  result from  \cite{MPC-IG-poly}. Namely, this model of the cross-section corresponds to the   cross section model 3 introduced  in \cite{MPC-IG-poly}, for which the existence and uniqueness of the solution to the Cauchy problem for the  Boltzmann equation \eqref{BE non-weight} in the space homogeneous setting with suitable initial data is proven. \\

Besides this mathematical motivation, a physical intuition can be provided by studying the collision frequency defined in \cite{MPC-IG-poly} for the polyatomic gas model. 

\subsubsection{Collision frequency} The collision frequency corresponding to the cross-section $	\mathcal{B}^{nw}$ from \eqref{model 3} for any distribution function $f$ is defined  with 
\begin{equation}\label{coll freq}
	\nu_{\gamma,\alpha}(v,I) = \int_{\mathbb{R}^3 \times [0,\infty) \times [0,1]^2 \times S^2} f_* \,	\mathcal{B}^{nw} \phi_\alpha(r) \, \psi_\alpha(R) (1-R) R^{\frac{1}{2}} \, \mathrm{d} \sigma \, \mathrm{d} r \, \mathrm{d} R\, \mathrm{d} I_* \, \mathrm{d}  v_*.
\end{equation}
The collision frequency can be explicitly computed in a state near equilibrium when   the distribution function $f$ is replaced by the  local Maxwellian $f_M$ introduced in  \eqref{Maks}, yielding to the physical interpretation of the cross-section $\mathcal{B}^{nw}$.  Indeed, choosing \eqref{model 3}, after the calculation presented in Appendix \ref{App coll fre}, we obtain the following expression 
\begin{multline} \label{coll freq1}
	\nu_{\gamma,\alpha}(v,I) =  \frac{\rho}{m}  \left\| b \right\|_{L^1(\mathrm{d}\sigma)}  \left( \frac{p}{\rho}  \right)^{\frac{\gamma}{2}}  \frac{\Gamma (\alpha +1)}{\Gamma\! \left(\frac{4 \alpha +\gamma +7}{2}\right)}      \\ \times
\left( \Gamma (\alpha +1) \Gamma \!\left(\frac{\gamma +3}{2}\right)^2 \frac{2^{\frac{\gamma}{2}+1}}{\sqrt{\pi}}     \, e^{-\frac{m}{2 k T} \vert v - U\vert^2   } \ _1F_1\! \left(\frac{\gamma+3}{2}, \frac{3}{2}, \frac{m \vert v- U \vert ^2}{2 k T}\right) 
\right. \\
\left. +  \frac{\sqrt{\pi } }{2} \Gamma \!\left(\alpha +\frac{\gamma }{2}+1\right) \left(  \left(\frac{I}{k T}\right)^\frac{\gamma}{2}+  \frac{\Gamma \!\left(\alpha+\frac{\gamma}{2}+1\right)}{\Gamma(\alpha+1) } \right) \right),   
\end{multline}
with the hypergeometric function $ _1F_1$ defined by  \eqref{hyp geo} in the Appendix.

\begin{figure}[t]
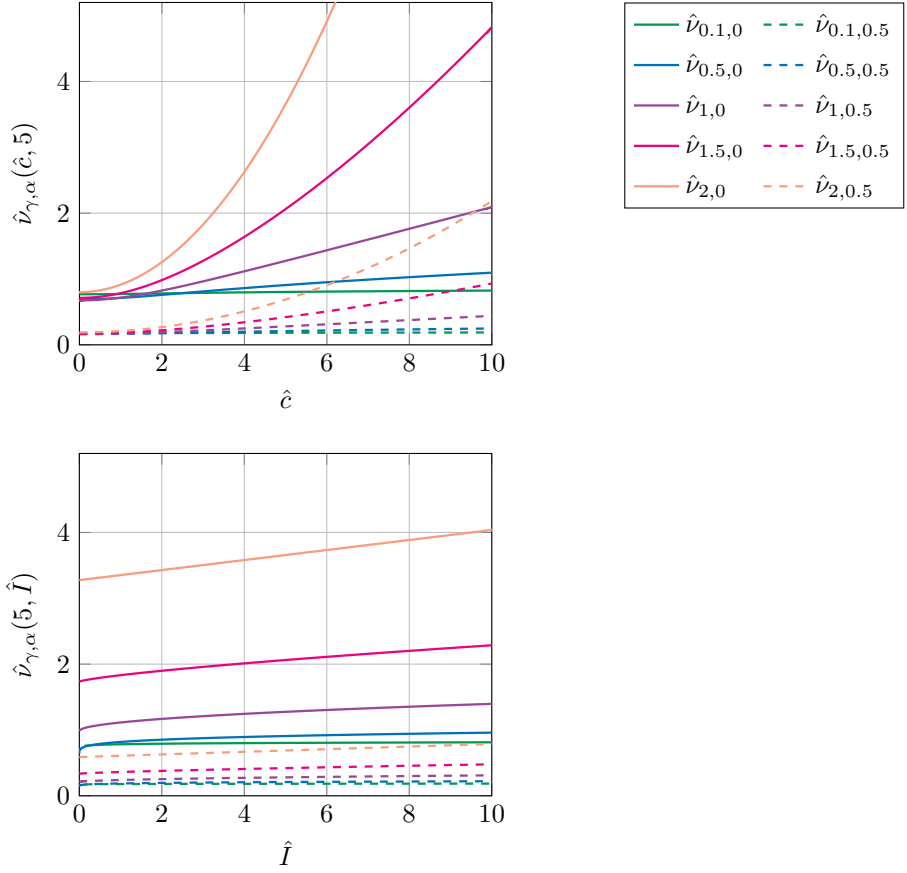

	\begin{center}


\end{center}
	 		\caption{The collision frequency $\hat{\nu}_{\gamma,\alpha}(\hat{c},\hat{I})$ defined in \eqref{coll freq2} as a function of  the dimensionless 
	 			 peculiar speed $\hat{c}  $ and   internal energy $\hat{I} $ introduced in  \eqref{dimless}  for certain values of $\gamma$ ranging in the interval $(0,2]$, and for   $\alpha=0$ corresponding to linear molecules (solid line)  and $\alpha=\frac{1}{2}$ corresponding  to non-linear molecules with translational and rotational degrees of freedom (dashed line).  }
	\label{figure-prvaslika}	
	\end{figure}

Introducing dimensionless peculiar speed and microscopic internal energy 
\begin{equation}\label{dimless}
	\hat{c} = \sqrt{\frac{m}{k T}} \left|v - U\right|, \qquad \hat{I}=\frac{1}{k T} I, 
\end{equation}
we can define the dimensionless collision frequency,
\begin{multline} \label{coll freq2}
	\hat{\nu}_{\gamma,\alpha}(\hat{c},\hat{I}) =   \frac{\Gamma (\alpha +1)}{\Gamma\! \left(\frac{4 \alpha +\gamma +7}{2}\right)}     
	\left( \Gamma (\alpha +1) \Gamma \!\left(\frac{\gamma +3}{2}\right)^2 \frac{2^{\frac{\gamma}{2}+1}}{\sqrt{\pi}}     \, e^{-\frac{\hat{c}^2}{2 }   } \ _1F_1\! \left(\frac{\gamma+3}{2}, \frac{3}{2}, \frac{\hat{c}^2}{2 }\right) 
	\right. \\
	\left. +  \frac{\sqrt{\pi } }{2} \Gamma \!\left(\alpha +\frac{\gamma }{2}+1\right) \left(  \hat{I}^\frac{\gamma}{2}+  \frac{\Gamma \!\left(\alpha+\frac{\gamma}{2}+1\right)}{\Gamma(\alpha+1) } \right) \right).
\end{multline}
Figure \ref{figure-prvaslika} represents the collision frequency $	\hat{\nu}_{\gamma,\alpha}(\hat{c},\hat{I})$  for certain values of parameters $\alpha$ and $\gamma$. One can observe that, for the fixed value of $\alpha$, the collision frequency is a monotonically increasing function of both the velocity and internal energy and the growth is stronger for larger $\gamma>0$. This means that fast particles and/or high energy particles are more likely to collide. Such a  behavior is known for a monatomic gas as well \cite{Str}.  

\section{Six fields model}\label{Sec: macro six}

The six fields model corresponds to the moment equations obtained by integrating the Boltzmann equation \eqref{BE non-weight} with respect to the microscopic set of variables -- molecular velocity $v$ and microscopic internal energy $I$, against six test functions,
\begin{equation}\label{test 6}
m, \ mv, \ m \left|v\right|^2, \ \frac{m}{2} \left| v\right|^2 + I.
\end{equation}
Introducing the peculiar velocity $c=v-U$,  we define densities of macroscopic observables that correspond to these test functions as follows
\begin{equation}\label{6 densities}
\left( \begin{array}{c}
	\rho \\
	\rho \, U \\
	3(p + \Pi) \\
\rho \, e
\end{array}\right)   =  \int_{\mathbb{R}^3 \times \mathbb{R}_+}
\left(
\begin{array}{c}
	m \\
	m\, v \\
	m\, \left| {c} \right|^2\\
	\frac{m}{2} \left| {c} \right|^2 + I
\end{array}
\right)\,
f \, \mathrm{d} I \, \mathrm{d}{v}.
\end{equation}
Now integration of the Boltzmann equation against test functions \eqref{test 6} leads to the following set of  six moments equations
\begin{equation}\label{6 fields non-closed}
\begin{alignedat}{2}
\partial_t \rho  +  \sum_{j=1}^{3} \partial_{x_j}(\rho\, U_j)&=0,\\
\partial_t \rho \, U_i + \sum_{j=1}^{3} \partial_{x_j} \left(\rho U_iU_j +p_{ij}\right)&=0,\\
\partial_t \left( \rho \vert U\vert^2 + 3(p+ \Pi) \right)  \hphantom{\sum_{j=1}^{3} \partial_{x_j} \left\{ U_j  \left(\rho \vert U\vert^2   +   3(p+ \Pi)\right)  + 2 \sum_{i=1}^3p_{ji} U_i  + \sum_{i=1}^3 p_{iij}        \right\}}&\\
+  \sum_{j=1}^{3} \partial_{x_j} \left\{ U_j  \left(\rho \vert U\vert^2   +   3(p+ \Pi)\right)  + 2 \sum_{i=1}^3p_{ji} U_i  + \sum_{i=1}^3 p_{iij}        \right\}&=\mathcal{P},\\
\partial_t \left( \frac{1}{2}\rho \vert U\vert^2 + \rho e  \right) +  \sum_{j=1}^{3} \partial_{x_j} \left\{  U_j\left(\frac{1}{2}\rho\vert\mathrm{U}\vert^2 + \rho\,e \right) + \sum_{i=1}^3p_{ji} U_i +q_j                   \right\}&=0,
\end{alignedat}
\end{equation}
for  $i=1,2,3$,  and where we have assumed the following relations
\begin{equation}\label{polytrop}
\sum_{i=1}^3 p_{ii} = 3 (p+\Pi), \quad \rho e = \left(\alpha+ \frac{5}{2}\right) p,
\end{equation}
and \eqref{hydro p}. 
The undetermined non-convective fluxes are defined as
\begin{equation}\label{6 fields non-conv flux}
\left( \begin{array}{c}
p_{ij} \\
\sum_{i=1}^3 p_{iij}\\
q_j
\end{array}\right)   =  \int_{\mathbb{R}^3 \times \mathbb{R}_+}
\left(
\begin{array}{c}
m \, c_i\, c_j \\
m\, \left| {c} \right|^2 c_j \\
\left( \frac{m}{2} \left| {c} \right|^2 + I \right) c_j
\end{array}
\right)\,
f \, \mathrm{d} I \, \mathrm{d}{c},
\end{equation}
while the production term reads
	\begin{equation}
\label{productionPii:equ}
\mathcal{P}=\int_{\mathbb{R}^3 \times [0,\infty) } m\, \vert v\vert^2 Q^{nw}(f,f)(v,I) \, \mathrm{d}I \mathrm{d}v.
\end{equation} 
The goal of this section is to provide a closure to the system \eqref{6 fields non-closed} via the maximum entropy principle, which is achieved by determining the six fields distribution function in the Lemma \ref{Prop: f6}, and then calculating non-convective fluxes \eqref{6 fields non-conv flux}, as much as the production term \eqref{productionPii:equ} for a specific choice of the cross section \eqref{model 3}. The obtained results are compared with the theory of extended thermodynamics  in Section \ref{Sec ET}.

\begin{lemma}[Six moments distribution function]\label{Prop: f6}
Solution of  the maximum entropy principle
\begin{alignat*}{3}
\max_f& \quad h = - k   \int_{\mathbb{R}^3 \times [0,\infty)} f \, \log(f I^{-\alpha}) \, \mathrm{d}I \, \mathrm{d}v \nonumber\\
\text{s.t.}  & \left( \begin{array}{c}
\rho \\
0_i \\
3(p + \Pi) \\
\left( \alpha+ \frac{5}{2}\right)p
\end{array}\right)   =  \int_{\mathbb{R}^3 \times \mathbb{R}_+}
\left(
\begin{array}{c}
m \\
m\, c_{i} \\
m\, \left| {c} \right|^2\\
\frac{m}{2} \left| {c} \right|^2 + I
\end{array}
\right)\,
f \, \mathrm{d} I \, \mathrm{d}{c}.
\end{alignat*}
is given with
\begin{multline}
\label{f six}
\hat{f}^6 = I^\alpha \frac{\rho}{m} \left( \frac{m}{2 \pi k T} \right)^{\frac{3}{2}} \frac{1}{\left( 1 + \frac{\Pi}{p} \right)^{\frac{3}{2}}} \frac{1}{\Gamma(\alpha + 1)} \frac{1}{\left(k T\right)^{\alpha+1}} \left(\frac{1}{1-\frac{3}{2} \frac{\Pi}{(\alpha+1)p}}\right)^{\alpha+1} 
\\
\times  e^{-\frac{1}{kT} \left( \frac{m}{2} \frac{1}{\left(1+\frac{\Pi}{p}\right) } \left|c\right|^2 + \left( \frac{1}{1-\frac{3}{2} \frac{\Pi}{(\alpha+1)p}} \right) I \right)},
\end{multline}
with $\alpha>-1$, that provides convergent moments if
\begin{equation}\label{range Pi/p}
-1 < \frac{\Pi}{p} < \frac{2}{3} \left(\alpha+1\right).
\end{equation}
\end{lemma}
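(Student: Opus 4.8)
The plan is to follow the classical maximum entropy procedure exactly as in the proof of the local equilibrium distribution function, but now with the extra constraint on $\sum_i p_{ii}=3(p+\Pi)$ in addition to the energy constraint. First I would introduce Lagrange multipliers $\lambda^{(0)}$, $\lambda^{(1)}_i$, $\lambda^{(2)}$ and $\mu^{(0)}$ associated with the four scalar/vector constraints (mass, zero peculiar momentum, momentum-flux trace, total energy), form the extended functional
\[
\mathcal{L}=\int_{\mathbb{R}^3\times[0,\infty)}\!\left\{-kf\log(fI^{-\alpha}) - f\Big(\lambda^{(0)}m+\sum_{i=1}^3\lambda^{(1)}_i m\,c_i+\lambda^{(2)}m|c|^2+\mu^{(0)}\big(\tfrac{m}{2}|c|^2+I\big)\Big)\right\}\mathrm{d}I\,\mathrm{d}c,
\]
and solve $\delta\mathcal{L}/\delta f=0$. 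This gives a Gaussian-in-$c$ times exponential-in-$I$ ansatz
\[
f=I^\alpha\exp\!\Big(-1-\tfrac{m}{k}\lambda^{(0)}-\tfrac{m}{k}\textstyle\sum_i\lambda^{(1)}_ic_i-\tfrac{m}{k}\lambda^{(2)}|c|^2-\tfrac{1}{k}\mu^{(0)}\big(\tfrac{m}{2}|c|^2+I\big)\Big),
\]
so that the velocity part has inverse-variance proportional to $2\lambda^{(2)}+\mu^{(0)}$ and the internal-energy part has decay rate $\mu^{(0)}/k$. The vector multiplier $\lambda^{(1)}_i$ must vanish since the peculiar momentum constraint is zero.

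Next I would impose the four constraints to pin down the multipliers in terms of $\rho$, $p$, $\Pi$, $T$. Writing the velocity ``temperature'' as $kT_{\mathrm{tr}}:=kT(1+\Pi/p)$ (so that $\int m|c|^2f=3(p+\Pi)=3\rho kT_{\mathrm{tr}}/m$) and the internal-energy rate from the energy constraint $\rho e=(\alpha+\tfrac52)p$: since the kinetic part contributes $\tfrac32\rho kT_{\mathrm{tr}}/m=\tfrac32(p+\Pi)$ to $\rho e$, the microscopic internal energy must contribute $\rho e-\tfrac32(p+\Pi)=(\alpha+1)p-\tfrac32\Pi$, which forces the $I$-integral $\int I^{\alpha+1}e^{-I/(kT_I)}\mathrm{d}I/\int I^\alpha e^{-I/(kT_I)}\mathrm{d}I=(\alpha+1)kT_I$ to equal $\big((\alpha+1)p-\tfrac32\Pi\big)/(\rho/m)$, i.e. $kT_I=kT\big(1-\tfrac{3}{2}\tfrac{\Pi}{(\alpha+1)p}\big)$. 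Substituting these back, normalizing by the mass constraint (which brings in the factors $(m/2\pi kT)^{3/2}$, $(1+\Pi/p)^{-3/2}$, $\Gamma(\alpha+1)^{-1}$, $(kT)^{-(\alpha+1)}$ and $(1-\tfrac32\tfrac{\Pi}{(\alpha+1)p})^{-(\alpha+1)}$ from the Gaussian and Gamma integrals), yields exactly \eqref{f six}.

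Finally I would address convergence of the defining moments. The velocity integrals converge iff the Gaussian is genuinely decaying, i.e. iff $1+\Pi/p>0$, giving the left inequality $\Pi/p>-1$. The internal-energy integrals $\int_0^\infty I^{\alpha+k}e^{-I/(kT_I)}\mathrm{d}I$ converge iff $kT_I>0$ (the $\alpha>-1$ condition already guarantees integrability at $I=0$), i.e. iff $1-\tfrac32\tfrac{\Pi}{(\alpha+1)p}>0$, which is the right inequality $\Pi/p<\tfrac23(\alpha+1)$. I would note that the higher-order fluxes needed later (through $|c|^2c_j$ and $(\tfrac m2|c|^2+I)c_j$, hence through $I^{\alpha+1}$) do not impose anything stronger, so \eqref{range Pi/p} is exactly the range on which $\hat f^6$ is a well-defined candidate. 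The only mildly delicate point is the bookkeeping that matches the two effective temperatures $T_{\mathrm{tr}}$ and $T_I$ to the prescribed $(p+\Pi)$ and $(\alpha+\tfrac52)p$ simultaneously and confirms consistency with \eqref{polytrop}; everything else is the standard Euler–Lagrange computation plus Gaussian and Gamma-function integrals, so I would not belabor it.
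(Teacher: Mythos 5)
Your proposal is correct and follows essentially the same route as the paper, which simply invokes the classical maximum entropy procedure already spelled out for the local equilibrium case (Lagrange multipliers, Euler--Lagrange equation, then inversion of the constraints), with the extra multiplier for the trace of the momentum flux producing the two effective temperatures $kT(1+\Pi/p)$ and $kT\bigl(1-\tfrac{3}{2}\tfrac{\Pi}{(\alpha+1)p}\bigr)$ exactly as you compute. Your identification of the convergence range \eqref{range Pi/p} from the positivity of the two decay rates also matches the paper's intent.
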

The proof is very similar to the one in \cite{Rugger-MEP, MPC-Madj-Sim}, by virtue of Remark \ref{Rem}. \\

This distribution function allows to close the system of equations \eqref{6 fields non-closed} for a particular choice of the cross section, as states the following proposition.
\begin{proposition}[Six fields system of equations ] Closed system of equations for six fields reads
	\begin{equation*}\label{oiler:sitem6}
\begin{alignedat}{2}
	\partial_t \rho  +  \sum_{j=1}^{3} \partial_{x_j}(\rho\, U_j)&=0,\\
	\partial_t \rho U_i + \sum_{j=1}^{3} \partial_{x_j} \left(\rho U_iU_j + (\Pi+ p)\,\delta_{ij}\right)&=0,\\
	\partial_t \left( \rho \vert U\vert^2 + 3(p+ \Pi) \right) +  \sum_{j=1}^{3} \partial_{x_j} \left\{ U_j  \left(\rho \vert U\vert^2   +   5(p+ \Pi)\right)            \right\}&=\mathcal{P},\\
	\partial_t \left( \frac{1}{2}\rho \vert U\vert^2 +\left( \alpha+ \frac{5}{2}\right)p \right) +  \sum_{j=1}^{3} \partial_{x_j} \left\{  U_j\left(\rho\frac{\vert\mathrm{U}\vert^2}{2}  + \left( \alpha+ \frac{7}{2}\right)p +\Pi  \right)                   \right\}&=0,
	\end{alignedat}
	\end{equation*}
for  $i=1,2,3$, 	where the production term $\mathcal{P}$ for the choice of the cross section \eqref{model 3} with the function $b \in L^1(\mathrm{d}\sigma)$, is given by
	\begin{equation}\label{P iz Prop}
	\mathcal{P} = - \mathcal{C}_\mathcal{P}    \frac{\rho^{2 }}{m} \left( \frac{p}{\rho}\right)^{\frac{\gamma}{2}+1}  \left\| b \right\|_{L^1(\mathrm{d}\sigma)}    \frac{\Pi}{p},
	\end{equation}
where $\mathcal{C}_\mathcal{P} $ is the positive function of $\frac{\Pi}{p}$ on the domain \eqref{range Pi/p},
	\begin{multline}\label{Cp}
\mathcal{C}_\mathcal{P} = \sqrt{\frac{2}{\pi}}  \frac{\left(\alpha+\frac52\right)}{\left(\alpha+1\right)}  \frac{1}{\Gamma\!\left( \frac{4\alpha+\gamma+9}{2} \right)} \\ \times \left( k_1 \left(2\left(1+\frac{\Pi}{p}\right)\right)^\frac{\gamma}{2} + k_2 \left(1-\frac{3}{2(\alpha+1)} \frac{\Pi}{p} \right)^{\frac{\gamma}{2}} \right),
\end{multline}
	with positive constants $k_1$ and $k_2$ depending on $\alpha>-1$ and $\gamma>0$,
	\begin{equation}\label{k1-k2}
	\begin{split}
	k_1 &=2^{\frac{\gamma+3}{2}} \, \Gamma\!(\alpha +2 )\,\Gamma\!\left(\alpha+1\right)\, \Gamma\!\left(\frac{\gamma+3}{2}\right)\,\Gamma\left(\frac{\gamma+5}{2}\right),\\
	k_2 &= \frac{3 \sqrt{2}}{4}	\pi\left( 2\,\alpha +\frac{\gamma}{2}+2\right) \, \Gamma\!\left(\alpha + \frac{\gamma}{2}+1\right)^2.
	\end{split}
	\end{equation}
\end{proposition}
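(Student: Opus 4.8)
The plan is to split the argument into two essentially independent parts: first establishing the closed system of balance equations, and then computing the production term $\mathcal{P}$ explicitly. For the first part, I would substitute the six moments distribution function $\hat{f}^6$ from Lemma \ref{Prop: f6} into the definitions of the non-convective fluxes \eqref{6 fields non-conv flux}. Because $\hat{f}^6$ is a product of a Gaussian in $c$ (with rescaled covariance $(1+\Pi/p)\, kT/m$) times a Gamma-type density in $I$, the moments are computed by elementary Gaussian and Gamma integrals. One finds $p_{ij} = (p + \Pi)\,\delta_{ij}$ (the pressure tensor is isotropic since $\hat{f}^6$ is isotropic in $c$), $\sum_i p_{iij} = 0$ (odd moment of a centered Gaussian), and for the heat-like flux $q_j$ one again gets $0$ by oddness in $c$. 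Plugging $p_{ij} = (p+\Pi)\delta_{ij}$, $\sum_i p_{iij}=0$, $q_j = 0$, together with the caloric relation $\rho e = (\alpha + \tfrac52)p$ from \eqref{polytrop} and the trace identity $\sum_i p_{ii} = 3(p+\Pi)$, into the non-closed system \eqref{6 fields non-closed} yields directly the displayed closed system; this step is routine bookkeeping.

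The substantive part is the production term \eqref{productionPii:equ}. Here I would use the weak form \eqref{weak form nw} of $Q^{nw}$ with test function $\chi(v,I) = m|v|^2$. Since $m|v|^2 = m|c|^2 + 2m\, c\cdot U + m|U|^2$ and $m$, $mc_i$ (hence $m\, c\cdot U$), and $\tfrac{m}{2}|c|^2 + I$ are collision invariants, only the $m|c|^2$ piece contributes after one subtracts an appropriate combination; more precisely $\mathcal{P} = \int m|c|^2 Q^{nw}(\hat f^6,\hat f^6)\,\mathrm{d}I\,\mathrm{d}v$ because the Galilean-non-invariant parts drop out via \eqref{weak form coll inv}. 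Then, using $|c|^2 = (\tfrac{m}{2}|c|^2 + I)\cdot\tfrac{2}{m} - \tfrac{2}{m}I$ and again that $\tfrac{m}{2}|c|^2+I$ is conserved, one reduces $\mathcal{P}$ to $-2\int I\, Q^{nw}(\hat f^6,\hat f^6)\,\mathrm{d}I\,\mathrm{d}v$ up to the constant factors that produce the prefactor $(\alpha+\tfrac52)/(\alpha+1)$ seen in \eqref{Cp}. Insert the weak form with $\chi = I$, expand the gain term $I'+I'_*$ using the Borgnakke--Larsen parametrization \eqref{I' I'_*}, namely $I' + I'_* = (1-R)E$ with $E = \tfrac{m}{4}|u|^2 + I + I_*$, and plug in the cross section \eqref{model 3} and the explicit $\hat f^6$.

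The main obstacle — the computational heart — is evaluating the resulting multidimensional integral over $(v, v_*, I, I_*, R, r, \sigma)$ against the rescaled Maxwellian $\hat f^6$. The $\sigma$-integral factors out as $\|b\|_{L^1(\mathrm{d}\sigma)}$. For the remaining integral I would pass to center-of-mass and relative variables $(V, u)$ as in \eqref{cm-rv}, and to dimensionless internal energies, so that the $V$-integral is a pure Gaussian and the $(u, I, I_*, r, R)$-integrals become a product of Beta integrals (in $r$ and $R$, using $\phi_\alpha(r)\psi_\alpha(R)$ and the extra $R^{\gamma/2}$, $(r(1-R))^{\gamma/2}$, etc.\ from \eqref{model 3}) and Gamma integrals (in $|u|$, $I$, $I_*$). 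The two summands $k_1$ and $k_2$ in \eqref{Cp} arise respectively from the kinetic-energy term $R^{\gamma/2}|u|^\gamma$ of the cross section and from the two internal-energy terms (which give equal contributions by the $r\leftrightarrow 1-r$ symmetry, hence a single $k_2$). The dependence on $\Pi/p$ enters only through the two distinct rescalings of the translational and internal parts of $\hat f^6$ — factors $(1+\Pi/p)$ and $(1 - \tfrac{3}{2(\alpha+1)}\Pi/p)$ — which is why these appear raised to the power $\gamma/2$ in \eqref{Cp}. Finally, the overall linear factor $\Pi/p$ in \eqref{P iz Prop} should be extracted by noting that at $\Pi = 0$ one has $\hat f^6 = f_M$, and the gain–loss balance forces $\mathcal{P}|_{\Pi=0} = 0$; checking that $\mathcal{C}_\mathcal{P}$ is bounded and strictly positive on the interval \eqref{range Pi/p} is immediate from the positivity of $k_1, k_2$ and of the two rescaling factors there. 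The detailed evaluation of these Beta and Gamma integrals is exactly what I would relegate to the Appendix, presenting here only the reduction strategy and the final closed form.
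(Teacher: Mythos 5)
Your proposal is correct and follows essentially the same route as the paper: the closure is obtained by plugging $\hat f^6$ into \eqref{6 fields non-conv flux} (giving $p_{ij}=(p+\Pi)\delta_{ij}$, $\sum_i p_{iij}=0$, $q_j=0$), and the production term is evaluated via the weak form, passage to center-of-mass/relative variables, and explicit Gaussian, Gamma and Beta integrals; your reduction $\mathcal{P}=-2\int I\,Q^{nw}\,\mathrm{d}I\,\mathrm{d}v$ is just the paper's integrand in disguise, since microscopic energy conservation gives $\tfrac{m}{2}\left(|c'|^2+|c'_*|^2-|c|^2-|c_*|^2\right)=-(I'+I'_*-I-I_*)$. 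One small caveat: the exact linear factor $\Pi/p$ in \eqref{P iz Prop} is \emph{not} a consequence of $\mathcal{P}|_{\Pi=0}=0$ alone (that only forces a zero at equilibrium, not linearity with a $\Pi/p$-dependent coefficient $\mathcal{C}_\mathcal{P}$); it comes out of the moment computation itself, where the $|u|^2$ and $I+I_*$ moments of $\hat f^6$ combine into $-\tfrac{1}{2}M^{-1}+(mN)^{-1}=-\tfrac{p}{\rho}\tfrac{\alpha+5/2}{\alpha+1}\tfrac{\Pi}{p}$ in the notation of the Appendix. Since you commit to carrying out those integrals anyway, this is a presentational imprecision rather than a gap.
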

\begin{proof} The proof easily follows by plugging distribution function $\hat{f}^6$ into the definitions of non-convective fluxes \eqref{6 fields non-conv flux}. Indeed, for $f=\hat{f}^6$ we obtain
\begin{equation*}
p_{ij}= (p+\Pi) \delta_{ij}, \quad \sum_{i=1}^3 p_{iij}=0, \quad q_j =0, \quad i,j=1,2,3.
\end{equation*}
Details of the calculation of the production term \eqref{productionPii:equ} is given in Appendix  \ref{App production 6}.
\end{proof}

\begin{proposition}[Entropy law for six fields model]
The physical entropy density and its flux for the distribution function $\hat{f}^6$ from \eqref{f six} are
\begin{multline}\label{h6}
h(\hat{f}^6) = -k\, \frac{\rho}{m} \left\{ -\left( \alpha +\frac{5}{2}\right) + \right. \\ \left.
\log \left( \frac{\rho}{m} \left( \frac{m}{2 \pi k T} \right)^{\frac{3}{2}} \frac{1}{\left( 1 + \frac{\Pi}{p} \right)^{\frac{3}{2}}} \frac{1}{\Gamma(\alpha + 1)} \frac{1}{\left(k T\right)^{\alpha+1}} \left(\frac{1}{1-\frac{3}{2} \frac{\Pi}{(\alpha+1)p}}\right)^{\alpha+1} \right) \right\},
\end{multline}
and 
\begin{equation}\label{hj6}
h_j(\hat{f}^6) = U_j h(\hat{f}^6), \quad j=1,2,3.
\end{equation}
Moreover, the entropy density production term for the   cross section \eqref{model 3} reads
\begin{equation}\label{Sigma prop}
\begin{split}
\Sigma = - \frac{k \rho}{2  m p }  \left(  1- \frac{3}{2 (\alpha + 1)} \frac{\Pi}{p} \right)^{-1}\left(  1 + \frac{\Pi}{p} \right)^{-1} \frac{\left(\alpha + \frac{5}{2}\right)}{\left(\alpha+1\right)} \frac{\Pi}{p} \, \mathcal{P}, 
\end{split}
\end{equation}
with $\mathcal{P}$ calculated in \eqref{P iz Prop}, and the non-negativity 
$$\Sigma\geq 0$$ holds for all $\frac{\Pi}{p} $ in the range \eqref{range Pi/p}.
 \end{proposition}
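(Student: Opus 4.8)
The plan is to compute the three objects in the statement --- the entropy density $h(\hat f^6)$, its flux $h_j(\hat f^6)$, and the entropy production $\Sigma$ --- directly by substituting the explicit six-moments distribution function $\hat f^6$ from \eqref{f six} into their kinetic definitions \eqref{entripy phy}, \eqref{entropy flux prod}, and then to extract sign information for $\Sigma$ by relating it algebraically to the already-computed production term $\mathcal{P}$ of \eqref{P iz Prop}.

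First I would establish \eqref{h6}. Taking the logarithm of $\hat f^6$ and using $\log(\hat f^6 I^{-\alpha})$ one gets a sum of a constant term (all the $T$-, $\rho$-, $\Pi$-dependent prefactors collected under a single $\log$) and the linear-in-$(|c|^2,I)$ exponent term. Inserting into \eqref{entripy phy}, the constant part integrates against $\int \hat f^6\,\mathrm dI\,\mathrm dv = \rho/m$, while the exponent part produces, after multiplying by the exponent coefficients, exactly the moments $\int \frac{m}{2}\frac{1}{1+\Pi/p}|c|^2 \hat f^6$ and $\int \frac{1}{1-\frac{3}{2}\frac{\Pi}{(\alpha+1)p}} I\,\hat f^6$. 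These are standard Gaussian/Gamma integrals for $\hat f^6$: the first evaluates to $\frac32 \frac{\rho}{m} kT$ and the second to $(\alpha+1)\frac{\rho}{m}kT$, so their sum divided by $kT$ gives $(\alpha+\frac52)\frac{\rho}{m}$, which is the $-(\alpha+\frac52)$ term in \eqref{h6} after the overall $-k$. The flux \eqref{hj6} is immediate: in \eqref{entropy flux prod} the integrand is $v_j \hat f^6 \log(\hat f^6 I^{-\alpha})$; writing $v_j = U_j + c_j$, the $c_j$ contribution vanishes because $\hat f^6$ is even in $c$ and $\log(\hat f^6 I^{-\alpha})$ is even in $c$ (it depends on $c$ only through $|c|^2$), leaving $U_j h(\hat f^6)$.

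For $\Sigma$, I would use $\Sigma = -k D^{nw}(\hat f^6)$ and the representation \eqref{entr prod nw}, i.e. $\Sigma = -k\int Q^{nw}(\hat f^6,\hat f^6)\log(\hat f^6 I^{-\alpha})\,\mathrm dI\,\mathrm dv$. Since $\log(\hat f^6 I^{-\alpha})$ is, up to an additive constant, a linear combination of the test functions $1$, $|c|^2$, and $\frac{m}{2}|c|^2+I$ (with coefficients read off from the exponent of $\hat f^6$), and since $Q^{nw}$ annihilates the collision invariants $m$, $mv$, $\frac{m}{2}|v|^2+I$ in the weak form \eqref{weak form coll inv}, only the piece proportional to $m|v|^2$ survives. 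Concretely, $m|v|^2 = m|c|^2 + 2m\,U\cdot c + m|U|^2$, and again the collision-invariant pieces $m$ and $mc$ (hence $mv$) drop out, so the nonzero contribution of $\log(\hat f^6 I^{-\alpha})$ is that of its $m|c|^2$-component, whose coefficient is exactly $-\frac{1}{2p(1+\Pi/p)}$ times (a factor) $\frac{1}{kT}\cdot kT$... --- more carefully, matching the exponent $-\frac{1}{kT}\big(\frac{m}{2}\frac{1}{1+\Pi/p}|c|^2 + \frac{1}{1-\frac{3}{2}\frac{\Pi}{(\alpha+1)p}} I\big)$ against a combination $a\, m|c|^2 + b(\frac{m}{2}|c|^2+I)$ fixes $b$ from the $I$-coefficient and then $a$ from the $|c|^2$-coefficient, and the surviving term is $a \int m|c|^2 Q^{nw}(\hat f^6,\hat f^6) = a\,\mathcal{P}$ (note $\int m|v|^2 Q^{nw} = \int m|c|^2 Q^{nw}$ by the same invariance). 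Collecting $-k$ times $a$ times $\mathcal P$ and simplifying the rational prefactors yields precisely \eqref{Sigma prop}. The final non-negativity then follows by combining the explicit sign of $\mathcal P$ from \eqref{P iz Prop}: $\mathcal P = -\mathcal{C}_{\mathcal P}(\cdots)\frac{\Pi}{p}$ with $\mathcal{C}_{\mathcal P}>0$ on \eqref{range Pi/p}, so $\mathcal P$ and $\frac{\Pi}{p}$ have opposite signs, while the prefactor in \eqref{Sigma prop} multiplying $\frac{\Pi}{p}\,\mathcal P$ is $-\frac{k\rho}{2mp}(1-\tfrac{3}{2(\alpha+1)}\tfrac{\Pi}{p})^{-1}(1+\tfrac{\Pi}{p})^{-1}\tfrac{\alpha+5/2}{\alpha+1}<0$ on the same range; hence $\Sigma = (\text{negative})\cdot\frac{\Pi}{p}\cdot\mathcal P = (\text{negative})\cdot(\text{negative, since }\tfrac{\Pi}{p}\mathcal P\le 0) \ge 0$.

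The main obstacle I expect is purely bookkeeping: correctly matching the coefficients in $\log(\hat f^6 I^{-\alpha})$ to the span of $\{1,\ m|c|^2,\ \tfrac{m}{2}|c|^2+I\}$ and keeping track of all the $(1+\Pi/p)$ and $(1-\tfrac{3}{2(\alpha+1)}\tfrac{\Pi}{p})$ factors so that the rational function multiplying $\mathcal P$ comes out exactly as in \eqref{Sigma prop} --- there is no conceptual difficulty once one observes that the collision-invariant components are killed by $Q^{nw}$, reducing everything to the single already-known integral $\mathcal P$. The verification that the range \eqref{range Pi/p} guarantees all the prefactors are finite and single-signed, and that $\mathcal{C}_{\mathcal P}>0$ there (which is visible from \eqref{Cp}--\eqref{k1-k2} since $k_1,k_2>0$ and both bracketed powers are positive on \eqref{range Pi/p}), closes the argument.
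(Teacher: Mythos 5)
Your proposal is correct and follows essentially the same route as the paper's Appendix B: the paper likewise evaluates $h(\hat f^6)$ by direct substitution using the moment constraints, and obtains $\Sigma$ by noting that $\log(\hat f^6 I^{-\alpha})$ equals a collision invariant plus a multiple of $m|c|^2$ (the paper phrases this via the identity $I'+I'_*-I-I_* = -\tfrac{m}{2}(|c'|^2+|c'_*|^2-|c|^2-|c_*|^2)$ in the weak form, yielding $D^{nw}(\hat f^6)=(\tfrac{m}{2}N-M)\mathcal P/m$, which is exactly your coefficient-matching computation). Your sign argument at the end also matches the paper's, which substitutes \eqref{P iz Prop} to exhibit $\Sigma$ as a positive multiple of $(\Pi/p)^2$.
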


\begin{proof}	
Plugging the distribution function \eqref{f six} into the definition of the entropy density \eqref{entripy phy} and its flux \eqref{entropy flux prod} we obtain \eqref{h6}--\eqref{hj6}. The production term $\Sigma$ is by virtue of \eqref{entropy flux prod} proved in the Appendix \ref{App production 6 entr}. Using the calculated $\mathcal{P}$ from \eqref{P iz Prop},  \eqref{Sigma prop} becomes
\begin{equation*}
\Sigma=  \frac{k}{2m}  \left(  1- \frac{3}{2 (\alpha + 1)} \frac{\Pi}{p} \right)^{-1}\left(  1 + \frac{\Pi}{p} \right)^{-1} \frac{\left(\alpha +\frac{5}{2}\right)^2}{\left(\alpha + 1\right)^2} C_{\mathcal{P}} \left(\frac{\Pi}{p} \right)^2 \geq 0,
\end{equation*}
for every $\Pi/p$ in the range of the validity of the model \eqref{range Pi/p}. 
\end{proof}

\subsection{Comparison with extended thermodynamics} \label{Sec ET}
In extended thermodynamics for six moments \cite{Rugg-Arima-Sugiyama-6-fields}, the non-equilibrium part of the entropy density was denoted with $\mathcal {K}$, for which the two conditions are prescribed: $(i)$ it vanishes for $\Pi=0$, and $(ii)$ it satisfies partial differential equation  (25) from \cite{Rugg-Arima-Sugiyama-6-fields}. Then applying the  entropy principle it was shown that the entropy production term $\Sigma$ is related to the production term $\mathcal{P}$ and the function $\mathcal {K}$ in the following way
\begin{equation}\label{residual}
\Sigma = \frac{1}{3} \frac{\partial \mathcal {K}}{\partial \Pi} \mathcal{P} >0.
\end{equation}

In this paper, starting from the kinetic theory we will determine the function $\mathcal {K}$, show that it satisfies PDE (25) from \cite{Rugg-Arima-Sugiyama-6-fields}, and then prove the above residual inequality \eqref{residual} for the production term $\mathcal{P}$ from  \eqref{P iz Prop}.   Moreover, the entropy production term of the form \eqref{residual} is equal to the one already calculated in \eqref{Sigma prop}. This  ensures compatibility of our kinetic, as much as macroscopic six fields model, with the extended thermodynamics.\\ 

Starting from the kinetic theory, one  solution of the PDE (25) from \cite{Rugg-Arima-Sugiyama-6-fields} can be found. As pointed out in \cite{Rugger-MEP, MPC-Madj-Sim},   function $\mathcal {K}$ can be a   difference of the entropy density evaluated at the distribution function $\hat{f}_6$ corresponding to the six fields problem   given in \eqref{f six}  and the local  equilibrium distribution function $f_M$ from \eqref{Maks}. In our notation, 
\begin{equation*}
\mathcal {K}\left(\rho, p, \Pi\right) := h(\hat{f}^6) - h(f_M).
\end{equation*}
 Using expressions \eqref{h6},  \eqref{Maks} and definition of the physical entropy \eqref{entripy phy}, we obtain the following form of $\mathcal {K}$,
\begin{equation*}
\label{kappa6}
\mathcal {K}(\rho, p, \Pi)= \frac{k \rho}{m} \log \left\{ \left(1+\frac{\Pi}{p}\right)^{\frac{3}{2}}\left(1-\frac{3 \Pi}{2(\alpha+1)p}\right)^{\alpha +1 }\right\}.
\end{equation*} 
Clearly, for $\Pi=0$ we get $ \mathcal {K}(\rho, p, 0)=0$.
It is also straightforward to prove that such  $\mathcal {K}$ satisfies PDE from \cite{Rugg-Arima-Sugiyama-6-fields}, which in our notation reads
\begin{equation*}
	\rho \, \frac{\partial \mathcal {K}}{\partial \rho} + \left(\frac{p + \Pi}{\left(\alpha +\frac{5}{2}\right)}+p\right) \frac{\partial \mathcal {K}}{\partial p}+ \left\{ (p + \Pi)\left(\frac{5}{3}-\frac{1}{\left( \alpha+\frac{5}{2} \right)}\right)-p\right\} \frac{\partial \mathcal {K}}{\partial \Pi}-\mathcal {K} + \frac{\Pi}{T}=0.
\end{equation*}
Then we can calculate the derivative
\begin{equation}\label{k der}
\frac{\partial \mathcal {K}}{\partial \Pi} = -\frac{3}{2 } \frac{k \rho}{m p}  \frac{\left(\alpha+\frac{5}{2}\right)}{\left(\alpha+1\right)}   \left(  1- \frac{3}{2 (\alpha + 1)} \frac{\Pi}{p} \right)^{-1}\left(  1 + \frac{\Pi}{p} \right)^{-1} \frac{\Pi}{p}. 
\end{equation} 
With the expression above, it is easy to see that the entropy production \eqref{residual} coincides with \eqref{Sigma prop}.

Therefore, we have proven that our kinetic model provides the six fields model compatible with the entropy principle from extended thermodynamics.

\subsection{Relaxation time} The relaxation time $\tau_{\Pi}$ for the non-equilibrium variable $\Pi$ is obtained by  linearizing the production term $\mathcal{P}$ around $\Pi=0$, that yields
\begin{equation*}
\overline{\mathcal{P}}=-\frac{1}{\tau_{\Pi}}  \Pi,
\end{equation*}
for 
	\begin{equation*}
	\tau_{\Pi}=\left\{     \frac{\rho}{m} \left(\frac{p}{\rho}\right)^{\frac{\gamma}{2}} \left\| b \right\|_{L^1(\mathrm{d}\sigma)}  \frac{\left(\alpha+\frac52\right)}{\left(\alpha+1\right)}\sqrt{\frac{2}{\pi}} \frac{1}{\Gamma\left( \frac{4\alpha+\gamma+9}{2} \right)}\left(2^{\frac{\gamma}{2}} k_1+ k_2 \right)  \right\}^{-1},
	\end{equation*}
	where constants $k_1$ and $k_2$ are from \eqref{k1-k2}.

\section{Fourteen moments model}\label{Sec: macro 14}

The macroscopic model of fourteen moments is obtained by extending the list of test functions given in \eqref{test 6} that will allow to take into account evolution equations for momentum  and heat  fluxes. Namely, instead of \eqref{test 6} we  consider the following test functions
\begin{equation}\label{test 14}
m, \ mv, \ m \, v_i \, v_j, \ \frac{m}{2} \left| v\right|^2 + I, \left(\frac{m}{2} \left| v\right|^2 + I\right)v_j.
\end{equation}
If addition to \eqref{6 densities} and \eqref{6 fields non-conv flux} we define the non-convective fluxes
\begin{equation}\label{14 densities}
\left( \begin{array}{c}
p_{ijk} \\
q_{ij}
\end{array}\right)   =  \int_{\mathbb{R}^3 \times \mathbb{R}_+}
\left(
\begin{array}{c}
m\, c_i\, c_j \, c_k\\
\left(\frac{m}{2} \left| c\right|^2 + I\right)c_ic_j
\end{array}
\right)\,
f \, \mathrm{d} I \, \mathrm{d}{v},
\end{equation}
where  $c=v-U$. 
Then integration of the Boltzmann equation \eqref{BE non-weight} against the test functions \eqref{test 14} yields system of equations governing the 14 moments, namely,
\begin{align*}
& \partial_{t} \rho + \sum_{j=1}^3\partial_{x_j} (\rho \, U_{j}) = 0,
\nonumber \\
& \partial_{t} (\rho\, U_{i}) +  \sum_{j=1}^3 \partial_{x_j}
(\rho\, U_{i} \, U_{j} + p_{ij}) = 0,
\\
& \partial_{t} \left( \rho \, U_{i}\, U_{j} + p_{ij} \right)
+\sum_{k=1}^3  \partial_{x_k} \left\{ \rho \, U_{i} \, U_{j}\, U_{k}
+ U_{i} p_{jk} + U_{j} p_{ki} + U_{k} p_{ij} + p_{ijk}\right\}  =P_{ij},
\nonumber
\end{align*}
\begin{align}
& \partial_{t} \left( \frac{1}{2} \rho \left|U\right|^{2} +
\rho e \right) +  \sum_{i=1}^3 \partial_{x_i} \left\{ \left(
\frac{1}{2} \rho \left| U \right|^{2} + \rho e \right)
U_{i} +\sum_{j=1}^3 p_{ij} U_{j} + q_{i} \right\} = 0,
\label{14 mom non closed} \\
&
	 \partial_{t} \left\{ \left(
\frac{1}{2} \rho \left|U\right|^{2} + \rho e \right)
U_{i} + \sum_{j=1}^3 p_{ij} U_{j} + q_{i} \right\}
+  \sum_{j=1}^3 \partial_{x_j} \left\{ \left( \frac{1}{2} \rho \left|U\right|^{2}
+ \rho e \right) U_{i} U_{j} \right.
\nonumber
\\
& \left. +\sum_{k=1}^3 \left( U_{i} U_{k} p_{jk} + U_{j} U_{k} p_{ik}\right)
+ \frac{1}{2} \rho \left|U\right|^{2} p_{ij}+ \sum_{k=1}^{3}U_k p_{ijk}+q_iU_j + q_j Ui +q_{ij} \right\} = Q_i.      
\nonumber
\end{align}
with the production terms
\begin{equation}\label{14 prod}
\left( \begin{array}{c}
P_{ij} \\
Q_{i}
\end{array}\right)   =  \int_{\mathbb{R}^3 \times \mathbb{R}_+}
\left(
\begin{array}{c}
m\, v_i\, v_j\\
\left(\frac{m}{2} \left| v\right|^2 + I\right) v_i
\end{array}
\right)\,
Q^{nw}(f,f)(v,I)  \, \mathrm{d} I \, \mathrm{d}{v},
\end{equation}
for any $i,j=1,2,3$. 
Our aim is to close the system above in an approximative setting by exploiting the maximum entropy principle and then proceeding with the  appropriate linearization of the distribution function around the local equilibrium. Production terms are linearized as well and calculated for the cross section \eqref{model 3}. Linearization is unavoidable since the exact solution of the variational problem does not yield convergent moments.

\begin{lemma}[Fourteen moments distribution function]\label{Prop: f14}
The solution of  the maximum entropy principle
\begin{alignat*}{3}
\max_f& \quad h = - k   \int_{\mathbb{R}^3 \times [0,\infty)} f \, \log(f I^{-\alpha}) \, \mathrm{d}I \, \mathrm{d}v \nonumber\\
\text{s.t.}  & \left( \begin{array}{c}
\rho \\
0_i \\
\left( \alpha+ \frac{5}{2}\right)p\\
p_{ij}\\
q_i
\end{array}\right)   =  \int_{\mathbb{R}^3 \times \mathbb{R}_+}
\left(
\begin{array}{c}
m \\
m\, c_{i} \\
\frac{m}{2} \left| {c} \right|^2 + I\\
m \, c_{i} c_{j} \\
\left(\frac{m}{2} \left| {c} \right|^2 + I \right) c_i 
\end{array}
\right)\,
f \, \mathrm{d} I \, \mathrm{d}{c},
\end{alignat*}
where $\alpha>-1$ and $\sum_{i=1} ^3 p_{ii}=3(p+\Pi)$, linearized around the local equilibrium state $f_M$ defined in  \eqref{Maks} is given with
\begin{multline}
\label{f fourteen}
\hat{f}^{14}\approx {f}^{14}  = f_M \left\{1-\frac{\rho}{p^2} q \cdot c -\frac{3}{2\left(\alpha+1\right)}\frac{\Pi \rho}{m p^2}\left(\frac{m}{2}\vert c \vert^2+I\right) \right.\\
\left.+\frac{\rho}{2p^2} \sum_{i,j=1}^3 \left( p_{\langle ij \rangle}+\frac{\left(\alpha + \frac{5}{2}\right)}{\left(\alpha+1\right)} \delta_{ij} \Pi \right)c_i c_j +\left(\alpha +\frac{7}{2}\right)^{-1} \frac{\rho^2}{m p^3} q\cdot c \left(\frac{m}{2} \left| {c} \right|^2 + I \right) \right\},
\end{multline}
where the following notation is used 
\begin{equation*}
 p_{\langle ij \rangle} = p_{ij} - \delta_{i j} \frac{1}{3} \sum_{\ell=1} ^3 p_{\ell \ell} =  p_{ij} - (p+\Pi) \delta_{i j}. 
\end{equation*}
\end{lemma}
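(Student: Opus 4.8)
The strategy is the standard constrained-maximisation argument followed by a Taylor expansion. First I would introduce Lagrange multipliers for all the constraints: a scalar $\lambda^{(0)}$ for mass, a vector $\lambda^{(1)}_i$ for the (vanishing) momentum, a scalar $\mu^{(0)}$ for the energy $\frac{m}{2}|c|^2+I$, a symmetric tensor $\mu^{(2)}_{ij}$ for $m\,c_ic_j$, and a vector $\mu^{(3)}_i$ for the heat-flux moment $(\frac{m}{2}|c|^2+I)c_i$. Exactly as in the local-equilibrium lemma, the Euler--Lagrange equation $\delta\mathcal{L}/\delta f = 0$ forces the maximiser to have the form
\begin{equation*}
\hat{f}^{14} = I^\alpha \exp\!\left(-1-\tfrac1k\Big[m\lambda^{(0)} + m\,\lambda^{(1)}_i c_i + \mu^{(0)}\big(\tfrac m2|c|^2+I\big) + \mu^{(2)}_{ij}\,m\,c_ic_j + \mu^{(3)}_i\big(\tfrac m2|c|^2+I\big)c_i\Big]\right),
\end{equation*}
with the factor $I^\alpha$ coming from the $\log(fI^{-\alpha})$ in the entropy. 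By Remark~\ref{Rem} the multipliers here agree with those of the weighted-setting model, so I may quote the known expressions. The exponent is not a quadratic form in $(c,I)$ because of the cubic-type term $\mu^{(3)}_i|c|^2 c_i$, which is precisely why the exact $\hat{f}^{14}$ gives divergent moments and why linearisation is needed.

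Next I would set the multipliers equal to their equilibrium values plus small deviations proportional to the non-equilibrium fields $\Pi$, $p_{\langle ij\rangle}$, $q_i$, write $\hat{f}^{14} = f_M\,\exp(\chi)$ where $\chi$ collects all the deviation terms, and expand $\exp(\chi)\approx 1+\chi$ to first order. The zeroth-order part reproduces $f_M$ from \eqref{Maks}; the first-order part is a linear combination of $c_i$, $(\frac m2|c|^2+I)$, $c_ic_j$, and $(\frac m2|c|^2+I)c_i$ with coefficients that are linear in the deviations of the multipliers.

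Then comes the bookkeeping step: impose the constraints on ${f}^{14} = f_M(1+\chi)$ to pin down those coefficients in terms of $\Pi$, $p_{\langle ij\rangle}$, $q_i$. This uses only Gaussian moments in $c$ against $f_M$ together with the $I$-moments $\int_0^\infty I^n e^{-I/kT}I^\alpha\,\mathrm{d}I = (kT)^{\alpha+n+1}\Gamma(\alpha+n+1)$, exactly as in the partition function $Z(T)$. The decomposition $\sum_i p_{ii} = 3(p+\Pi)$ splits the stress constraint into its traceless part (giving the $p_{\langle ij\rangle}$ coefficient) and its trace part (contributing the $\Pi$ pieces that appear both in the $(\frac m2|c|^2+I)$ term and, through the factor $\frac{(\alpha+5/2)}{(\alpha+1)}$, in the $c_ic_j$ term); matching the energy constraint fixes the remaining $\Pi$-coefficient; matching the heat-flux constraint fixes the $q$-coefficients, where the factor $(\alpha+7/2)^{-1}$ emerges from the ratio of the relevant $|c|^2(\frac m2|c|^2+I)$-type moment to the $|c|^2$-moment. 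Collecting everything yields \eqref{f fourteen}.

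\textbf{Main obstacle.} The genuinely delicate point is not the algebra but the \emph{consistency of the linearisation}: one must verify that the linear ansatz $f_M(1+\chi)$ can simultaneously satisfy all fourteen constraints with coefficients uniquely determined by $(\Pi,p_{\langle ij\rangle},q)$, i.e. that the Jacobian relating multiplier-deviations to field-deviations is invertible at equilibrium. This is where the two heat-flux-type structures ($q\cdot c$ and $q\cdot c\,(\frac m2|c|^2+I)$) interact, and one has to check that the $2\times 2$ linear system coupling their coefficients (arising from the two independent moments $\int f_M c_i c_j\,(\cdots)$ and $\int f_M c_i c_j |c|^2(\cdots)$, together with the mixed $I$-moments) is nonsingular — this is what makes the $(\alpha+7/2)^{-1}$ coefficient well-defined. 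I would also note in passing that the resulting ${f}^{14}$ need not be non-negative, but this is acceptable since it is used only as an approximate closure for computing moments, not as a genuine probability density.
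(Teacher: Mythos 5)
Your proposal is correct and follows essentially the same route as the paper, which itself defers to the standard maximum entropy argument of \cite{MPC-Rugg-Sim} (Lagrange multipliers, exponential maximiser with the $I^\alpha$ prefactor, linearisation around $f_M$, and determination of the coefficients by matching the fourteen constraints via Gaussian and Gamma-function moments), invoking Remark~\ref{Rem} to transfer the multipliers from the weighted setting. Your added remarks on the invertibility of the linear system for the two heat-flux structures and on the loss of positivity of ${f}^{14}$ are consistent with that argument and do not change the approach.
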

The proof of this lemma follows the one given in \cite{MPC-Rugg-Sim}, by Remark \ref{Rem}. \\

The distribution function \eqref{f fourteen} enables to close the system of equations corresponding to fourteen moments \eqref{14 mom non closed} in the linearized form, for the chosen cross section as in \eqref{model 3} under an additional assumption of the constant angular function $b\big(\tfrac{u}{\left| u \right|}\cdot\sigma\big)$. Namely, the following proposition holds.
\begin{proposition}[Fourteen fields system of equations]\label{Prop 14 fields} Closed system of equations for fourteen moments reads
	\begin{align*}
	& \partial_{t} \rho + \sum_{j=1}^3\partial_{x_j} (\rho \, U_{j}) = 0,
	\nonumber \\
	& \partial_{t} (\rho\, U_{i}) +  \sum_{j=1}^3 \partial_{x_j}
	(\rho\, U_{i} \, U_{j} + p_{ij}) = 0,
	\\
	& \partial_{t} \left( \rho \, U_{i}\, U_{j} + p_{ij} \right)
	+\sum_{k=1}^3  \partial_{x_k} \left\{ \rho \, U_{i} \, U_{j}\, U_{k}
	+ U_{i} p_{jk} + U_{j} p_{ki} + U_{k} p_{ij} \right.
	\nonumber \\
	&  \quad \left. + \left( \alpha+\frac{7}{2} \right)^{-1}
	\left( q_{i} \delta_{jk} + q_{j} \delta_{ki}
	+ q_{k} \delta_{ij} \right) \right\}  =P_{ij},
	\nonumber
	\end{align*}
	\begin{align*}
	& \partial_{t} \left( \frac{1}{2} \rho \left|U\right|^{2} +
	\rho e \right) +  \sum_{i=1}^3 \partial_{x_i} \left\{ \left(
	\frac{1}{2} \rho \left| U \right|^{2} + \rho e \right)
	U_{i} +\sum_{j=1}^3 p_{ij} U_{j} + q_{i} \right\} = 0,
	\nonumber \\
	& \partial_{t} \left\{ \left(
	\frac{1}{2} \rho \left|U\right|^{2} + \rho e \right)
	U_{i} + \sum_{j=1}^3 p_{ij} U_{j} + q_{i} \right\}
	+  \sum_{j=1}^3 \partial_{x_j} \left\{ \left( \frac{1}{2} \rho \left|U\right|^{2}
	+ \rho e \right) U_{i} U_{j} \right.
	\\
	&\quad \left. +\sum_{k=1}^3 \left( U_{i} U_{k} p_{jk} + U_{j} U_{k} p_{ik}\right)
	+ \frac{1}{2} \rho \left|U\right|^{2} p_{ij} 
	+ \left( \alpha +\frac{9}{2}\right)\left( \alpha+\frac{7}{2}\right)^{-1}\left(q_i\, U_j +q_j\, U_i \right)\right. \\
	&\quad \left.+ \left( \alpha +\frac{7}{2} \right)^{-1}  \delta_{i j} \sum_{k=1}^3 q_k U_k+ \left(  \alpha+\frac{9}{2} \right) \frac{p}{\rho} \, p_{ij}
	- \frac{p^{2}}{\rho} \, \delta_{ij} \right\} = Q_i,  
	\nonumber
	\end{align*}
	for $i,j=1,2,3$, where we have assumed the  relations \eqref{polytrop}. The production terms for the cross section \eqref{model 3}  with $b\!\,\big(\tfrac{u}{\left| u \right|} \cdot \sigma\big) =K$, $K$ is a constant, linearized around the global equilibrium state \eqref{Maks} read
	 \begin{multline} \label{PijFINAL}
	 \overline P_{ij}^{14}=-K \frac{\rho}{m} \left(\frac{p}{\rho}\right)^{\frac{\gamma}{2}}  \frac{\sqrt{\pi}}{\Gamma\!\left(\frac{4\alpha+\gamma+9}{2}\right)} \left\{\frac{1}{15} (4\alpha+\gamma+7)  \left(2^{\gamma+2} (\gamma+5) n_1 + 15 n_2  \right) p_{\langle ij \rangle}  \right. \\
	 \left. +  \left(\alpha+\frac{5}{2}\right) \left( \frac{2^{\gamma+4}}{3}  n_1 + \frac{(4\alpha +\gamma+4)}{(\alpha+1)}n_2\right)  \Pi\, \delta_{ij} \right\},
	 \end{multline}
	 \begin{multline}\label{QiFINAL}
	 \overline Q_{i}^{14}=\sum_{k=1}^3 U_k\, \overline P_{ki}^{14}-K \frac{\rho}{m}  \left(\frac{p}{\rho}\right)^{\frac{\gamma}{2}} q_i     \left( \alpha +\frac{7}{2}\right)^{-1}  \frac{\sqrt{\pi}}{72 \, \Gamma\!\left(\frac{4\alpha+\gamma+9}{2}\right)}  \\
	 \times  \left(2^{\gamma+5}((4\alpha+\gamma)(3\alpha+\gamma)+57\alpha+15\gamma+60) n_1 \right. \\
	 \left.+9((4\alpha+\gamma)(2(4\alpha+\gamma)+\gamma^2+38)+7\gamma^2+160) n_2\right),
	 \end{multline}
	 with the positive constants depending on $\alpha>-1$ and $\gamma>0$,
	 \begin{equation}\label{const n}
	 n_1 = \Gamma\!\left(\alpha+1\right)^2 \Gamma \!\left(\frac{\gamma+3}{2}\right)\Gamma \!\left(\frac{\gamma+5}{2}\right), \quad
	 n_2 = \pi \Gamma\! \left(\alpha +\frac{\gamma}{2}+1\right)^2.
	 \end{equation}
\end{proposition}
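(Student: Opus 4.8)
The plan is to close the fourteen moments system in two independent pieces: first the purely kinematic (convective) closure, obtained by substituting the linearized distribution function $f^{14}$ from \eqref{f fourteen} into the non-convective flux definitions \eqref{6 fields non-conv flux} and \eqref{14 densities}, and second the computation of the linearized production terms $\overline P_{ij}^{14}$ and $\overline Q_i^{14}$ from \eqref{14 prod} using the cross section \eqref{model 3} with constant angular part. For the first piece I would compute the Gaussian-type integrals $\int f^{14}\, m\, c_i c_j c_k\, \mathrm{d}I\,\mathrm{d}c$ and $\int f^{14}\,(\tfrac{m}{2}|c|^2+I)c_i c_j\,\mathrm{d}I\,\mathrm{d}c$ term by term. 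Odd moments of $f_M$ vanish, so only the terms in $f^{14}$ that are odd in $c$ contribute to $p_{ijk}$ (namely the $q\cdot c$ and $q\cdot c\,(\tfrac m2|c|^2+I)$ terms), giving $p_{ijk}=(\alpha+\tfrac72)^{-1}(q_i\delta_{jk}+q_j\delta_{ki}+q_k\delta_{ij})$; similarly $q_{ij}$ picks up contributions from the $\Pi$, $p_{\langle ij\rangle}$ and $f_M$ pieces, producing the symmetric combination $(\alpha+\tfrac92)\tfrac{p}{\rho}p_{ij}-\tfrac{p^2}{\rho}\delta_{ij}$ together with the heat-flux transport terms. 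These are standard moment identities for the Maxwellian $f_M$ in \eqref{Maks}, in which the $I$-integration factors out through $Z(T)$ and the Gamma function, so they require only bookkeeping with the tensorial Wick contractions of $\mathbb R^3$ Gaussians.

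For the production terms, the plan is to use the weak form \eqref{weak form nw} with test functions $\chi=m\,v_iv_j$ and $\chi=(\tfrac m2|v|^2+I)v_i$, writing $v=U+c$ and exploiting Galilean covariance so that $\overline Q_i^{14}=\sum_k U_k \overline P_{ki}^{14}+$ (an intrinsic, frame-independent part), which reduces the heat-flux production to a calculation in the peculiar variable. Then I would substitute $f=f_M$ in one slot and the linear perturbation in the other (this is exactly what ``linearized around the global equilibrium'' means: $\overline P_{ij}^{14}$ is the Fréchet derivative of the bilinear collision production evaluated at $f_M$, acting on the non-equilibrium part of $f^{14}$). Because $f_M$ is a collision equilibrium, the zeroth-order term vanishes, and the first-order term is linear in $\Pi$, $p_{\langle ij\rangle}$, $q_i$. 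By isotropy of $f_M$ and of the cross section \eqref{model 3} (whose angular part is now constant, hence $\|b\|_{L^1(\mathrm d\sigma)}=4\pi K$), the $\Pi$- and $p_{\langle ij\rangle}$-channels decouple and each is multiplied by a scalar relaxation coefficient; likewise the $q_i$-channel. The three velocity-power terms $R^{\gamma/2}|u|^\gamma$, $(r(1-R)I/m)^{\gamma/2}$, $((1-r)(1-R)I_*/m)^{\gamma/2}$ in \eqref{model 3} contribute additively, the first yielding the ``$n_1$'' constants (Gaussian relative-velocity integrals times $\Gamma(\tfrac{\gamma+3}2)\Gamma(\tfrac{\gamma+5}2)$ from the $r,R$ integrations against $\phi_\alpha\psi_\alpha$) and the last two yielding the ``$n_2$'' constants ($\pi\,\Gamma(\alpha+\tfrac\gamma2+1)^2$), which matches the structure of \eqref{const n}, \eqref{PijFINAL}, \eqref{QiFINAL}. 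These computations parallel, and can borrow the change-of-variables machinery from, the six-fields production calculation in Appendix \ref{App production 6} and the collision-frequency computation in Appendix \ref{App coll fre}.

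Concretely I would proceed as follows. Step 1: establish the convective closure identities for $p_{ijk}$ and $q_{ij}$ from \eqref{f fourteen}, and insert them into \eqref{14 mom non closed} to get the stated left-hand sides. Step 2: pass to the center-of-mass frame, apply Lemma \ref{Lemma coll mapping} and the measure invariance Lemma \ref{Lemma measure inv} to symmetrize the weak form, and reduce the four-fold $(v_*,I_*,r,R,\sigma)$ integral to a product of a Gaussian integral in the relative velocity and Beta-type integrals in $r,R$. Step 3: expand the bracket $(\chi'+\chi'_*-\chi-\chi_*)$ for the two choices of $\chi$, keeping only terms linear in the non-equilibrium moments, and collect the scalar coefficients; here the identity $\int f_M\,(\text{polynomial in }c,I)$ is evaluated via \eqref{Maks}. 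Step 4: assemble the coefficients into \eqref{PijFINAL}--\eqref{QiFINAL}, checking the prefactor $\tfrac{\rho}{m}(p/\rho)^{\gamma/2}$ and the normalization $\Gamma(\tfrac{4\alpha+\gamma+9}{2})$ coming from the full velocity--internal-energy normalization of $f_M$. I expect the main obstacle to be Step 3: correctly organizing the tensor contractions so that the traceless part $p_{\langle ij\rangle}$ and the trace part $\Pi\,\delta_{ij}$ receive the right (and different) rational coefficients in $\alpha,\gamma$ — in particular the factors $\tfrac1{15}(4\alpha+\gamma+7)$ versus $(\alpha+\tfrac52)$ in \eqref{PijFINAL} — and ensuring the two internal-energy channels in \eqref{model 3} combine exactly into the $n_2$ constant without spurious cross terms; this is where sign errors and misplaced Gamma factors are most likely, and where the bulk of the (deferred) routine calculation lives.
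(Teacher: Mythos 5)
Your proposal is correct and follows essentially the same route as the paper: the convective closure comes from taking moments of the linearized $f^{14}$, and the production terms come from linearizing $\hat f^{14\prime}\hat f^{14\prime}_{*}-\hat f^{14}\hat f^{14}_{*}$ around $f_M f_{M*}$, then reducing via the center-of-mass change of variables, the $\sigma$-, $r$-, $R$-integrations against $\phi_\alpha\psi_\alpha$, and the $I,I_*,|u|$ Gaussian/Gamma integrals, exactly as in Appendix~\ref{App production14 mom}. The only detail worth noting is that the paper realizes your Step 3 concretely through the isotropic representation $\mathcal{P}_{ijkl}=\mathcal{P}_1\delta_{ij}\delta_{kl}+\mathcal{P}_2(\delta_{ik}\delta_{jl}+\delta_{il}\delta_{jk})$, recovering $\mathcal{P}_1,\mathcal{P}_2$ from the two trace contractions $\sum_{r,t}\mathcal{P}_{rrtt}$ and $\sum_{r,t}\mathcal{P}_{rtrt}$, which is the precise bookkeeping device that separates the $\Pi\,\delta_{ij}$ and $p_{\langle ij\rangle}$ coefficients you were worried about.
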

\begin{proof}
The non-convective fluxes are obtained by plugging the distribution function \eqref{f fourteen} into their definition \eqref{14 densities},
\begin{equation*}
\begin{split}
p_{ijk} & =  \left( \alpha+\frac{7}{2} \right)^{-1}
\left( q_{i} \delta_{jk} + q_{j} \delta_{ki}
+ q_{k} \delta_{ij} \right),\\
q_{ij} &=  \left(  \alpha+\frac{9}{2} \right) \frac{p}{\rho} \, p_{ij}
-\frac{p^{2}}{\rho} \, \delta_{ij},
\end{split}
\end{equation*}
while the production terms are calculated in the Appendix \ref{App production14 mom}.
\end{proof}

\subsection{Relaxation times and transport coefficients}
In sense of extended thermodynamics and the theory of hyperbolic systems of balance laws \cite{Rugg-Poly}, production terms can be represented in the following form
\begin{equation*}
\overline P_{ij}^{14}=-\frac{1}{\tau_s} p_{\langle ij \rangle}- \frac{1}{\tau_{\Pi}} \Pi \delta_{ij}, \quad \overline Q_{i}^{14}=\sum_{k=1}^3 U_k \overline P_{ki}^{14}-\frac{1}{\tau_q} q_i,
\end{equation*}
where $\tau_s, \tau_{\Pi}, \tau_q$ are appropriate relaxation times.  It is also known that relaxation times can be related to the transport coefficients - shear viscosity $\mu$, bulk viscosity $\nu$, and heat conductivity $\kappa$ in the following manner,    
\begin{equation*}
\mu=p \, \tau_s, \quad \nu =\frac{4(\alpha+1)}{3(2\alpha+5)} \, p \, \tau_{\Pi}, \quad \kappa = \left(\alpha+\frac{7}{2}\right)\frac{p^2}{\rho T} \, \tau_q .
\end{equation*}
In extended thermodynamics those parameters are of phenomenological nature. Starting from the Boltzmann equation,  calculation of the production terms allows to obtain their explicit expressions, that will depend on $\alpha>-1$ related to the number of internal degrees of freedom and potential $\gamma>0$ from the cross section \eqref{model 3}. More precisely,  from \eqref{PijFINAL} and \eqref{QiFINAL} it can be easily  recognized
\begin{multline}\label{visco}
\mu=\frac{m }{K} \left( \frac{p}{\rho} \right)^{1-\frac{\gamma}{2}}   \frac{\Gamma\!\left(\frac{4\alpha+\gamma+9}{2}\right)}{\sqrt {\pi}} 15
\left( (4\alpha+\gamma+7)\left(2^{\gamma+2} (\gamma+5) n_1+15\, n_2  \right)\right)^{-1},
\end{multline}
\begin{multline*}
 \nu =   \frac{m }{K} \left( \frac{p}{\rho} \right)^{1-\frac{\gamma}{2}}  \frac{2(\alpha+1)^2}{3(\alpha+\frac{5}{2})^2}  \frac{\Gamma\!\left(\frac{4\alpha+\gamma+9}{2}\right)}{\sqrt {\pi}}\\
\times \left(\frac{2^{\gamma+4}}{3} (\alpha+1)  n_1+(4\alpha +\gamma+4) n_2 \right)^{-1},
\end{multline*}
\begin{multline*}
\kappa=  \frac{k }{K}  \left( \frac{p}{\rho} \right)^{1-\frac{\gamma}{2}}  \left(\alpha+\frac{7}{2}\right)^2 \frac{72 \Gamma\!\left(\frac{4\alpha+\gamma+9}{2}\right)}{\sqrt{\pi}} \\
\times  \left(2^{\gamma+5}((4\alpha+\gamma)(3\alpha+\gamma)+57\alpha+15\gamma+60) n_1 \right. \\
	 \left.+9((4\alpha+\gamma)(2(4\alpha+\gamma)+\gamma^2+38)+7\gamma^2+160) n_2\right)^{-1},
\end{multline*}
where $n_1$ and $n_2$ are from \eqref{const n}.

\subsection{Prandtl number} One of   tests for the validity of the transport coefficients given above is to verify the value of the Prandtl number, defined in our notation as
\begin{equation}\label{Pr tr coef}
\text{Pr}=\left(\alpha+\frac{7}{2}\right) \frac{k}{m}\frac{\mu}{\kappa}.
\end{equation}
From the other side, theoretical value of the Prandtl number for polyatomic gases can be obtained by Eucken's relation that in our notation reads
\begin{equation}\label{Pr al}
\text{Pr} = \frac{4\alpha  + 14}{4 \alpha + 19}.
\end{equation}
The goal is to find a $\gamma>0$ such that for certain values of $\alpha>-1$ the two expressions \eqref{Pr tr coef} and \eqref{Pr al} are equal. 

The value of $\alpha$ is related to  modes of a polyatomic molecule, as shows   Table \ref{Tab}.  

\begin{center}
	\begin{table}[h]
		\caption{Number of degrees of freedom $D$ for different modes (combinations of translation/rotation/vibration) where $\mathcal{N} \geq 2$  is the number of atoms in a polyatomic molecule, with the corresponding value of $\alpha=\frac{D-5}{2}$,  theoretical value of the Prandtl number from \eqref{Pr al}  and the value of $\gamma$ enabling that this theoretical value of the Prandtl number coincides with the one given in \eqref{Pr tr coef}, i.e. enabling that the two expressions  \eqref{Pr tr coef} and \eqref{Pr al} are equal.}
	\begin{tabular}{| m{3.5cm}||m{1.6cm}|| m{1.6cm}||m{2cm}| @{}m{0cm}@{} } \hline
		& \multicolumn{2}{c||}{Translation and rotation} & \multirow{2}{2cm}{Translation, rotation and vibration} & \\[5pt] \cline{2-3}
		 & Linear molecule &  Non-linear molecule & & \\[5pt] \hline
		 \hfil  Degrees of freedom   & \hfil 5  & \hfil6 & \hfil$3\mathcal{N}$ & \\[5pt] \hline
		\hfil 	  $\alpha$  & \hfil0 &\hfil $\frac{1}{2}$ &\hfil $\frac{1}{2}(3 \mathcal{N}-5) $ &   \\[5pt]  \hline
	\hfil	   $\text{Pr}$ from \eqref{Pr al}   & \hfil$\frac{14}{19}$ &\hfil $\frac{16}{21}$ & \hfil$\frac{6\mathcal{N}+4}{6\mathcal{N}+9}$& \\[5pt] \hline
		\hfil    $\gamma$   & \hfil$2.153$ &\hfil $2.368$ & \hfil Table \ref{Tab 2}& \\[5pt] \hline
	\end{tabular}
\label{Tab}
\end{table}
\end{center}

As it can be seen in Table \ref{Tab}, for $\alpha=0$ the theoretical value of the Prantdtl number $14/19$ is obtained from \eqref{Pr tr coef} by taking $\gamma=2.153$. Considering $\alpha=\frac{1}{2}$,  the value $\gamma=2.368$ in \eqref{Pr tr coef} recovers $\text{Pr} = \frac{16}{21}$. When vibrational modes are also taken into account, for any number of atoms $\mathcal{N}\geq 2$ we can find the value $\gamma>0$ such that the correct value of the Prandtl number \eqref{Pr al} is obtained, as shows Table \ref{Tab 2}.

\begin{center}
	\begin{table}[h]
		\caption{ The number  $\mathcal{N}$ of atoms in  a   polyatomic molecule and the corresponding value of potential $\gamma$ such that the  theoretical value of the Prandtl number from \eqref{Pr al} is equal to the one in \eqref{Pr tr coef}.}
		\begin{tabular}{ |c | c | c | c | c | c | c |c | c|  } \hline
	$\mathcal{N}$ & 3 & 4 & 5 & 6& 7& 8 & 9 & 10 \\ \hline 
	$\gamma$ & 4.063 & 9.469 & 17.262&25.801 & 34.705 & 43.835 &53.123&62.526\\ \hline
		\end{tabular}
	\label{Tab 2}
	\end{table}

\end{center}

\subsection{Dependence of the shear viscosity on temperature} Another physical validity of the proposed model can be provided by studying  temperature dependence of the shear viscosity \eqref{visco}. Our goal is to compare the shear viscosity \eqref{visco}  with experimental data given in \cite{Cha-Cow} for  the room temperature range 293-373K, and in \cite{Exper-2, Exper-1} for high temperatures in the range 600-2000K.

The shear viscosity issuing from the kinetic theory \eqref{visco},  provides the following dependence on temperature,
\begin{equation}\label{visco expo}
\mu \sim T^{1-\frac{\gamma}{2}},
\end{equation}
where $\gamma$ is related to the choice of the cross-section \eqref{model 3} with constant angular part, as stated in Proposition \ref{Prop 14 fields}. We point out  that such a relation makes sense only if $\gamma<2$, since  it is observed that shear viscosity of gases increases as temperature grows \cite{Cha-Cow}.\\

In \cite{Cha-Cow}  the following relation  is assumed,
\begin{equation}\label{visco exp s}
\mu \sim T^s,
\end{equation}
and in 	Table 14, page 232,    experimental  values for $s$ on the temperature range 293-373 K are provided. Direct comparison of \eqref{visco expo} and \eqref{visco exp s}  gives the relation between $s$ and $\gamma$,
\begin{equation}\label{s-gamma}
\gamma=-2s+2.
\end{equation}
Note that the comparison is only possible for $s<1$, in order to ensure positivity of $\gamma$. \\

For different polyatomic molecules, the goal is to  adjust the value of $\gamma$ in order to match the experimentally measured $s$ related to  $\gamma$ by virtue of \eqref{s-gamma}. Combining this $\gamma$ with $\alpha$ coming from the structure of a molecule  gives the value of the Prantdl number using \eqref{Pr tr coef}. That value can be compared to the theoretical one obtained in \eqref{Pr al}. Table \ref{Table 3} shows the results.

\begin{center}
	\begin{table}[h]
		\caption{Experimental values of $s$ \cite{Cha-Cow} for different molecules revealing the  dependence of the shear viscosity upon  temperature $\mu \sim T^s$ given in \eqref{visco exp s}, the corresponding  value of $\gamma$ by virtue of \eqref{s-gamma},  and the Prandtl number from \eqref{Pr tr coef}. This value of the Prandtl number is further compared to the theoretical one \eqref{Pr al} and the relative error is provided.}
	
	\end{center}
	\caption{Shear viscosity as a function of the temperature in the form $\mu(T)=A T^s$. Points on the plot are experimentally observed values \cite{Exper-1, Exper-2}, while  solid lines represent fitted curves.  }
	\label{figure-data}
\end{figure}
The fitted value of $s$  leads to the corresponding  value of $\gamma$ for which we compute Prandtl number and compare it to the value given by Eucken's relation. The results are shown in   Table \ref{Tab 4}.

\begin{center}
	\begin{table}[h]
		\caption{ Value of experimental observed parameter $s$ \cite{Exper-1,Exper-2}  and the corresponding value of $\gamma$ by means of \eqref{s-gamma},  expressing the  dependence of shear viscosity $\mu$ of the shape \eqref{visco expo} upon high temperature for different molecules and the Prandtl number from \eqref{Pr tr coef}. This value of the Prandtl number is further compared to the theoretical one \eqref{Pr al} and the relative error is provided.}
		\begin{tabular}{ |c || c || c || c || c|| c |} \hline
			Gas & $s$ & $\gamma$ &  Pr from \eqref{Pr tr coef} & Pr from \eqref{Pr al} & Relative error \\ \hline 
			H$_2$ & 0.688  & 0.624 & 0.847 & 0.762 & 11.2\%   \\ \hline 
			N$_2$ & 0.684 & 0.704 & 0.846 & 0.762 &  11.0\% \\ \hline 
			CO$_2$ & 0.7  & 0.599& 0.894 & 0.815& 9.7\%   \\ \hline 
			CH$_4$ & 0.689 & 0.419 & 0.930 & 0.872& 6.8\%   \\ \hline 
		\end{tabular}
	\label{Tab 4}
	\end{table}
\end{center}

We conclude that for a fixed polyatomic molecule, i.e. fixing $\alpha$, the same value of  $\gamma$ can provide agreement with the experimental data concerning dependence of shear viscosity on temperature and the value of the Prandtl number, which coincides with the theoretical one given by Eucken's relation \eqref{Pr al} at a relative error ranging from 6.8 - 11.3\%.  These results are valid at any temperature range as long as the power in \eqref{visco exp s} is less than one.\\

We mention that the similar analysis was performed in \cite{MPC-Sim} for the kinetic model in the non-weighted setting and the model for the cross-section containing one free parameter that could be matched in order to recover the correct  temperature dependence of the viscosity, yet  only for CO,  yielding $\textrm{Pr}=0.781$, which is in an satisfactory agreement with the theoretical value for diatomic gases $\textrm{Pr} = 0.737$ from Table \ref{Table 3}. We also remark that in \cite{MPC-Sim} the exponent of temperature in the experimental viscosity relation \eqref{visco exp s} depends on $\alpha$ which is not the case here as it can be seen in \eqref{visco expo}, because of the additional term in the collision operator weak form present in the non-weighted setting that  involves $I^\alpha I_*^\alpha$ which subtracts dependence on $\alpha$.  Therefore,  our results improve the ones from \cite{MPC-Sim}, since the described analysis applies to  all gases for which $s<1$ in \eqref{visco exp s}, using experimental data \cite{Cha-Cow, Exper-1, Exper-2}.  \\ 
  
Finally, we put in evidence the key of success of this analysis. We first introduce the difference of the two expressions for the Prandtl number \eqref{Pr tr coef} and \eqref{Pr al},
\begin{equation*}
\Delta(\gamma, \alpha)= \left(\alpha+\frac{7}{2}\right) \frac{k}{m}\frac{\mu}{\kappa}  -  \frac{4\alpha  + 14}{4 \alpha + 19}.
\end{equation*}
 Tables \ref{Tab} and \ref{Tab 2} show the values $\gamma^*$ such that $\Delta(\gamma^*, \alpha)=0$ for  the fixed $\alpha$. It can be observed that  $\gamma^*$ grows with the increase of $\alpha$. However, the  analysis of viscosity  dependence on temperature requires $\gamma<2$ and at the same time provides satisfactory agreement with the value of the Prandtl number given by Eucken's relation \eqref{Pr al}. The reason is that the cross-section model \eqref{model 3} used in this paper yields that, for any fixed value of $\alpha>-1$,  the difference  $\Delta(\gamma,\alpha)$ is close to zero when $\gamma \in(0,2)$, which is illustrated at  the Figure  \ref{figure-pr}. This difference can be possibly reduced  for an another model of the cross-section, or taking a different approximation of transport coefficients, for instance as in \cite{Str-Ra}.

\begin{figure}[t]
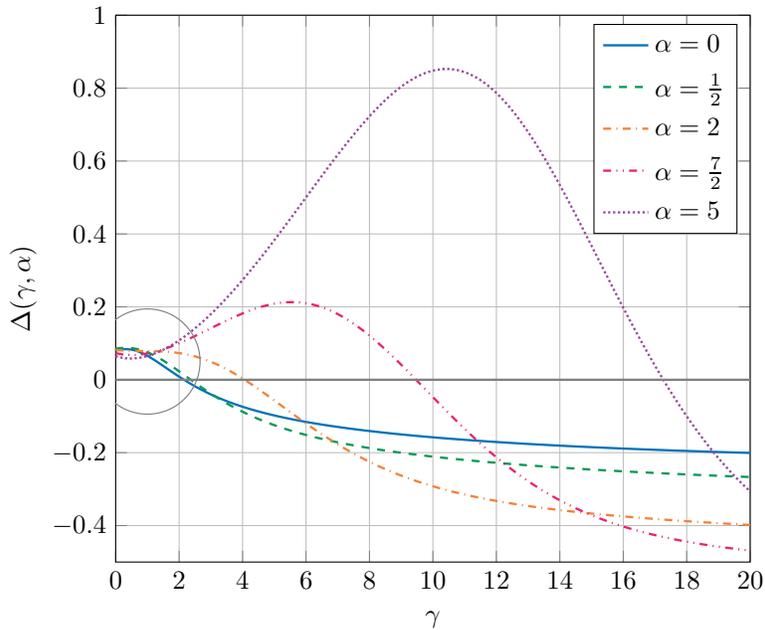

	\begin{center}
	
	\end{center}
	\caption{Dependence of $\Delta(\gamma,\alpha)$ in $\gamma$ for certain values of $\alpha$, with a particular emphasize on the behavior for $0<\gamma<2$.  }
	\label{figure-pr}
\end{figure}

\section{ Acknowledgments} 
The authors would like to thank Prof. Srboljub Simi\'c and Prof. Manuel Torrilhon  for fruitful discussions on the topic and to Prof. Thierry Magin for enlightening   physical aspects of polyatomic gas modelling. Authors thank to the Department of Mathematics and Informatics, Faculty of Sciences, University of Novi Sad for the hospitality. 
V. Djordji\'c would like to acknowledge that this publication is based upon work from COST Action CA18232 MAT-DYN-NET, supported by COST (European Cooperation in Science and Technology), through  the STSM (Short Term Scientific Mission) at the RWTH Aachen University. M. Pavi\'c-\v Coli\'c acknowledges the financial support of the Ministry of Education, Science and Technological Development of the Republic of Serbia (Grant No. 451-03-68/2020-14/ 200125), and the support of the Program for Excellent Projects of Young Researchers (PROMIS) of the Science Fund of the Republic of Serbia within the project MaKiPol \#6066089, \textit{Mathematical methods in the kinetic theory of polyatomic gas mixtures: modelling, analysis and computation}. N. Spasojevi\' c    acknowledges support from Oden Institute at the University of Texas Austin and funding by  DOE DE-SC0016283 project \textit{Simulation Center for Runaway Electron Avoidance and Mitigation}.

\appendix

\section{Computation of the collision frequency}\label{App coll fre}

For the sake of simplicity, the equilibrium distribution function \eqref{Maks} will be written as
\begin{equation}\label{Maxwell dmsless}
f_M=I^{\alpha} L_0  e^{-\frac{1}{k T} (\frac{m}{2} \vert c\vert^2 +I)} \quad \text{where} \quad L_0= \frac{\rho}{m} \left( \frac{m}{2 \pi k T} \right)^{\frac{3}{2}} \frac{1}{\Gamma(\alpha + 1)} \frac{1}{\left(k T\right)^{\alpha+1}}.
\end{equation}

Our goal is to compute the collision frequency defined in \eqref{coll freq}, evaluated for the equilibrium distribution function \eqref{Maxwell dmsless} and the  cross section \eqref{model 3},
\begin{multline}\label{coll freqA1}
\nu_{\gamma , \alpha}(v, I)= L_0  \int_{\mathbb{R}^3 \times [0,\infty) \times [0,1]^2 \times S^2} I_*^{\alpha}  e^{-\frac{1}{k T} (\frac{m}{2} \vert v_*-U\vert^2 +I_*)} \\	 \times b\!\,\big(\tfrac{v-v_*}{\left| v-v_* \right|} \cdot \sigma\big)  \left( R^{\frac{\gamma}{2}} |v-v_*|^\gamma  + \left( r (1-R)\frac{I}{m} \right)^{\frac{\gamma}{2}} + \left( (1-r) (1-R)\frac{I_*}{m} \right)^{\frac{\gamma}{2}} \right) \\
\times (r(1-r))^{\alpha} \, (1-R)^{2\alpha+1}  R^{\frac{1}{2}} \, \mathrm{d} \sigma \, \mathrm{d} r \, \mathrm{d} R\, \mathrm{d} I_* \, \mathrm{d}  v_*.
\end{multline}
For the constant issuing from the integration respect to $r$ and $R$, we introduce the following notation
\begin{multline}\label{const r R}
C_{(a,b,c)} = \int_{[0,1]^2} (r(1-r))^{ \alpha} \,  (1-R)^{2\alpha+1} R^{\frac{1}{2}} (1-R)^a R^b r^c  \mathrm{d} r\, \mathrm{d} R \\= \frac{\Gamma\!\left(  2\alpha+a +2\right)\Gamma\!\left(b+\frac{3}{2}\right) \Gamma\!\left(\alpha+c+1\right) \Gamma\!\left(\alpha+1\right) }{\Gamma\!\left( 2\alpha+a+b+\frac{7}{2} \right) \Gamma\!\left(2\alpha+c+2\right) },
\end{multline}
where $\Gamma$ stands for the Gamma function.
Then the collision frequency \eqref{coll freqA1} when integrated respect to $\sigma, r, R$ becomes
\begin{multline*}
\nu_{\gamma , \alpha}(v, I)=  L_0 \left\| b \right\|_{L^1(\mathrm{d}\sigma)} \int_{\mathbb{R}^3 \times [0,\infty)} I_*^{\alpha}  e^{-\frac{1}{k T} (\frac{m}{2} \vert v_*-U\vert^2 +I_*)}\\
\times \left( C_{(0, \frac{\gamma}{2},0)} |v-v_*|^\gamma +C_{(\frac{\gamma}{2},0, \frac{\gamma}{2})}\left( \left(\frac{I}{m}\right)^\frac{\gamma}{2}+\left(\frac{I_*}{m}\right)^\frac{\gamma}{2}\right) \right) \mathrm d I_* \; \mathrm d v_*.
\end{multline*}
Next, integration with  respect to $I_*$ leads to
\begin{multline}\label{pomocna coll}
\nu_{\gamma , \alpha}(v, I)=  \frac{\rho}{m} \left( \frac{m}{2 \pi k T} \right)^{\frac{3}{2}} \frac{ \left\| b \right\|_{L^1(\mathrm{d}\sigma)}}{m^\frac{\gamma}{2}} \int_{\mathbb{R}^3}  e^{-\frac{m}{2 k T}  \vert v_*-U\vert^2} 
 \left(C_{(0, \frac{\gamma}{2},0)} m^\frac{\gamma}{2} |v-v_*|^\gamma \right. \\
  \left. +C_{(\frac{\gamma}{2},0, \frac{\gamma}{2})}\left(  I^\frac{\gamma}{2}+(k T)^{  \frac{\gamma}{2}} \frac{\Gamma\!\left(\alpha+\frac{\gamma}{2}+1\right)}{\Gamma(\alpha+1) } \right) \right)  \mathrm{d} v_*.
\end{multline}
It remains to perform the integration with respect to the velocity $v_*$, and there appear  two integrals to calculate
\begin{equation*}
\mathcal I_1 = \int_{\mathbb R^3} e^{-\frac{m}{2 k T}  \vert v_*-U\vert^2} \mathrm d v_*, \quad \mathcal I_2 =\int_{\mathbb R^3} e^{-\frac{m}{2 k T}  \vert v_*-U\vert^2} \vert v-v_*\vert ^\gamma \mathrm{d} v_*.
\end{equation*}
For the first integral $\mathcal I_1$ we immediately get the result, while for the second one $\mathcal I_2$ we   first change the velocity $v_*$ into the relative velocity $u=v-v_*$ and use the peculiar velocity $c=v-U$ to obtain
\begin{equation*}
\mathcal I_2=\int_{\mathbb{R}^3}  e^{-\frac{m}{2 k T}  \vert v_*-U\vert^2}  |v-v_*|^\gamma \mathrm d v_* =\int_{\mathbb{R}^3}  e^{-\frac{m}{2 k T}  \vert u-c\vert^2}  |u|^\gamma \mathrm{d}u .
\end{equation*}
Then the relative velocity $u$ is expressed in terms of the spherical coordinates with the zenith $c$, and the angle $\varphi$ between $\frac{u}{\left| u \right|}$  and $\frac{c}{\left| c \right|}$  so that $\cos \varphi = \frac{u}{\left| u \right|} \cdot  \frac{c}{\left| c \right|} $,
\begin{equation*}
\mathcal I_2 = 2 \pi \int_{[0,\infty) \times [0, \pi) } e^{-\frac{m}{2 k T}  \left(\vert u\vert^2 + \vert c\vert^2 - 2 \vert u\vert \vert c\vert \cos \varphi  \right)} \vert u\vert^{\gamma+2} \sin \varphi \; \mathrm{d} \vert u\vert \,\mathrm{d} \varphi.
\end{equation*}
Knowing that
\begin{equation*}
\int_{0}^\pi e^{\frac{m}{ k T}   \vert u\vert \vert c\vert \cos \varphi } \sin \varphi  \,\mathrm{d} \varphi = \frac{2 k T}{m   \vert u\vert \vert c\vert} \sinh\left( \frac{m}{k T}   \vert u\vert \vert c\vert  \right),
\end{equation*}
integration with respect to $\varphi$ for $\mathcal I_2$ yields
\begin{equation*}
\mathcal I_2=  4 \pi \frac{ k T}{m \vert c \vert}    e^{-\frac{m}{2 k T} \vert c\vert^2   }  \int_{[0, \infty)} e^{-\frac{m}{2 k T}   \vert u\vert^2  } \sinh \left(\frac{m \vert c \vert \vert u \vert}{k T}\right) \vert u \vert^{\gamma+1} \mathrm d \vert u \vert.
\end{equation*}
Next we write the special value of the confluent hypergeometric function  $\, _1F_1(a,b,z)$ when evaluated at $b=\frac{3}{2}$, see  Chapter 13 in \cite{Abr},  
\begin{equation}\label{hyp geo}
	_1F_1\left(a,\tfrac{3}{2},z\right)=\frac{1}{\Gamma(a) }\int_{0}^{\infty} e^{- t}t^{a-1} \frac{\sinh(2 \sqrt{z t})}{2\sqrt{z t}} \mathrm{d}t,\quad  a> 0.
\end{equation}
Using this representation,  the integral $\mathcal{I}_2$ can be rewritten as
\begin{equation*}
\mathcal I_2= 2\pi \left(\frac{2 k T}{m}\right)^{\frac{\gamma+3}{2} } \Gamma\! \left(\frac{\gamma+3}{2}  \right) \, e^{-\frac{m}{2 k T} \vert c\vert^2   } \ _1F_1\!\left(\frac{\gamma+3}{2}, \frac{3}{2}, \frac{m \vert c \vert ^2}{2 k T}\right).
\end{equation*}
Finally, the collision frequency $\nu_{\gamma,\alpha}(v, I)$ from \eqref{pomocna coll}  becomes

\begin{multline}\label{pomocna coll 2}
	\nu_{\gamma , \alpha}(v, I)=  \frac{\rho}{m}  \frac{ \left\| b \right\|_{L^1(\mathrm{d}\sigma)}}{m^\frac{\gamma}{2}} \\ \times
 \left(C_{(0, \frac{\gamma}{2},0)} \frac{2^{\frac{\gamma}{2}+1}}{\sqrt{\pi}}  \left( k T  \right)^{\frac{\gamma}{2}}     \Gamma\! \left(\frac{\gamma+3}{2}  \right) \, e^{-\frac{m}{2 k T} \vert c\vert^2   } \ _1F_1 \left(\frac{\gamma+3}{2}, \frac{3}{2}, \frac{m \vert c \vert ^2}{2 k T}\right) 
 \right. \\
	\left. +C_{(\frac{\gamma}{2},0, \frac{\gamma}{2})}\left(  I^\frac{\gamma}{2}+(k T)^{  \frac{\gamma}{2}} \frac{\Gamma\!\left(\alpha+\frac{\gamma}{2}+1\right)}{\Gamma(\alpha+1) } \right) \right) 
	\\
	=  \frac{\rho}{m}  \left\| b \right\|_{L^1(\mathrm{d}\sigma)}  \left( \frac{p}{\rho}  \right)^{\frac{\gamma}{2}}  \frac{\Gamma (\alpha +1)}{\Gamma\!\left(\frac{4 \alpha +\gamma +7}{2}\right)}      \\ \times
	\left( \Gamma (\alpha +1) \Gamma\! \left(\frac{\gamma +3}{2}\right)^2 \frac{2^{\frac{\gamma}{2}+1}}{\sqrt{\pi}}     \, e^{-\frac{m}{2 k T} \vert c\vert^2   } \ _1F_1\! \left(\frac{\gamma+3}{2}, \frac{3}{2}, \frac{m \vert c \vert ^2}{2 k T}\right) 
	\right. \\
	\left. +  \frac{\sqrt{\pi } }{2} \Gamma \!\left(\alpha +\frac{\gamma }{2}+1\right) \left(  \left(\frac{I}{k T}\right)^\frac{\gamma}{2}+  \frac{\Gamma\! \left(\alpha+\frac{\gamma}{2}+1\right)}{\Gamma(\alpha+1) } \right) \right),   
\end{multline}
where we have used the relation \eqref{hydro p}. Then \eqref{pomocna coll 2}   is precisely \eqref{coll freq1}.

\section{Computation of the production terms  for the six fields model}\label{App production 6 mom entr}

For the sake of simplicity,  we introduce the  following notation 
\begin{equation}\label{M N L}
M= \frac{m}{2\,kT}  \frac{1}{\left(1+\frac{\Pi}{p}\right) }, \  N = \frac{1}{kT} \left( \frac{1}{1-\frac{3}{2(\alpha+1)} \frac{\Pi}{p}} \right), \ L= \frac{\rho}{m} \left( \frac{M}{\pi}\right)^{\frac{3}{2}} \frac{N^{\alpha+1}}{\Gamma(\alpha+1)}.
\end{equation}
We remind that the range for $\Pi/p$ imposed in \eqref{range Pi/p} implies positivity of these coefficients,
\begin{equation*}\label{M N pos}
M > 0, \quad N>0.
\end{equation*}
With these coefficients, six moments distribution function \eqref{f six} reads
\begin{equation}\label{f 6 LMN}
\hat{f}^6 =L \,  I^\alpha \,  e^{- M \left|c\right|^2 - N I}.
\end{equation}
The aim here is to calculate the production term $\mathcal{P}$ defined in \eqref{productionPii:equ} and the entropy production term $D^{nw}(\hat{f}^6)$ from \eqref{entr prod nw} for the  cross section \eqref{model 3}, namely
\begin{multline*}
\mathcal{B}^{nw}(v, v_*, I, I_*, r, R, \sigma)  \\
= b\big(\tfrac{u}{\left| u\right|} \cdot \sigma\big)  \left( R^{\frac{\gamma}{2}} |u|^\gamma  + \left( r (1-R)\frac{I}{m} \right)^{\frac{\gamma}{2}} + \left( (1-r) (1-R)\frac{I_*}{m} \right)^{\frac{\gamma}{2}} \right),   \quad \gamma >0
\end{multline*}
where we have denoted $u:=v-v_*$.

\subsection{Computation of the production term $\mathcal{P}$}\label{App production 6}

For the production term  \eqref{productionPii:equ}  we first note that taking the square of $v=c+U$, it reduces to
\begin{equation*}
\mathcal{P} = \int_{\mathbb{R}^3 \times [0,\infty)} m \left| c \right|^2 Q^{nw}(\hat{f}_6, \hat{f}_6)(c,I) \, \mathrm{d} I  \, \mathrm{d} c.
\end{equation*}
where we remind that the collision operator $ Q^{nw}$ is defined in \eqref{Q non-weight pull out}. The weak form  \eqref{weak 1} yields
\begin{multline}\label{P medju}
	\mathcal{P}=\frac{m}{2}\int_{\mathbb{R}^6 \times [0,\infty)^2 \times [0,1]^2 \times S^2 } \left( \vert c' \vert^2 + \vert c'_* \vert^2 -\vert c \vert^2 - \vert c_*\vert^2\right) \hat{f}_6\, \hat{f}_{6*} \\ \phantom{\mathbb{R}^6 \times}
	\times\mathcal{B}^{nw} \phi_\alpha(r)  \,  (1-R) R^{\frac{1}{2}} \psi_\alpha(R)\,   \mathrm{d} \sigma \, \mathrm{d} r \, \mathrm{d} R \,  \mathrm{d} I_* \, \mathrm{d}  c_*\, \mathrm{d} I \, \mathrm{d}  c \\
	=\frac{m}{2} L^2 \int_{\mathbb{R}^6 \times [0,\infty)^2 \times [0,1]^2 \times S^2 }\left( \vert c' \vert^2 + \vert c'_* \vert^2 -\vert c \vert^2 - \vert c_*\vert^2\right)e^{- M \left( \vert c \vert^2 +\vert c_* \vert^2\right)} e^{ - N\, \left(I+I_*\right)} \\
	\\ \phantom{\mathbb{R}^6 \times}
	\times \mathcal{B}^{nw} \phi_\alpha(r)  \,  (1-R) R^{\frac{1}{2}} \psi_\alpha(R)\, I^{\alpha}  I_*^{\alpha} \,  \mathrm{d} \sigma \, \mathrm{d} r \, \mathrm{d} R \,  \mathrm{d} I_* \, \mathrm{d}  c_*\, \mathrm{d} I \, \mathrm{d}  c, 
\end{multline}
where $\phi_\alpha(r)$ and $\psi_\alpha(R)$ are defined in \eqref{fun r R} and $\mathcal{B}^{nw}$ is cross section \eqref{model 3}. Now we pass to the relative velocity $u$ and center of mass peculiar  velocity $V_c$ by means of the following change of variables 
\begin{equation}\label{c notation}
\left(c, c_*\right) \mapsto \left(u:= c-c_*, V_c = \frac{c + c_*}{2}\right) \quad \Rightarrow \quad	c=V_c + \frac{u}2, \ c_*=V_c-\frac{u}2,
\end{equation}
with unit Jacobian. Therefore, the terms under integral in new variables become
\begin{equation*}
\vert c' \vert^2 + \vert c'_* \vert^2 -\vert c \vert^2 - \vert c_*\vert^2=\frac{1}{2}\left(R-1\right)\vert  u \vert^2 + \frac{2R}{m}\left( I+ I_*\right)
\end{equation*}
and
\begin{equation}\label{ e podint c^2+c*^2}
	\vert c \vert^2 +\vert c_* \vert^2= 2 \vert V_c\vert^2 + \frac{1}{2}\vert u \vert^2.
\end{equation}
Therefore, we can express the primed quantities from \eqref{P medju} in center-of-mass framework,
\begin{multline}\label{P medju 2}
\mathcal{P}
=\frac{m}{2} L^2 \int_{\mathbb{R}^6 \times [0,\infty)^2 \times [0,1]^2 \times S^2 }
\left( \frac{1}{2}\left(R-1\right)\vert  u \vert^2 + \frac{2R}{m}\left( I+ I_*\right) \right)e^{- M \left( 2 \vert V_c\vert^2 + \frac{1}{2}\vert u \vert^2\right)}  \\
\\ 
\times e^{ - N\, \left(I+I_*\right)} \mathcal{B}^{nw} \phi_\alpha(r)  \,  (1-R) R^{\frac{1}{2}} \psi_\alpha(R)\, I^{\alpha}  I_*^{\alpha} \,  \mathrm{d} \sigma \, \mathrm{d} r \, \mathrm{d} R \,  \mathrm{d} I_* \, \mathrm{d}  c_*\, \mathrm{d} I \, \mathrm{d}  c, 
\end{multline}
Using that the cross section is of the form \eqref{model 3}, i.e. 
\begin{equation*}
\mathcal{B}^{nw}(v, v_*, I, I_*, r, R, \sigma)  \\
= b\big(\tfrac{u}{\left| u\right|}\cdot\sigma\big) \,  \tilde{B}(\left| u \right|, I, I_*, r, R),
\end{equation*}
we can perform the integration with respect to $V_c$ and $\sigma$,
\begin{multline*}
	\mathcal{P}=
	m\,L^2\left(\frac{ \pi }{2\, M}\right)^{\frac{3}{2}} \left\| b \right\|_{L^1(\mathrm{d}\sigma)} \int_{\mathbb{R}^3 \times [0,\infty)^2 \times [0,1]^2  } 
	e^{- \frac{M}{2}  \vert u \vert^2 } e^{ - N\, \left(I+I_*\right)} \\
	\times \left(\frac{1}{4}\left(R-1\right)\vert  u \vert^2 + \frac{R}{m}\left( I+ I_*\right)\right) \tilde{B}(\left| u \right|, I, I_*, r, R)
	\\ 
	\times \phi_\alpha(r)  \,  (1-R) R^{\frac{1}{2}} \psi_\alpha(R)\, I^{\alpha}  I_*^{\alpha} \,  \mathrm{d} r \, \mathrm{d} R \,  \mathrm{d} I_* \, \mathrm{d} I \, \mathrm{d}  u. 
\end{multline*}
We now pass to the spherical coordinates for the relative velocity $u$. Denoting $y=\left|u\right|$, and performing integration with respect to the angular part, we obtain
\begin{multline*}
\mathcal{P}= m\,L^2\left(\frac{ \pi }{2\, M}\right)^{\frac{3}{2}}\left\| b \right\|_{L^1(\mathrm{d} \sigma)} \, 4 \,\pi \int_{ [0,\infty)^3 \times [0,1]^2 } 
	e^{- \frac{M}{2} y^2} e^{ - N\, \left(I+I_*\right)} y^2  \\
\times \left(\frac{1}{4}(R-1)\,y^2 + \frac{R}{m}\left( I+ I_*\right)\right)\\
\times	 \left( R^{\gamma/2} y^\gamma  + \left( r (1-R)\frac{I}{m} \right)^{\gamma/2} + \left( (1-r) (1-R)\frac{I_*}{m} \right)^{\gamma/2} \right), \\
\times\phi_\alpha(r)  \,  (1-R) R^{\frac{1}{2}} \psi_\alpha(R)\, I^{\alpha}  I_*^{\alpha}  \, \mathrm{d} r \, \mathrm{d} R \,  \mathrm{d} I_* \, \mathrm{d} I \, \mathrm{d}  y. 
\end{multline*}
We expand all the expressions involved and perform integration with respect to $r$ and $R$. Using the notation \eqref{const r R} for the constant issuing from this integration,  the production term becomes
\begin{multline*}
\mathcal{P} =	m\,L^2\left(\frac{ \pi }{2\, M}\right)^{\frac{3}{2}}\left\| b \right\|_{L^1(\mathrm{d} \sigma)} \, 4 \,\pi \int_{ [0,\infty)^3 \times [0,1]^2 } 	e^{- \frac{M}{2} y^2} e^{ - N\, \left(I+I_*\right)} \,  I^{\alpha}  I_*^{\alpha}
	\\ \times
	\left\{    -\frac{1}{4} C_{\left(1,\frac{\gamma}{2},0\right)} \,y^{ \gamma+2}  +  C_{\left(0,\frac{\gamma}{2}+1,0\right)} \, y^{\gamma} \left( \frac{I}{m}+ \frac{I_*}{m}\right)
		\right.\\\left.
  - \frac{1}{4} C_{\left(\frac{\gamma}{2}+1,0,\frac{\gamma}{2}\right)}  \, y^2 \left( \left(\frac{I}{m}\right)^{\frac{\gamma}{2}} + \left(\frac{I_*}{m}\right)^{\frac{\gamma}{2}}  \right)
	\right.\\\left.
+  C_{\left(\frac{\gamma}{2},1,\frac{\gamma}{2}\right)} \left(   \left(\frac{I}{m}\right)^{\frac{\gamma}{2}+1} + \left(\frac{I}{m}\right)^{\frac{\gamma}{2}}\frac{I_*}{m} + \left(\frac{I_*}{m}\right)^{\frac{\gamma}{2}}\frac{I}{m}  + \left(\frac{I_*}{m}\right)^{\frac{\gamma}{2}+1} \right)\right\} 
  \mathrm{d} I_* \, \mathrm{d} I \, \mathrm{d}  y.
	\end{multline*}
It remains to integrate with respect to  $I,\, I_*$ and $y$. Introducing the positive function
\begin{multline}\label{Cp LMN}
\tilde{\mathcal{C}}_{\mathcal{P}} =	m\, L^2\left(\frac{ \pi }{2\, M}\right)^{\frac{3}{2}}\,\left\| b \right\|_{L^1(\mathrm{d} \sigma)} \, 4 \,\pi 
\,\frac{\Gamma(\alpha +1)^2}{\Gamma\!\left(\frac{4\alpha+ \gamma+9}{2}\right) }\,
N^{-(2\alpha+2)}\, M^{-\frac{3}{2}}\\
\times
 \left( k_1 \,M^{-\frac{\gamma}{2}}+k_2\,(m\, N )^{-\frac{\gamma}{2}} \right) \geq 0,
\end{multline}
where $k_1$ and $k_2$ are from \eqref{k1-k2},
we finally obtain the expression for the production term
	\begin{equation}\label{ProdEnt P preko Cp}
	\mathcal{P}= \tilde{\mathcal{C}}_{\mathcal{P}}  \left(-\frac{1}{2}M^{-1}+(m\, N)^{-1}\right).
	\end{equation}
It remains to come back to the original variables using \eqref{M N L}, that yields
\begin{equation}\label{pommm}
	 \left(-\frac{1}{2}M^{-1}+(m\, N)^{-1}\right) = - \frac{k T}{m} \frac{(\alpha+\frac{5}{2})}{(\alpha+1)} \frac{\Pi}{p} = - \frac{p}{\rho} \frac{(\alpha+\frac{5}{2})}{(\alpha+1)} \frac{\Pi}{p}.
\end{equation}
Gathering \eqref{Cp LMN}, \eqref{ProdEnt P preko Cp} and \eqref{pommm} yields  \eqref{P iz Prop}.

\subsection{Computation of the entropy production term $D^{nw}(\hat{f}^6)$}\label{App production 6 entr}
With the notation \eqref{f 6 LMN}, the test function corresponding to the entropy law becomes
\begin{equation*}
\log\left(\hat{f}^6 I^{-\alpha} \right) = \log  L - M \left|c\right|^2 - N I.
 \end{equation*}
 The weak form \eqref{weak form nw} allows to write
 \begin{multline}\label{D nw medju}
D^{nw}(\hat{f}^6) 
 = \frac{1}{2}L^2 \int_{\mathbb{R}^6 \times [0,\infty)^2 \times [0,1]^2 \times S^2 } \left( -M \left( \vert c' \vert^2 +\vert c'_* \vert^2-\vert c \vert^2-\vert c_* \vert^2\right) \right.\\\left.- N \left(I'+I'_*-I-I_*\right) \right)
 \times e^{- M \left( \vert c \vert^2 +\vert c_* \vert^2\right)} e^{ - N\, \left(I+I_*\right)} \\
  \mathcal{B}^{nw} \phi_\alpha(r)  \,  (1-R) R^{\frac{1}{2}} \psi_\alpha(R)\, I^{\alpha}  I_*^{\alpha} \,  \mathrm{d} \sigma \, \mathrm{d} r \, \mathrm{d} R \,  \mathrm{d} I_* \, \mathrm{d}  c_*\, \mathrm{d} I \, \mathrm{d}  c, 
 \end{multline}
 with $\phi_\alpha(r)$ and $\psi_\alpha(R)$ from \eqref{fun r R} and $\mathcal{B}^{nw}$ is the cross section \eqref{model 3}.  The next step is to use coordinates of the center of mass by means of \eqref{c notation}. Indeed, in addition to \eqref{micro CL cm} and \eqref{ e podint c^2+c*^2} we also have
 \begin{equation*}
 I'+I'_*-I-I_*=-\frac{m}{2}\left(\frac{1}{2}(R-1)\,\vert u \vert^2 + \frac{2 R}{m}\left( I+ I_*\right)\right).
 \end{equation*}
 These considerations allow to write \eqref{D nw medju}  in terms of the production term $\mathcal{P}$ by virtue of \eqref{P medju 2},
\begin{equation}\label{D LMN}
D^{nw}(\hat{f}^6)=\left(\frac{m}{2}N-M\right)\frac{\mathcal{P}}{m}.
\end{equation}
Therefore, using the results of the previous Section \ref{App production 6} and notably its final result \eqref{ProdEnt P preko Cp} we obtain
\begin{equation*}
D^{nw}(\hat{f}^6)=-\left(\frac{m}{2}N-M\right)^2\frac{\tilde{\mathcal{C}}_\mathcal{P}}{m \, N \, M},
\end{equation*}
with the positive constant $\mathcal{C}_\mathcal{P}$ from \eqref{Cp LMN}. Now is clear that  $D^{nw}(f)$ is non-positive, as claimed in the H-theorem \eqref{H th non positive}.\\

The final result follows from \eqref{D LMN}  by exploiting
\begin{equation*}
\left(\frac{m}{2}N-M\right) = \frac{m}{2 k T}  \left(  1- \frac{3}{2 (\alpha + 1)} \frac{\Pi}{p} \right)^{-1}\left(  1 + \frac{\Pi}{p} \right)^{-1} \frac{(\alpha + \frac{5}{2})}{(\alpha+1)} \frac{\Pi}{p}.
\end{equation*}

We note that the shorter notation in terms of $M,N$, allows to rewrite the derivative \eqref{k der},
\begin{equation*}
	\mathcal {K}_\Pi= - \frac{3\,k }{m}{\left( \frac{m}{2}N -M\right)}.
\end{equation*}
Combining the last equation with \eqref{ProdEnt P preko Cp} we get
\begin{equation*}
\frac{1}{3} \frac{\partial \mathcal {K}}{\partial \Pi} \mathcal{P} =	-\frac{k }{m}{\left( \frac{m}{2}N -M\right)}\mathcal{P}=-k D^{nw}(\hat{f}^6) \geq 0.
\end{equation*}

\section{Computation of the production terms  for the fourteen fields model}\label{App production14 mom}
For the sake of simplicity, the equilibrium distribution function \eqref{Maks} will be written as in \eqref{Maxwell dmsless},
\begin{equation*}\label{Maxwell-simple}
	f_M=I^{\alpha} L_0  e^{-\frac{1}{k T} (\frac{m}{2} \vert c\vert^2 +I)} \quad \text{where} \quad L_0= \frac{\rho}{m} \left( \frac{m}{2 \pi k T} \right)^{\frac{3}{2}} \frac{1}{\Gamma(\alpha + 1)} \frac{1}{\left(k T\right)^{\alpha+1}}.
\end{equation*}

Our aim is to compute the production terms in the fourteen moments approximation, which amounts to plug the approximative distribution function $f^{14}$ into the definition of the product terms \eqref{14 prod},
\begin{equation*}
P_{ij}^{14}= \int_{\mathbb{R}^3 \times \mathbb{R}_+} m v_i v_j Q^{nw}(\hat{f}^{14},\hat{f}^{14})(v,I)\mathrm{d} I \, \mathrm{d}{v},
\end{equation*}
\begin{equation*}
Q_{i}^{14}= \int_{\mathbb{R}^3 \times \mathbb{R}_+}\left(\frac{m}{2} \vert v \vert ^2+I\right) v_i\, Q^{nw}(\hat{f}^{14},\hat{f}^{14})(v,I) \, \mathrm{d} I \, \mathrm{d}{v}.
\end{equation*}
  Introducing the peculiar velocity $c=v-U$ and using annihilations of the collision operator weak form \eqref{weak form coll inv}, and after the change of variables $v\mapsto c$ the expressions  \eqref{14 prod}  simplify to
\begin{equation}\label{Prod14}
P_{ij}^{14}= \int_{\mathbb{R}^3 \times \mathbb{R}_+} m c_i c_j Q^{nw}(\hat{f}^{14},\hat{f}^{14})(c+U,I)\mathrm{d} I \, \mathrm{d}{c}, 
\end{equation}
\begin{equation}\label{Qflux14}
Q_{i}^{14}= \sum_{k=1}^3 U_k P_{ki}+ \int_{\mathbb{R}^3 \times \mathbb{R}_+} c_i\left(\frac{m}{2} \vert c \vert ^2+I\right) \, Q^{nw}(\hat{f}^{14},\hat{f}^{14})(c+U,I)  \mathrm{d} I \, \mathrm{d}{c}.
\end{equation}
As non-equilibrium effects are supposed to be small, products of the distribution functions appearing in the collision integral can be linearized with respect to the non-equilibrium quantities, $p_{\langle ij \rangle}, \Pi, q_i$. Using the microscopic conservation laws \eqref{micro CL}, it follows
\begin{multline}\label{LinQI}
\hat{f}^{14'}\hat{f}_{*}^{14'}-\hat{f}^{14}\hat{f}_{*}^{14}\approx f_M f_{M*} \left\{ \sum_{k,l=1} ^3 \frac{\rho}{2p^2} \left(p_{\langle kl \rangle}+\left(\alpha+\frac{5}{2}\right)(\alpha+1)^{-1} \Pi \delta_{kl} \right)\right.\\
\left. \times (c'_{k}c'_{l}+c'_{*k}c'_{*l}-c_{k}c_{l}-c_{*k}c_{*l}) +\sum_{n=1}^{3} \left(\alpha+\frac{7}{2}\right)^{-1} \frac{\rho ^2}{m p^3} q_n \right. \\
\left.\times \left( \left(\frac{m}{2} \vert c' \vert ^2+I'\right)c'_{n}+\left(\frac{m}{2} \vert c'_{*} \vert ^2+I'_{*}\right)c'_{*n}-\left(\frac{m}{2} \vert c \vert ^2+I\right)c_{n}-\left(\frac{m}{2} \vert c_{*} \vert ^2+I_{*}\right)c_{*n}\right)\right\}.
\end{multline}
Placing \eqref{LinQI} into \eqref{Prod14} and \eqref{Qflux14} yields  a suitable approximation for the source terms  $P_{ij}^{14}$ and $Q_i ^{14}$, denoted by $\overline{P}_{ij}^{14}$ and $\overline{Q}_i ^{14}$, respectively. 

We now introduce the following notation,
\begin{multline*}
\mathcal{P}_{ijkl}=  \int m\, c_i \, c_j  \left(c'_{k}c'_{l}+c'_{*k}c'_{*l}-c_{k}c_{l}-c_{*k}c_{*l}\right) f_M f_{M*}\\
\times  \mathcal{B}^{nw} \phi_\alpha(r)  \,  (1-R) R^{\frac{1}{2}} \psi_\alpha(R)\, \mathrm{d} \sigma \, \mathrm{d} r \, \mathrm{d} R \,  \mathrm{d} I_* \, \mathrm{d}  c_*\, \mathrm{d} I \, \mathrm{d}  c ,
\end{multline*}
\begin{multline*}
\mathcal{Q}_{in}= \int \left(\frac{m}{2} \vert c \vert ^2+I\right)c_{i}   \left( \left(\frac{m}{2} \vert c' \vert ^2+I'\right)c'_{n}+\left(\frac{m}{2} \vert c'_{*} \vert ^2+I'_{*}\right)c'_{*n}\right.\\
\left. -\left(\frac{m}{2} \vert c \vert ^2+I\right)c_{n}-\left(\frac{m}{2} \vert c_{*} \vert ^2+I_{*}\right)c_{*n}\right) f_M f_{M*} \\
\times \mathcal{B}^{nw} \phi_\alpha(r)  \,  (1-R) R^{\frac{1}{2}} \psi_\alpha(R)\,  \mathrm{d} \sigma \, \mathrm{d} r \, \mathrm{d} R \,  \mathrm{d} I_* \, \mathrm{d}  c_*\, \mathrm{d} I \, \mathrm{d}  c.
\end{multline*}
Now the parity arguments imply 
\begin{equation}\label{Pbar}
\overline P_{ij}^{14}=\frac{\rho}{2p^2}  \sum_{k,l=1} ^3  \left( p_{\langle kl\rangle}+\left(\alpha+\frac{5}{2}\right)(\alpha+1)^{-1} \Pi \delta_{kl} \right) \mathcal{P}_{ijkl}, 
\end{equation}
\begin{equation}\label{Qbar}
\overline Q_{i}^{14}=\sum_{k=1}^{3} U_k \overline P_{ki}^{14}+\sum_{n=1}^{3} \left(\alpha+\frac{7}{2}\right)^{-1} \frac{\rho ^2}{m p^3} q_n \mathcal{Q}_{in}. 
\end{equation}
We calculate the production  terms $\overline P_{ij}^{14}$ and $\overline Q_{i}^{14}$ in separate sections for the cross section
\begin{multline}\label{model 3 const} 
\mathcal{B}^{nw}(v, v_*, I, I_*, r, R, \sigma)  \\
= K \left( R^{\frac{\gamma}{2}} |u|^\gamma  + \left( r (1-R)\frac{I}{m} \right)^{\frac{\gamma}{2}} + \left( (1-r) (1-R)\frac{I_*}{m} \right)^{\frac{\gamma}{2}} \right),  
\end{multline}
where $K$ is a constant, $u:=v-v_*$, $\gamma>0$.

\subsection{Computation of $\overline P_{ij}^{14}$} 
Firstly, we exploit the parity arguments for the term $\mathcal{P}_{ijkl}$. Note that it vanishes unless indices are equal by pairs - the integral is non zero when $i=j$ and $k=l$ or $i=k$ and $j=l$ or $i=l$ and $j=k$. By symmetry, the last two terms lead to the same result and thus $\mathcal{P}_{ijkl}$ can be represented in the following form:
\begin{equation}
\mathcal{P}_{ijkl}=\mathcal{P}_1 \delta_{ij} \delta_{kl}+\mathcal{P}_2 (\delta_{ik}\delta_{jl}+\delta_{il} \delta{jk}). \label{Pijkl}
\end{equation}
Combining \eqref{Pbar} and \eqref{Pijkl} with the fact that pressure tensor is symmetric, we get
\begin{equation*}
\overline P_{ij}^{14}= \frac{\rho}{2p^2}\left(2 p_{\langle ij\rangle}\mathcal P_2+\frac{1}{3} \delta_{ij} \left(\alpha+\frac{5}{2}\right)(\alpha+1)^{-1} \Pi\sum_{r,t=1}^3 \mathcal P_{rrtt} \right).
\end{equation*}
The term $\mathcal P_2$ can be determined from the system of equations obtained from the representation  \eqref{Pijkl}
\[\sum_{r,t=1}^{3} \mathcal P_{rrtt}=9 \mathcal P_1+6\mathcal P_2, \quad \quad \sum_{r,t=1}^3 \mathcal P_{rtrt}=3\mathcal P_1+12 \mathcal P_2,
\]
whose solution is
\[\mathcal P_1=\frac{1}{15} \sum_{r,t=1} ^3 (2 \mathcal P_{rrtt}-\mathcal P_{rtrt}), \quad \quad \mathcal P_2=\frac{1}{30} \sum_{r,t=1} ^3 (3 \mathcal P_{rtrt}-\mathcal P_{rrtt}).
\]
		
\subsubsection{Computation of $\sum_{r,t=1}^3 \mathcal P_{rrtt}$}
We first concentrate on the term 
\begin{multline*}
 \sum_{r,t=1}^3 \mathcal P_{rrtt}={m}  L_0^2 \int_{\mathbb{R}^6 \times [0,\infty)^2 \times [0,1]^2 \times S^2}  e^{-\frac{1}{k T}\left( \frac{m}{2} (\vert c \vert ^2 + \vert c_* \vert^2)+I+I_*\right)} \\
 \times \vert c \vert^2 \left( \vert c' \vert^2 +\vert c'_* \vert^2-\vert c \vert^2-\vert c_* \vert^2\right)\\
 \times \mathcal{B}^{nw} \phi_\alpha(r)  \,  (1-R) R^{\frac{1}{2}} \psi_\alpha(R)\, I^{\alpha}  I_*^{\alpha} \,  \mathrm{d} \sigma \, \mathrm{d} r \, \mathrm{d} R \,  \mathrm{d} I_* \, \mathrm{d}  c_*\, \mathrm{d} I \, \mathrm{d}  c.
\end{multline*}
Passing to the center-of-mass reference frame, by changing  variables to \eqref{c notation} yields
\begin{multline*}
 \sum_{r,t=1}^3 \mathcal P_{rrtt}=m  L_0^2 \int_{\mathbb{R}^6 \times [0,\infty)^2 \times [0,1]^2 \times S^2}  e^{-\frac{1}{k T}\left( m \vert V_c\vert^2 +\frac{m}{4} \vert u \vert ^2+I+I_*\right)} \\
 \times \left(\vert V_c \vert^2+ V_c \cdot u+\frac{1}{4} \vert u \vert^2 \right) \left( \frac{1}{2}\left(R-1\right)\vert  u \vert^2 + \frac{2R}{m}\left( I+ I_*\right) \right)\\
 \times \mathcal{B}^{nw} \phi_\alpha(r)  \,  (1-R) R^{\frac{1}{2}} \psi_\alpha(R)\, I^{\alpha}  I_*^{\alpha} \,  \mathrm{d} \sigma \, \mathrm{d} r \, \mathrm{d} R \,  \mathrm{d} I_* \, \mathrm{d} I \, \mathrm{d}  u \, \mathrm{d}  V_c.
\end{multline*}
The form of the cross-section \eqref{model 3 const}
\begin{equation}\label{model Bnw u}
\mathcal{B}^{nw}(v, v_*, I, I_*, r, R, \sigma)  \\
= K\tilde{B}(\left| u \right|, I, I_*, r, R),
\end{equation}
allows to immediately integrate  with respect to $V_c$ and $\sigma$,
\begin{multline*}
 \sum_{r,t=1}^3 \mathcal P_{rrtt}=m K L_0^2  \left( \frac{\pi k T}{m} \right)^{\frac{3}{2}} 2\pi  \int_{\mathbb{R}^3 \times [0,\infty)^2 \times [0,1]^2} e^{-\frac{1}{k T}\left(\frac{m}{4} \vert u \vert ^2+I+I_*\right)} \\
 \times \left(3 \frac{k T}{m}+\frac{1}{2} \vert u \vert ^2\right)\left( \frac{1}{2}\left(R-1\right)\vert  u \vert^2 + \frac{2R}{m}\left( I+ I_*\right) \right)\\
 \times \tilde{B} \phi_\alpha(r)  \,  (1-R) R^{\frac{1}{2}} \psi_\alpha(R)\, I^{\alpha}  I_*^{\alpha}  \, \mathrm{d} r \, \mathrm{d} R \,  \mathrm{d} I_* \, \mathrm{d} I \, \mathrm{d}  u.
\end{multline*}
Next, we pass to the spherical coordinates for the relative velocity $u$. Denoting $y=\vert u \vert$, and performing integration with respect to the angular part, we obtain
\begin{multline*}
 \sum_{r,t=1}^3 \mathcal P_{rrtt}=m K L_0^2  \left( \frac{\pi k T}{m} \right)^{\frac{3}{2}} 8 \pi ^2  \int_{[0,\infty)^3 \times [0,1]^2} e^{-\frac{1}{k T}\left(\frac{m}{4} y ^2+I+I_*\right)} \\
 \times \left(3 \frac{k T}{m}+\frac{1}{2} y ^2\right)\left( \frac{1}{2}\left(R-1\right)y+ \frac{2R}{m}\left( I+ I_*\right) \right)\\
 \times  \left( R^{\frac{\gamma}{2}} |u|^\gamma  + \left( r (1-R)\frac{I}{m} \right)^{\frac{\gamma}{2}} + \left( (1-r) (1-R)\frac{I_*}{m} \right)^{\frac{\gamma}{2}} \right)\\
 \times y^2 \phi_\alpha(r)  \,  (1-R) R^{\frac{1}{2}} \psi_\alpha(R)\, I^{\alpha}  I_*^{\alpha}  \, \mathrm{d} r \, \mathrm{d} R \,  \mathrm{d} I_* \, \mathrm{d} I \, \mathrm{d}  y.
\end{multline*}
Now we expand all the involved expressions  and perform integration with respect to $r$ and $R$.   Using the notation \eqref{const r R}, we obtain
\begin{multline*}
 \sum_{r,t=1}^3 \mathcal P_{rrtt}=m K L_0^2  \left( \frac{\pi k T}{m} \right)^{\frac{3}{2}} 8 \pi ^2  \int_{[0,\infty)^3} e^{-\frac{1}{k T}\left(\frac{m}{4} y ^2+I+I_*\right)}y^2 I^{\alpha}  I_*^{\alpha}  \\
 \times \left(3 \frac{k T}{m}+\frac{1}{2} y ^2\right) \left\{ - \frac{1}{2} C_{\left(1,\frac{\gamma}{2},0\right)} \,y^{ \gamma+2} +  2C_{\left(0,\frac{\gamma}{2}+1,0\right)} \, y^{\gamma} \left( \frac{I}{m}+ \frac{I_*}{m}\right)
 	\right.\\
 \left. -\frac{1}{2}  C_{\left(\frac{\gamma}{2}+1,0,\frac{\gamma}{2}\right)} \, y^2 \left( \left(\frac{I}{m}\right)^{\frac{\gamma}{2}} + \left(\frac{I_*}{m}\right)^{\frac{\gamma}{2}}  \right)
	\right.\\
	\left.+ C_{\left(\frac{\gamma}{2},0,\frac{\gamma}{2}\right)} \left(   \left(\frac{I}{m}\right)^{\frac{\gamma}{2}+1} + \left(\frac{I}{m}\right)^{\frac{\gamma}{2}}\frac{I_*}{m} + \left(\frac{I_*}{m}\right)^{\frac{\gamma}{2}}\frac{I}{m}  + \left(\frac{I_*}{m}\right)^{\frac{\gamma}{2}+1} \right)\right\} 
  \mathrm{d} I_* \, \mathrm{d} I \, \mathrm{d}  y.  
\end{multline*}
Finally, after performing integration with respect to $I$, $I_*$ and $y$, we get
\begin{multline*}
\sum_{r,t=1}^3\mathcal{P}_{rrtt}=- K  \frac{\rho^2}{m}  \left( \frac{p}{\rho} \right)^{\frac{\gamma}{2}+2}   \frac{ 2\sqrt{\pi}}{ \Gamma \! (\frac{4\alpha+\gamma+9}{2})}    \\
\times \left\{3(4\alpha +\gamma+4) \pi \Gamma \! \left(\alpha +\frac{\gamma}{2}+1\right)^2 +2^{\gamma+4} (\alpha+1) \Gamma \! \left(\alpha+1\right)^2 \Gamma\!  \left(\frac{\gamma+3}{2}\right)\Gamma \left(\frac{\gamma+5}{2}\right)\right\},
\end{multline*}
where the relation $p=\frac{\rho}{m} k T$ from \eqref{hydro p} was used. 

\subsubsection{Computation of $\sum_{r,t=1}^3 \mathcal P_{rtrt}$}
For the term
\begin{multline*}
 \sum_{r,t=1}^3 \mathcal P_{rtrt}=m  L_0^2 \int_{\mathbb{R}^6 \times [0,\infty)^2 \times [0,1]^2 \times S^2} e^{-\frac{1}{k T}\left( \frac{m}{2} (\vert c \vert ^2 + \vert c_* \vert^2)+I+I_*\right)} \\
 \times  \left( (c \cdot c')^2+(c \cdot c'_*)^2-(c \cdot c)^2-(c \cdot c_*)^2\right)\\
 \times \mathcal{B}^{nw} \phi_\alpha(r)  \,  (1-R) R^{\frac{1}{2}} \psi_\alpha(R)\, I^{\alpha}  I_*^{\alpha} \,  \mathrm{d} \sigma \, \mathrm{d} r \, \mathrm{d} R \,  \mathrm{d} I_* \, \mathrm{d}  c_*\, \mathrm{d} I \, \mathrm{d}  c.
\end{multline*}
After change of variable \eqref{c notation} term under integral yields
\begin{multline*}
(c \cdot c')^2+(c \cdot c'_*)^2-(c \cdot c)^2-(c \cdot c_*)^2=\\
 \frac{1}{2}(V_c  \cdot u')^2+\frac{1}{2}(V_c \cdot u')(V_c \cdot u)+\frac{1}{8}(u \cdot u')^2-\frac{1}{2}(V_c \cdot u)^2-\frac{1}{2}(V_c \cdot u)\vert u \vert^2-\frac{1}{8}\vert u \vert ^4.
\end{multline*}
Next, the form of cross section \eqref{model Bnw u} allow to perform an integration with respect to $V_c$, 
\begin{multline*}
 \sum_{r,t=1}^3 \mathcal P_{rtrt}=m K L_0^2  \left( \frac{\pi k T}{m} \right)^{\frac{3}{2}} \frac{1}{4}\int_{\mathbb{R}^3 \times [0,\infty)^2 \times [0,1]^2 \times S^2} e^{-\frac{1}{k T}\left(\frac{m}{4} \vert u \vert ^2+I+I_*\right)} \\
 \times \left(\frac{k T}{m} (\vert u' \vert ^2 - \vert u \vert)+\frac{1}{2} ((u' u)^2 -\vert u \vert^4)\right)\\
 \times \tilde{B} \phi_\alpha(r)  \,  (1-R) R^{\frac{1}{2}} \psi_\alpha(R)\, I^{\alpha}  I_*^{\alpha} \,  \mathrm{d} \sigma \, \mathrm{d} r \, \mathrm{d} R \,  \mathrm{d} I_*\, \mathrm{d} I \, \mathrm{d} u.
\end{multline*}
Using relations \eqref{R} and \eqref{I' I'_*}, we can express
\[ \vert u' \vert ^2=\frac{4 R E}{m}=R\vert u \vert^2+\frac{4R}{m} (I+I_*),\]
\[(u \cdot u')^2=\frac{4 R E}{m} (u \cdot \sigma)^2=\left( R\vert u \vert^2+\frac{4R}{m} (I+I_*) \right)(u \cdot \sigma)^2,\]
after which we perform integration respect to $\sigma$, that yields
\begin{multline*}
 \sum_{r,t=1}^3 \mathcal P_{rtrt}=m K L_0^2  \left( \frac{\pi k T}{m} \right)^{\frac{3}{2}}\pi \int_{\mathbb{R}^3 \times [0,\infty)^2 \times [0,1]^2} e^{-\frac{1}{k T}\left(\frac{m}{4} \vert u \vert ^2+I+I_*\right)} \\
 \times \left(\frac{kT}{m} \left(R-1\right)\vert u \vert^2+\frac{1}{2}\left(\frac{R}{3}-1\right)\vert u \vert^4+\frac{4R}{m}\left(I+I_*\right)\left(\frac{k T}{m}+\frac{1}{6} \vert u \vert ^2\right)\right)\\
 \times \tilde{B} \, \phi_\alpha(r)  \,  (1-R) R^{\frac{1}{2}} \psi_\alpha(R)\, I^{\alpha}  I_*^{\alpha} \, \mathrm{d} r \, \mathrm{d} R \,  \mathrm{d} I_*\, \mathrm{d} I \, \mathrm{d} u.
\end{multline*}
Now we switch to spherical coordinates for the  relative velocity $u$,  and integrate with respect to $r$ and $R$  using the notation \eqref{const r R} for the constants coming up from this integration,
\begin{multline*}
 \sum_{r,t=1}^3 \mathcal P_{rtrt}=m K L_0^2  \left( \frac{\pi k T}{m} \right)^{\frac{3}{2}}  4\pi^2 \int_{\times [0,\infty)^3 } e^{-\frac{1}{k T}\left(\frac{m}{4} y ^2+I+I_*\right)}y^2 I^{\alpha}  I_*^{\alpha} \\
 \times \left\{ -\frac{k T}{m} C_{\left(1,\frac{\gamma}{2},0\right)} y^{\gamma+2}+\frac{1}{2} \left( \frac{1}{3} C_{\left(0,\frac{\gamma}{2}+1,0\right)} - C_{\left(0,\frac{\gamma}{2},0\right)} \right) y^{\gamma+4}
 \right. \\
 \left.  +\frac{4}{m}C_{\left(0,1+\frac{\gamma}{2},0 \right)}\left(\frac{k T}{m}+\frac{1}{6}y^2 \right)(I+I_*)y^{\gamma}\right. \\
\left. +\left( - \frac{k T}{m}C_{\left(\frac{\gamma}{2}+1,0,\frac{\gamma}{2}\right)} y^2+\frac{1}{2} \left( \frac{1}{3} C_{\left(\frac{\gamma}{2},1,\frac{\gamma}{2}\right)} - C_{\left(\frac{\gamma}{2},0,\frac{\gamma}{2}\right)}\right) y^4 \right) \left(\left( \frac{I}{m}\right)^{\frac{\gamma}{2}}+ \left( \frac{I_*}{m}\right)^{\frac{\gamma}{2}}\right) \right. \\
 \left.+\frac{4}{m}\left(\frac{k T}{m}+\frac{1}{6}y^2\right) C_{\left(\frac{\gamma}{2},1,\frac{\gamma}{2}\right)}\left(\left( \frac{I}{m}\right)^{\frac{\gamma}{2}}+ \left( \frac{I_*}{m}\right)^{\frac{\gamma}{2}}\right)(I+I_*)\right\}
\,  \mathrm{d} I_*\, \mathrm{d} I \, \mathrm{d} y.
\end{multline*}
Finally, performing the integration with respect to $I$, $I_*$, and $y$ yields
\begin{multline*}
\sum_{r,t=1}^3\mathcal{P}_{rtrt}=- K \frac{\rho^2}{m}  \left( \frac{p}{\rho} \right)^{\frac{\gamma}{2}+2}   \frac{2 \sqrt{\pi}}{ \Gamma  (\frac{4\alpha+\gamma+9}{2})}   \left\{9(8\alpha +2\gamma+13) \pi \Gamma \left(\alpha +\frac{\gamma}{2}+1\right)^2 \right. \\
\left.+2^{\gamma+2} (4\alpha(\gamma+6)+\gamma(\gamma+12)+39) \Gamma \left(\alpha+1\right)^2 \Gamma \left(\frac{\gamma+3}{2}\right)\Gamma \left(\frac{\gamma+5}{2}\right)\right\}.
\end{multline*}

\subsection{Computation of $\overline Q_{i}^{14}$} 
The parity arguments imply that  $\mathcal Q_{in}$ vanishes unless $i=n$, which for the production term \eqref{Qbar} implies
\begin{equation*}
\overline Q_i ^{14}= \sum_{k=1}^3  U_k \overline P_{ki}^{14}+\left( \alpha +\frac{7}{2}\right)^{-1} \frac{\rho ^2}{mp^3} q_i \frac{1}{3} \sum_{r=1}^3 \mathcal Q_{rr}.
\end{equation*}

\subsubsection{Computation of $\sum_{r=1}^3 \mathcal Q_{rr}$}

We now compute the term
\begin{multline*}
 \sum_{r=1}^3 \mathcal Q_{rr}= L_0^2 \int_{\mathbb{R}^6 \times [0,\infty)^2 \times [0,1]^2 \times S^2} e^{-\frac{1}{k T}\left( \frac{m}{2} (\vert c \vert ^2 + \vert c_* \vert^2)+I+I_*\right)} \\
\left( \frac{m}{2}\vert c \vert^2 +I\right) \left( \left(\frac{m}{2} \vert c' \vert ^2+I'\right)c' \cdot c+\left(\frac{m}{2} \vert c'_{*} \vert ^2+I'_{*}\right)c'_* \cdot c \right.\\
\left. -\left(\frac{m}{2} \vert c \vert ^2+I\right)\vert c \vert^2-\left(\frac{m}{2} \vert c_{*} \vert ^2+I_{*}\right)c_{*} \cdot c\right)\\
 \times \mathcal{B}^{nw} \phi_\alpha(r)  \,  (1-R) R^{\frac{1}{2}} \psi_\alpha(R)\, I^{\alpha}  I_*^{\alpha} \,  \mathrm{d} \sigma \, \mathrm{d} r \, \mathrm{d} R \,  \mathrm{d} I_* \, \mathrm{d}  c_*\, \mathrm{d} I \, \mathrm{d}  c.
\end{multline*}
Switching to the center-of-mass framework by means of the  change of variables \eqref{c notation}, the term under integral becomes
\begin{multline*}
 \left( \left(\frac{m}{2} \vert c' \vert ^2+I'\right)c'\cdot c+\left(\frac{m}{2} \vert c'_{*} \vert ^2+I'_{*}\right)c'_* \cdot c \right.\\
\left. -\left(\frac{m}{2} \vert c \vert ^2+I\right)\vert c \vert^2-\left(\frac{m}{2} \vert c_{*} \vert ^2+I_{*}\right)c_{*}\cdot c\right)\\
= \frac{m}{2} (u' \cdot V_c)^2-\frac{m}{2}(u \cdot V_c)^2+\frac{m}{4}(u \cdot u')(u'   \cdot V_c)-\frac{m}{4}(u\cdot  V_c)\vert u \vert^2\\
+\frac{1}{2}(I'-I'_*)\left(u' \cdot V_c+\frac{1}{2} u \cdot  u' \right)-\frac{1}{2}(I-I_*)\left(u \cdot V_c+\frac{1}{2} \vert u \vert^2 \right).
\end{multline*}
The form of the cross section \eqref{model Bnw u}  allows to first integrate with  respect to $V_c$ and $\sigma$,
\begin{multline*}
 \sum_{r=1}^3 \mathcal Q_{rr}= K L_0^2 \left(\frac{\pi k T}{m} \right)^{\frac{3}{2}} 4 \pi \int_{\mathbb{R}^3 \times [0,\infty)^2 \times [0,1]^2} e^{-\frac{1}{k T}\left( \frac{m}{4} \vert u\vert ^2 +I+I_*\right)} \\
\left\{-\frac{1}{4}\left(\frac{5}{4} k T+\frac{m}{8} \vert u \vert ^2+I \right)(I-I_*) \vert u \vert ^2 \right. \\
\left.+\frac{m k T}{32} \vert u \vert ^2 \left( \left(\frac{5}{3} R-3 \right) \vert u \vert ^2+\frac{20R}{3m} (I+I_*) \right)\right. \\
\left. +\left(\frac{1}{4} I+\frac{5}{16} k T\right) k T\left( (R-1)\vert u \vert^2+\frac{4 R}{m}(I+I_*)\right)\right\}\\
 \times \tilde{B} \phi_\alpha(r)  \,  (1-R) R^{\frac{1}{2}} \psi_\alpha(R)\, I^{\alpha}  I_*^{\alpha} \, \mathrm{d} r \, \mathrm{d} R \,  \mathrm{d} I_* \, \mathrm{d} I \, \mathrm{d}  u.
\end{multline*}
Next, passing to the   spherical coordinates for the relative velocity $u$, denoting $\left|u\right|=y$, and integrating with respect to $R$ and $r$ we obtain
\begin{multline*}
 \sum_{r=1}^3 \mathcal Q_{rr}= K L_0^2 \left(\frac{\pi k T}{m} \right)^{\frac{3}{2}} 16 \pi^2 \int_{[0,\infty)^3} e^{-\frac{1}{k T}\left( \frac{m}{4} y^2 +I+I_*\right)} \\
y^2\left\{ y^{\gamma} \left[-\frac{1}{4}\left(\frac{5}{4} k T+\frac{m}{8} y ^2+I \right)(I-I_*)y ^2 C_{\left(0,\frac{\gamma}2,0\right)} \right. \right.\\
\left.\left. + \frac{m k T}{32}y^4 \left(\frac{5}{3}C_{\left(0,1+\frac{\gamma}{2},0\right)}-3C_{\left(0,\frac{\gamma}2,0\right)} \right)  -\left(\frac{1}{4} I+\frac{5}{16} k T\right) k T y^2 C_{\left(1,\frac{\gamma}2,0\right)}\right.\right. \\
\left.\left.+\frac{4}{m}(I+I_*)C_{\left(0,\frac{\gamma}{2}+1,0\right)} \left(\frac{5}{96} m k T y^2+\left(\frac{1}{4} I+\frac{5}{16} k T\right)k T\right)\right]\right. \\
\left. +\left(\frac{1}{m}\right)^{\frac{\gamma}{2}}\left( I^{\frac{\gamma}{2}}+I_*^{\frac{\gamma}{2}}\right)\left[-\frac{1}{4}\left(\frac{5}{4} k T+\frac{m}{8} y ^2+I \right)(I-I_*)y ^2 C_{\left(\frac{\gamma}2,0,\frac{\gamma}2\right)} \right. \right.\\
\left. \left.+\frac{m k T}{32}y^4 \left(\frac{5}{3}C_{\left(\frac{\gamma}2,1,\frac{\gamma}2\right)}-3C_{\left(\frac{\gamma}2,0,\frac{\gamma}2\right)} \right)-\left(\frac{1}{4} I+\frac{5}{16} k T\right) k T y^2 C_{\left(\frac{\gamma}2+1,0,\frac{\gamma}2\right)}\right. \right.\\
\left. \left.+\frac{4}{m}(I+I_*)C_{\left(\frac{\gamma}2,1,\frac{\gamma}2\right)} \left(\frac{5}{96} m k T y^2+\left(\frac{1}{4} I+\frac{5}{16} k T\right)k T\right)\right]\right\}
 I^{\alpha}  I_*^{\alpha} \, \mathrm{d} I_* \, \mathrm{d} I \, \mathrm{d}  y,
\end{multline*}
where the constants are defined in \eqref{const r R}. 
Finally, performing integration with respect to $I$, $I_*$, $y$ yields
\begin{multline*}
 \sum_{r=1}^3 \mathcal{Q}_{rr}=-K\rho^2  \left(\frac{ p}{\rho}\right)^{\frac{\gamma}{2}+3} \frac{ \sqrt{\pi}}{24 \Gamma\!  (\frac{4\alpha+\gamma+9}{2})}\\
\times \left\{9((4\alpha+\gamma)(2(4\alpha+\gamma)+\gamma^2+38)+7\gamma^2+160)\pi \Gamma\!\left(\alpha+\frac{\gamma}{2} + 1 \right)^2 \right. \\
\left.+2^{\gamma+5}((4\alpha+\gamma)(3\alpha+\gamma)+57\alpha+15\gamma+60) \Gamma \left(\alpha+1\right)^2 \Gamma \!\left(\frac{\gamma+3}{2}\right)\Gamma \!\left(\frac{\gamma+5}{2}\right) \right\}.
\end{multline*}

\medskip
\medskip

\end{document}